\newcommand{\vect}[1]{\boldsymbol{#1}}
\newcommand{\mat}[1]{\boldsymbol{#1}}
\newcommand{\diffs}[3]{\frac{\partial^2 #1}{
\ifx#2#3 
\partial #2^2
\else
\partial #2 \partial #3
\fi
}}
\newcommand{\av}{\vect{a}}
\newcommand{\pv}{\vect{p}}
\newcommand{\dpv}{\dot{\vect{p}}}
\newcommand{\qv}{{\vect{q}}}
\newcommand{\dqv}{\dot{\vect{q}}}
\newcommand{\uv}{\vect{u}}
\newcommand{\vv}{\vect{v}}
\newcommand{\omegav}{\vect{\omega}}
\newcommand{\Cm}{\mat{C}}
\newcommand{\Dm}{\mat{D}}
\newcommand{\Jm}{\mat{J}}
\newcommand{\Rm}{\mat{R}}
    \newcommand{\argmin}{\operatornamewithlimits{arg\, min}}
\def\figPolarThreeD{\centering\includegraphics[width=0.90\textwidth]{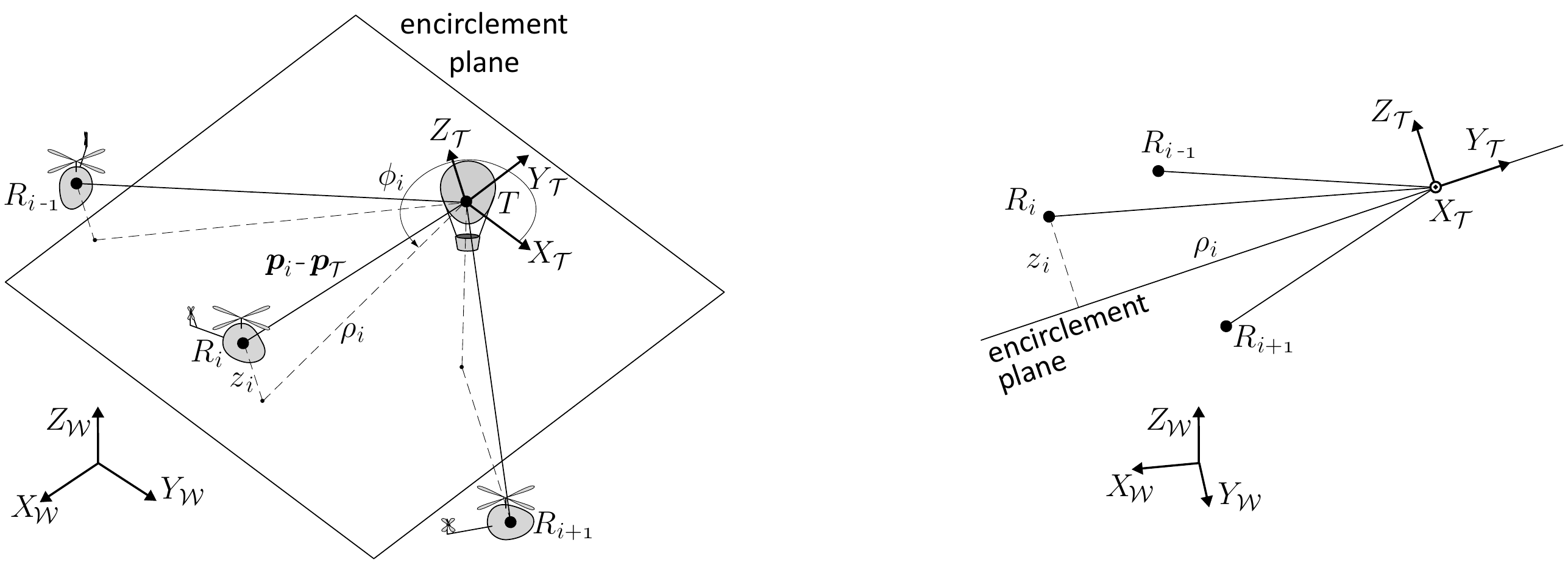}}
\def\figSafeDistances{\centering\includegraphics[width=0.90\textwidth]{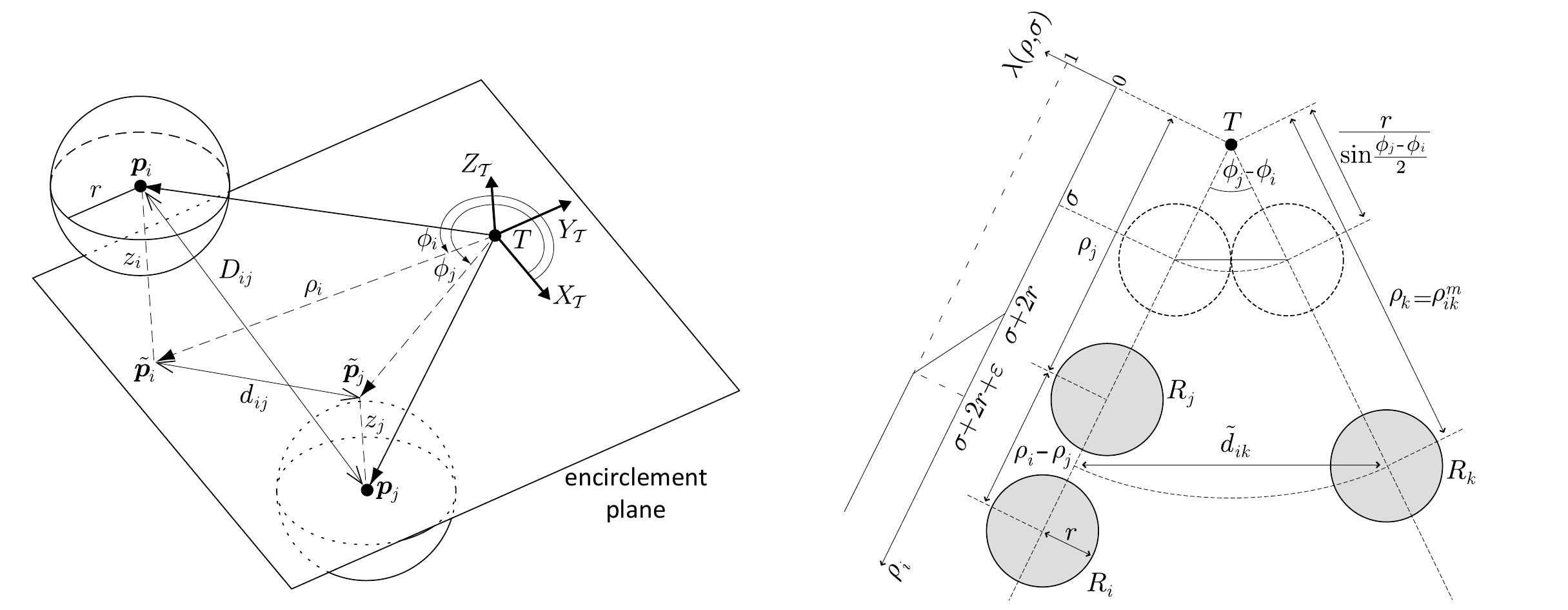}}
\def\figExpStatTarget{\centering\includegraphics[width=0.97\columnwidth]{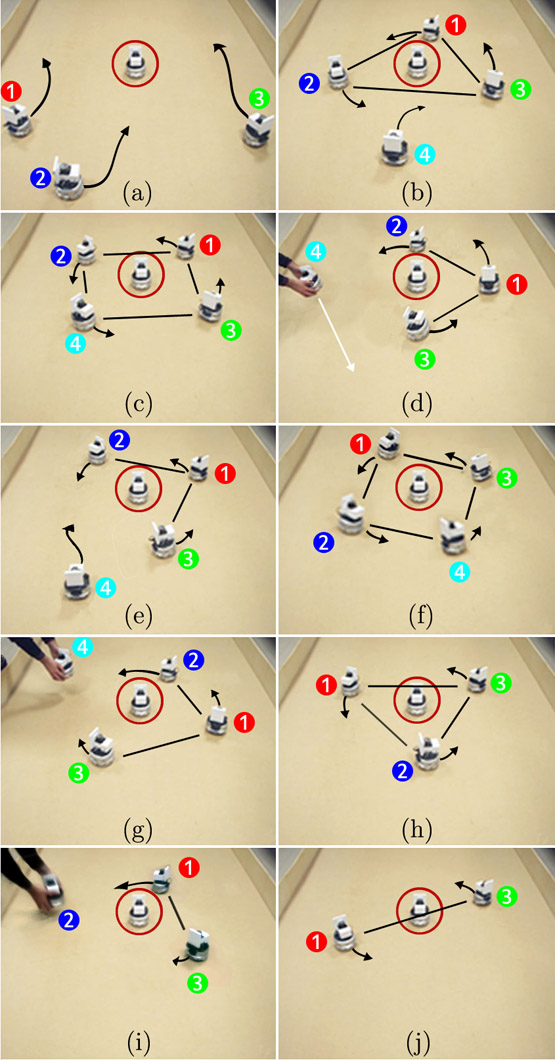}}
\def\figKhIIIExpPlots{\centering\includegraphics[width=0.99\columnwidth]{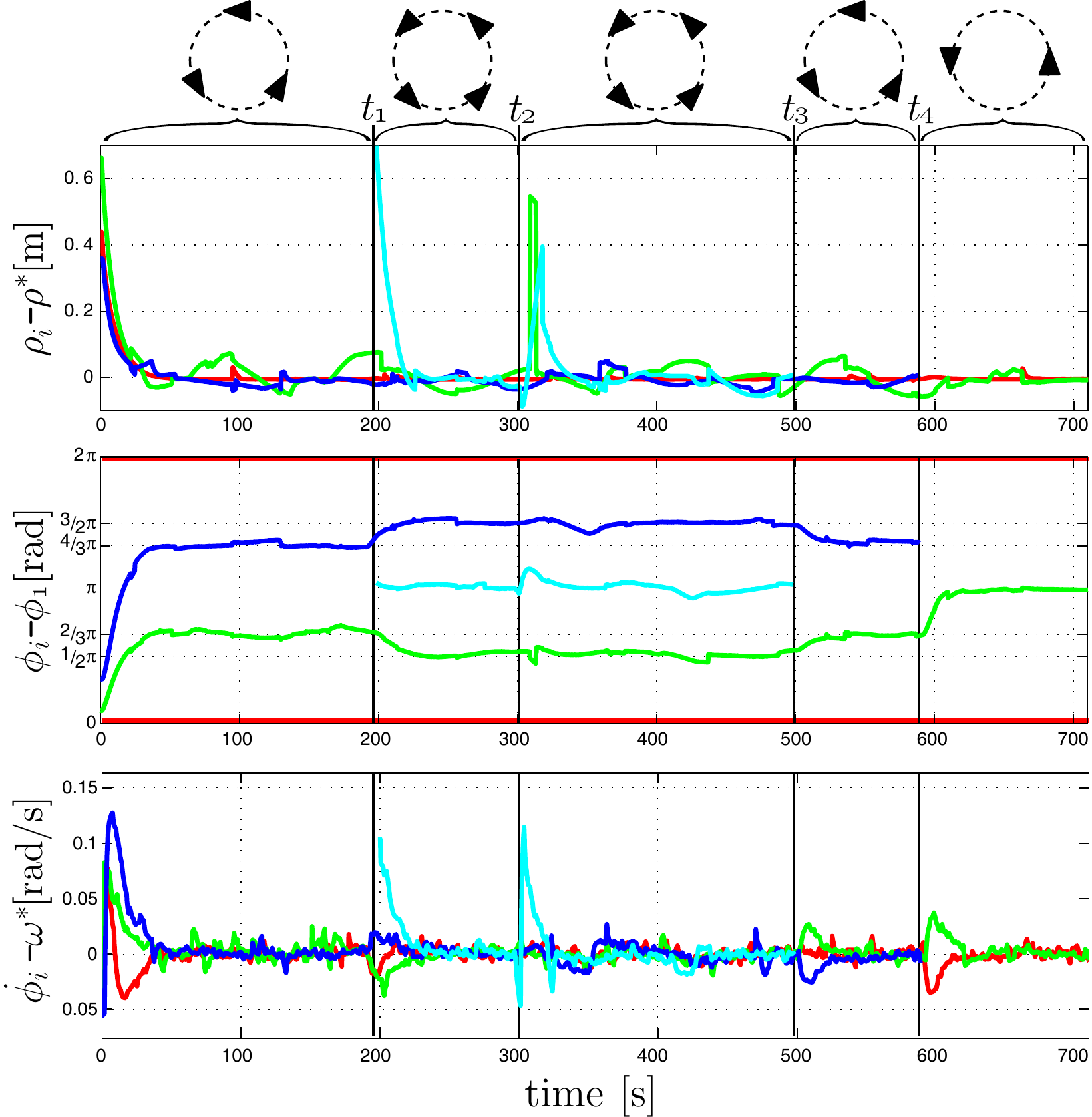}}
\def\figExpMovingTarget{\centering\includegraphics[width=0.99\columnwidth]{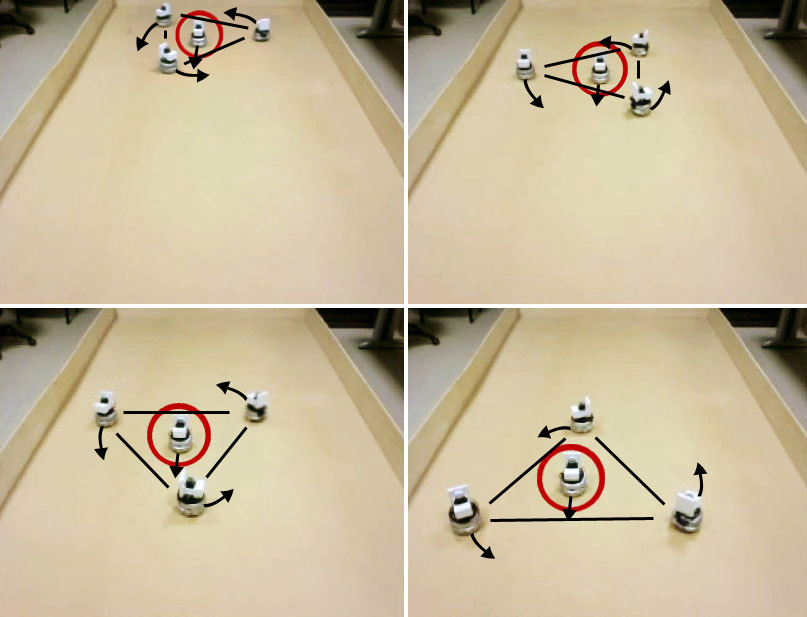}}
\newtheorem{prop}{Proposition}
\newcommand{\todo}[1]{}
\newcommand{\afmargin}[1]{}
\newcommand{\gomargin}[1]{}
\def\FigSimAErrRho{{\includegraphics[width=0.49\columnwidth]{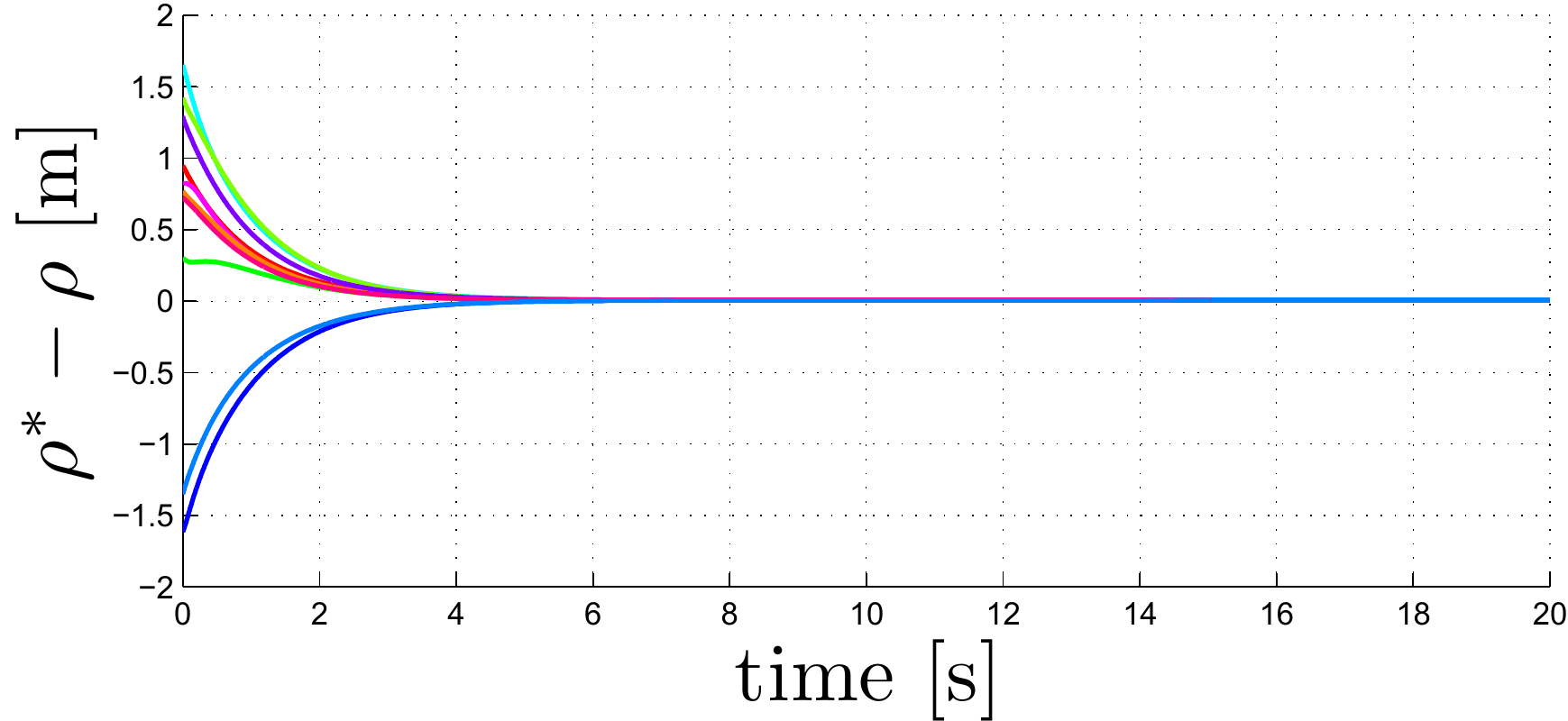}}}
\def\FigSimAErrPhi{{\includegraphics[width=0.49\columnwidth]{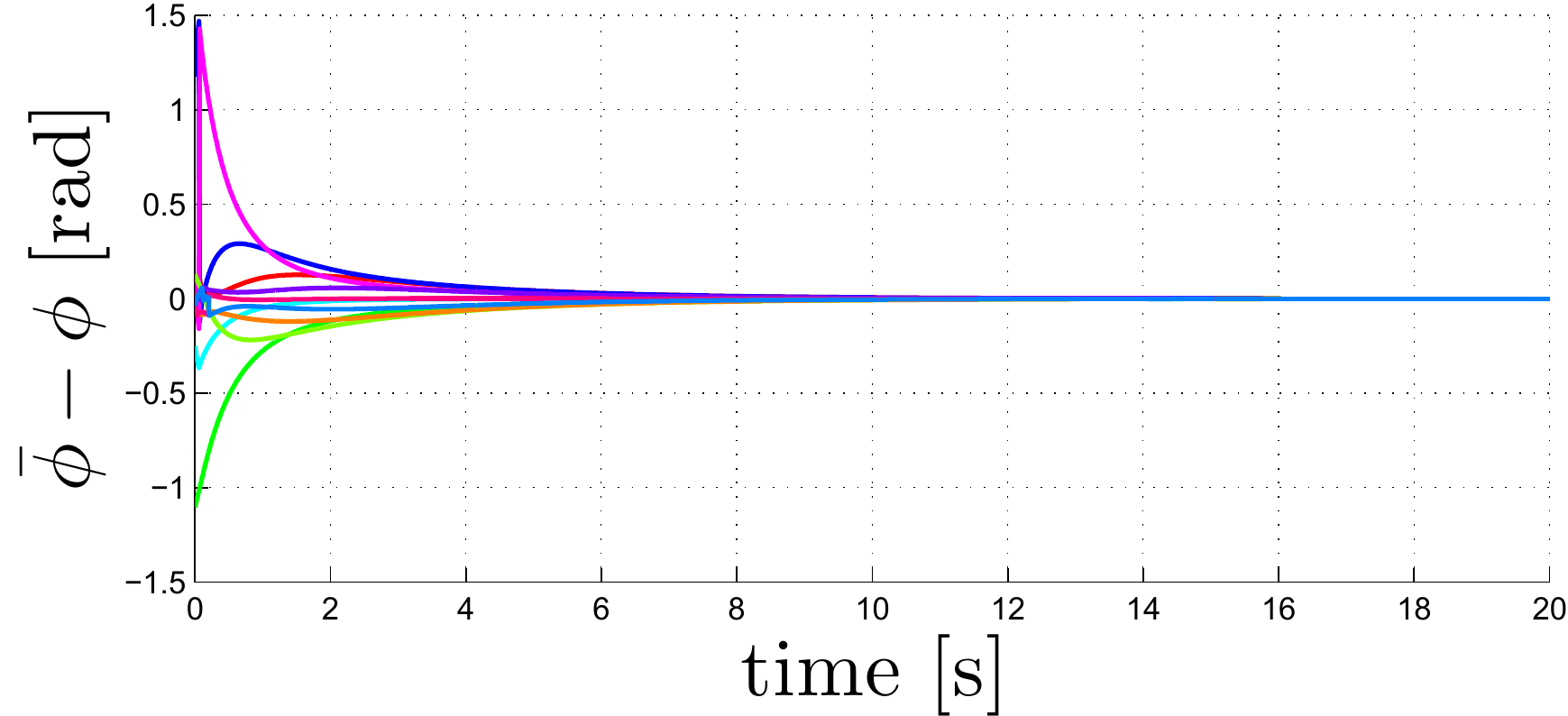}}}
\def\FigSimAErrDotPhi{{\includegraphics[width=0.49\columnwidth]{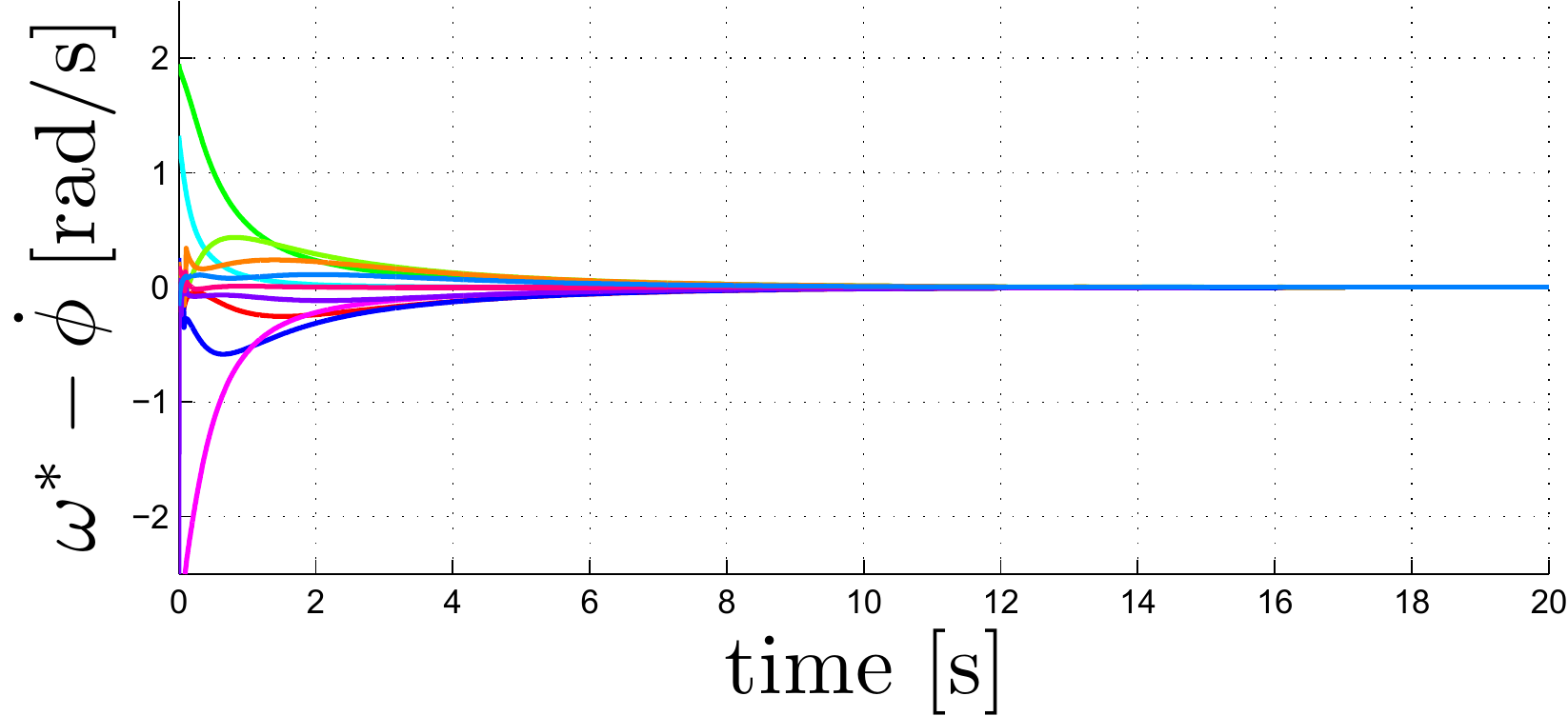}}}
\def\FigSimAErrZ{{\includegraphics[width=0.49\columnwidth]{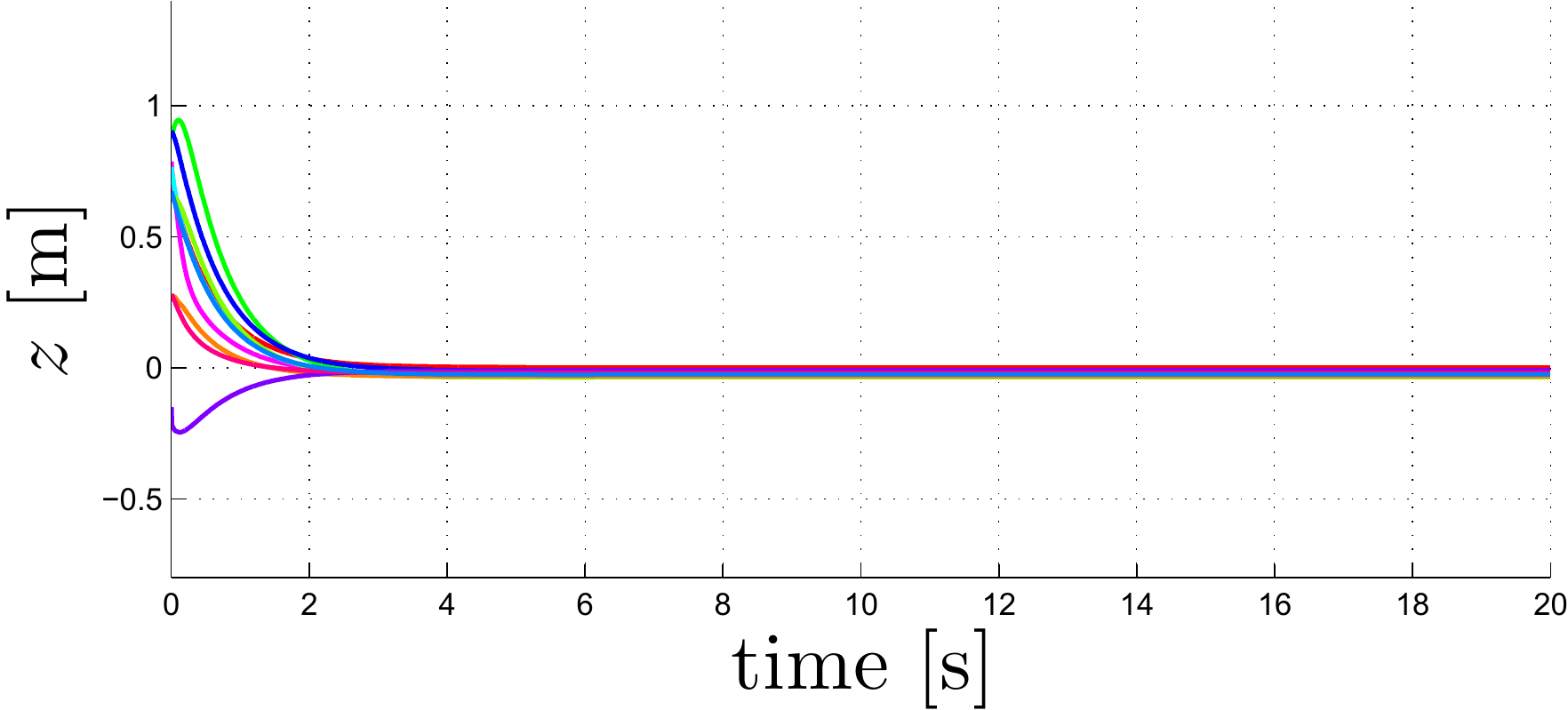}}}
\def\FigSimATrajXY{{\includegraphics[width=0.66\columnwidth]{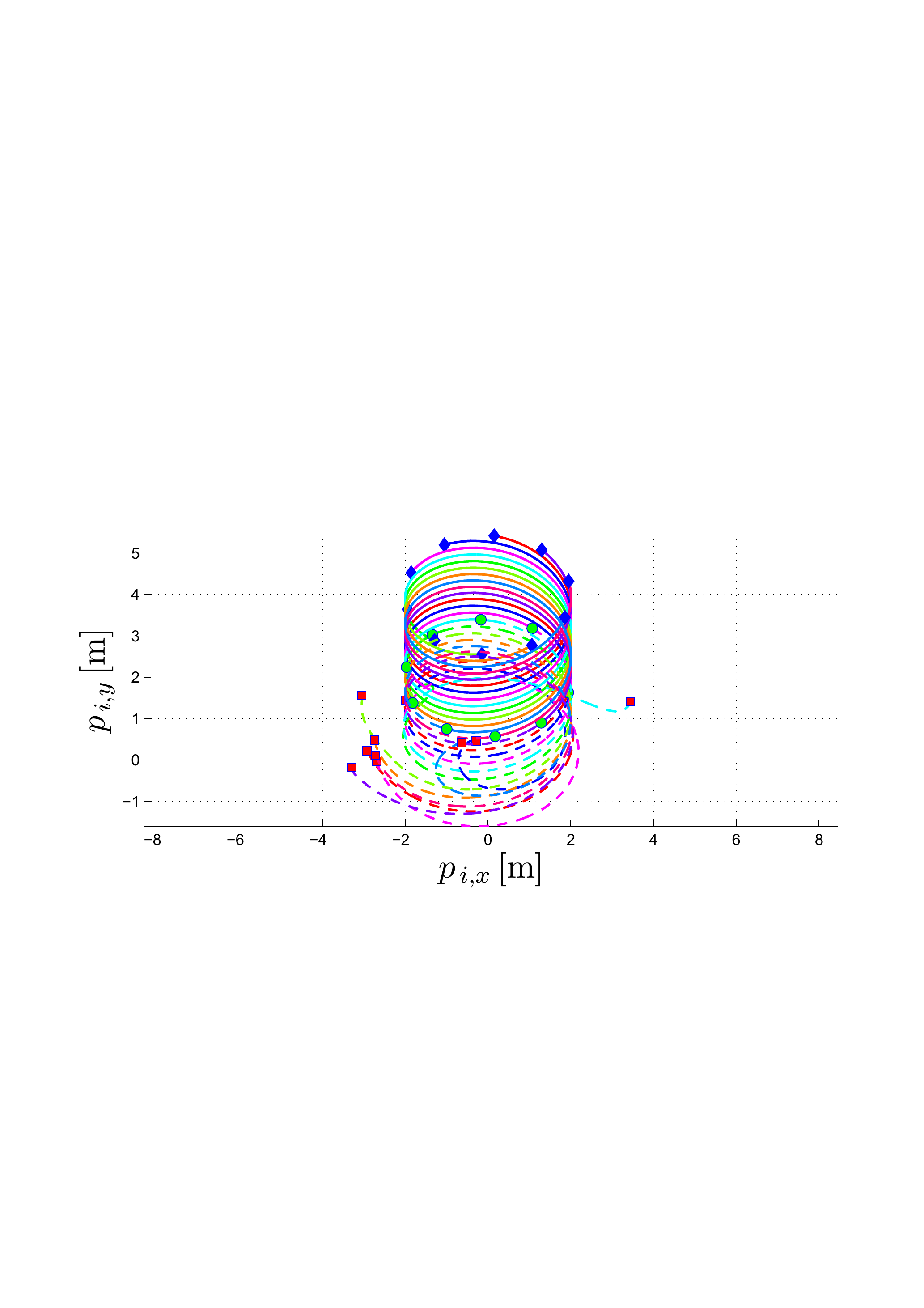}}}
\def\FigSimATrajXZ{{\includegraphics[width=0.66\columnwidth]{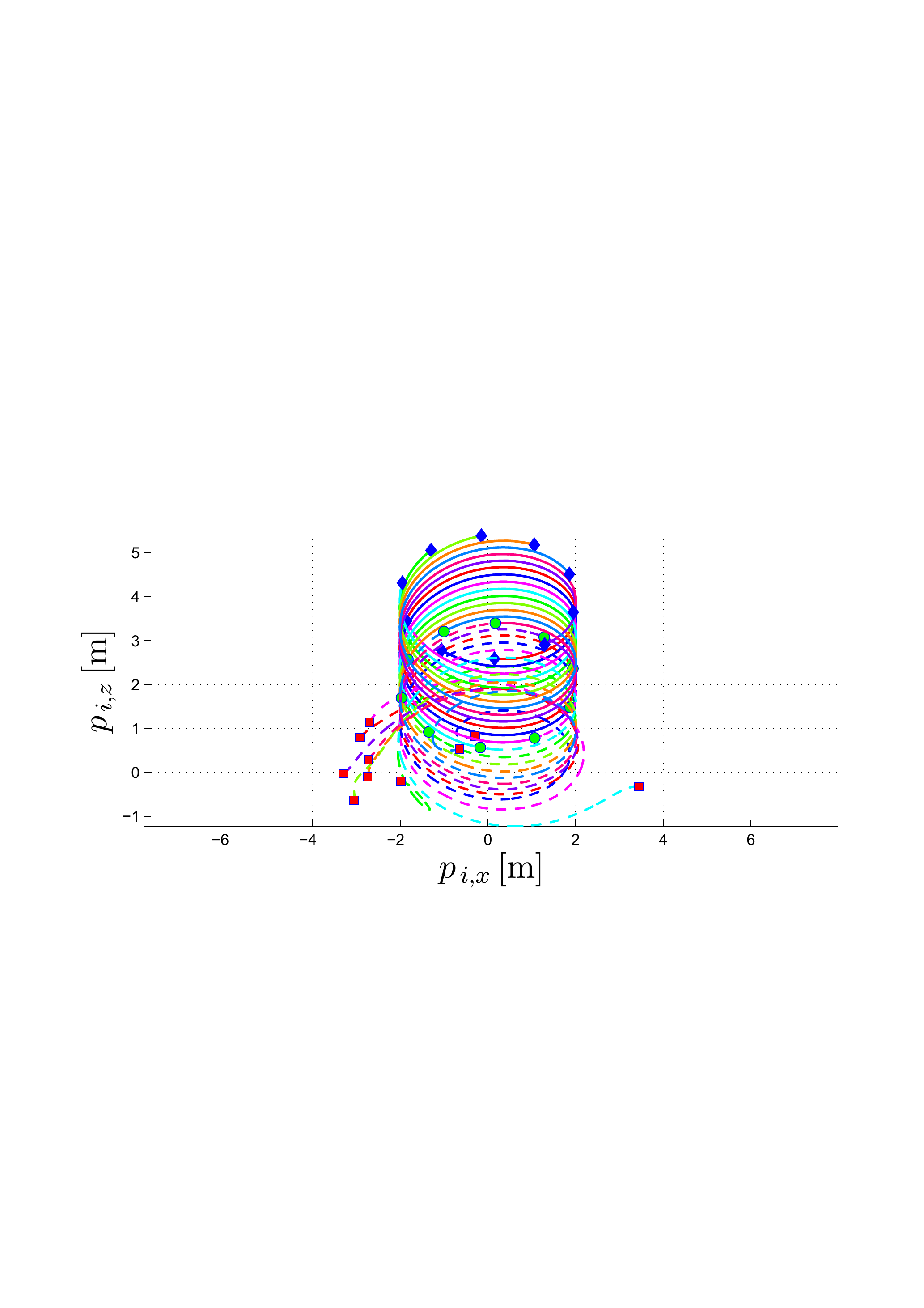}}}
\def\FigSimATrajYZ{{\includegraphics[width=0.66\columnwidth]{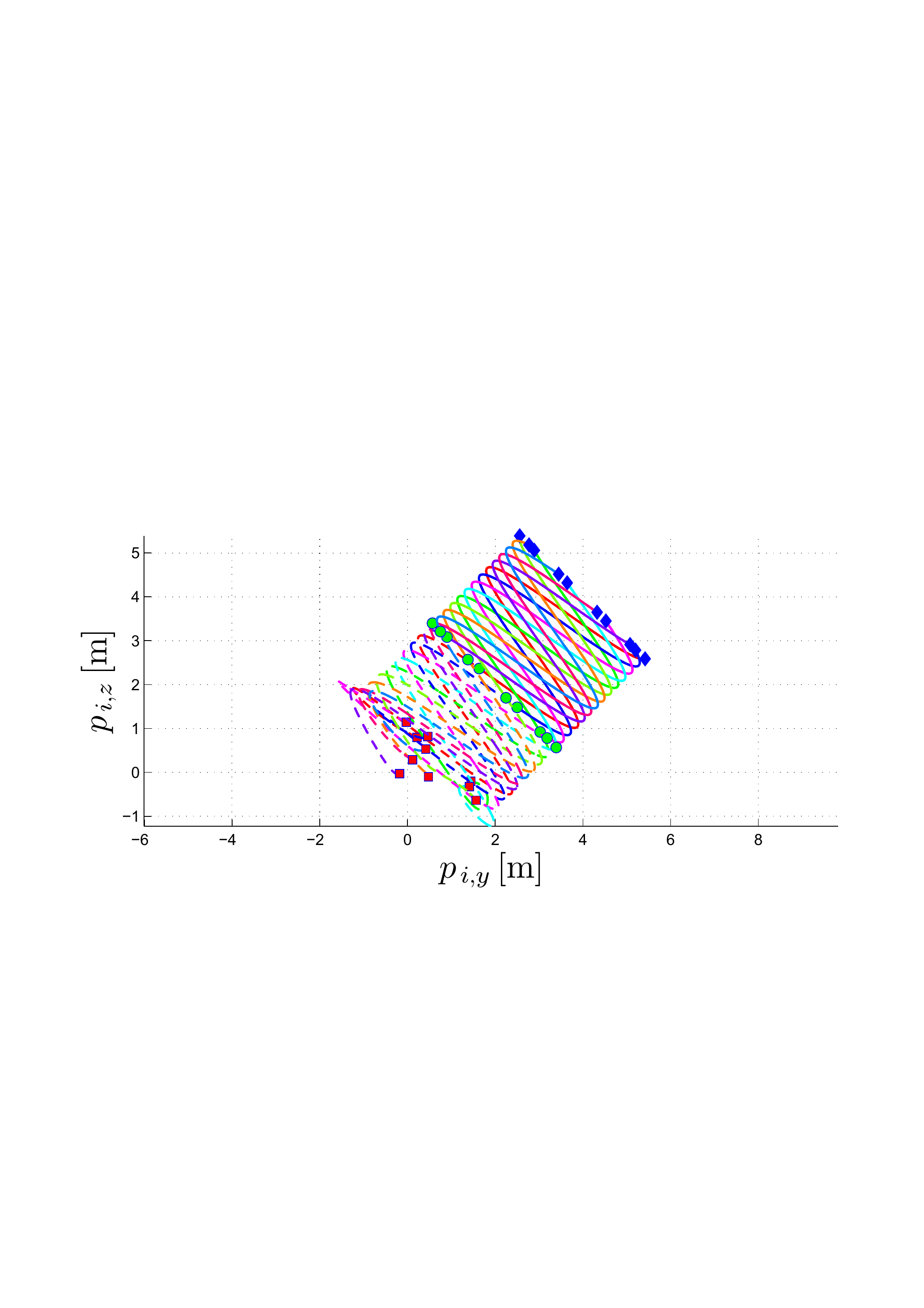}}}
\def\FigSimBErrRho{{\includegraphics[width=0.49\columnwidth]{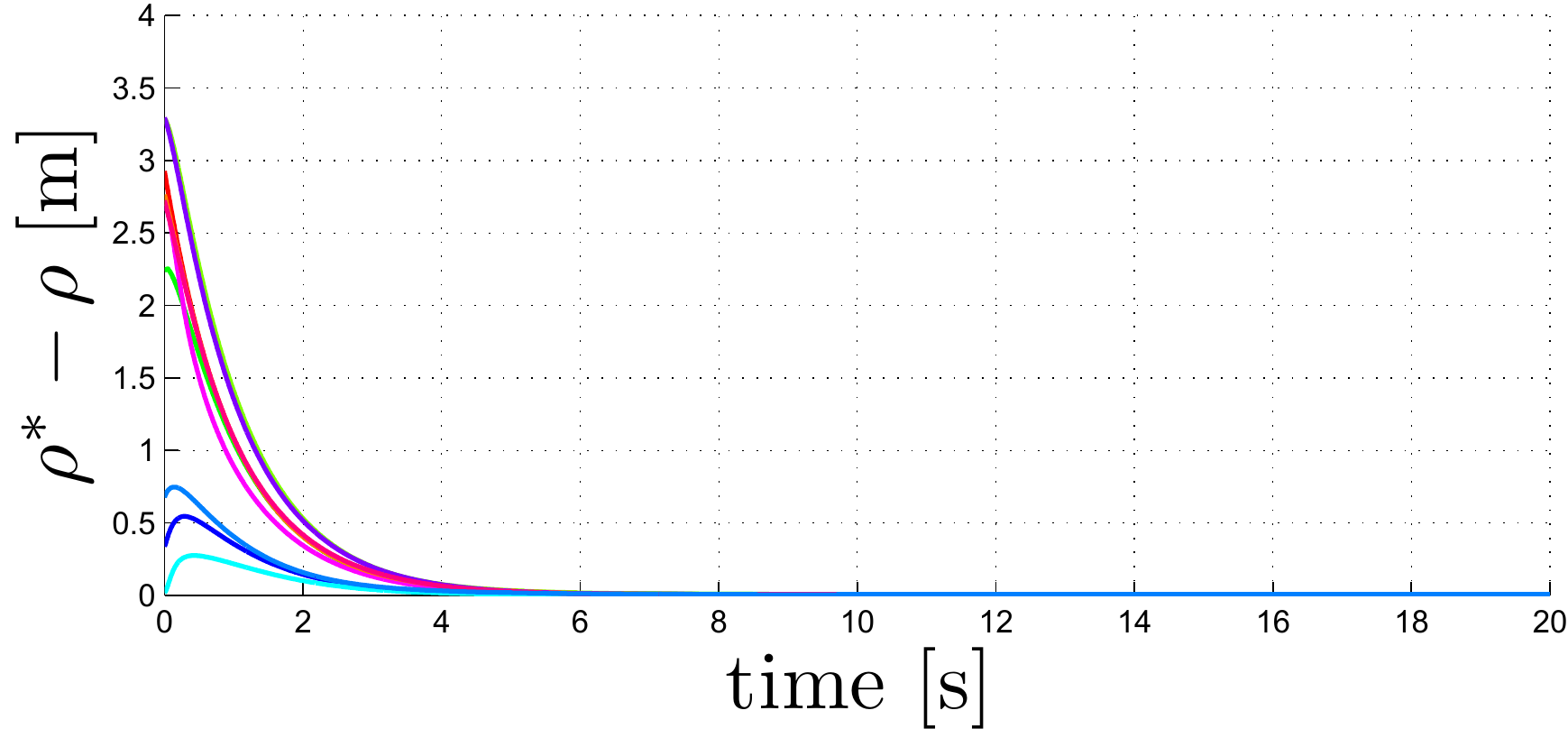}}}
\def\FigSimBErrPhi{{\includegraphics[width=0.49\columnwidth]{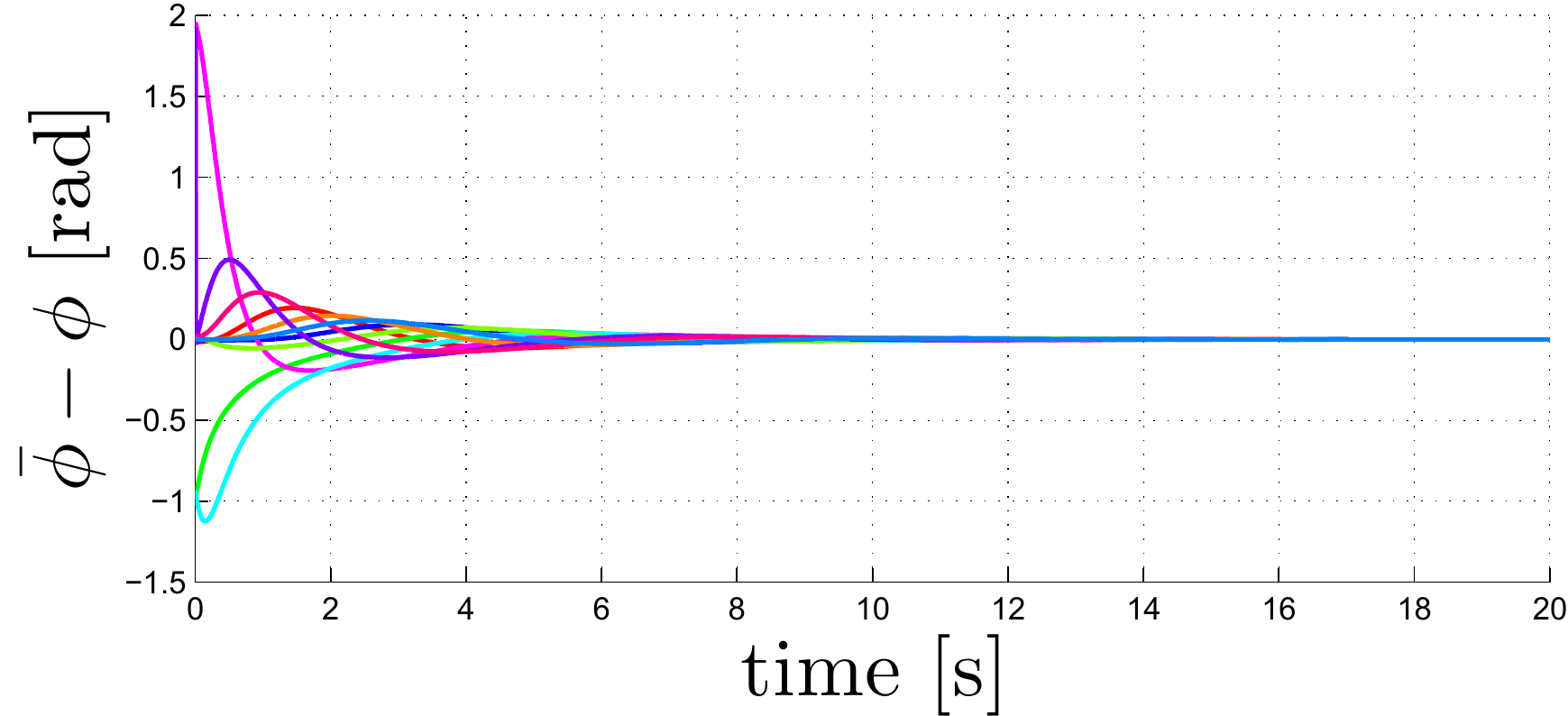}}}
\def\FigSimBErrDotPhi{{\includegraphics[width=0.49\columnwidth]{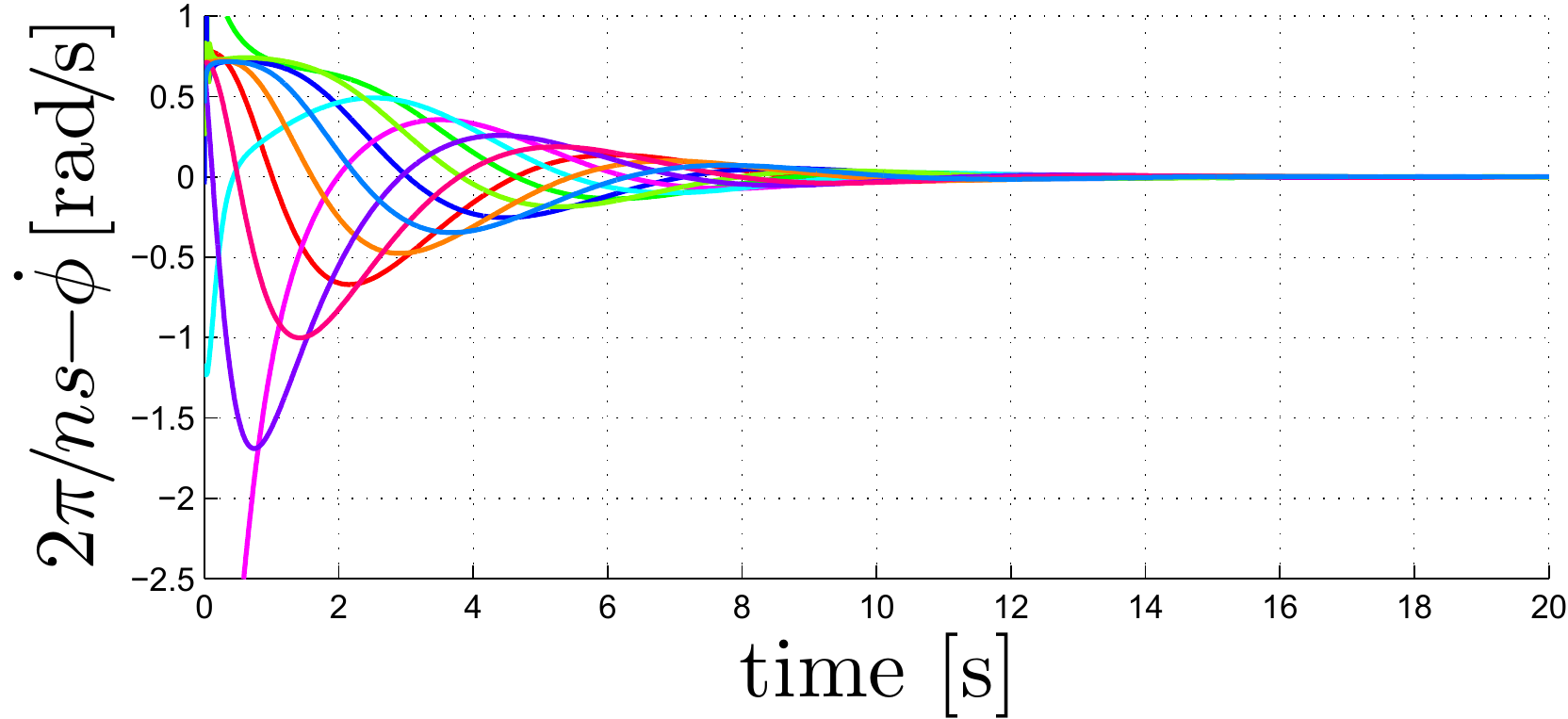}}}
\def\FigSimBErrZ{{\includegraphics[width=0.49\columnwidth]{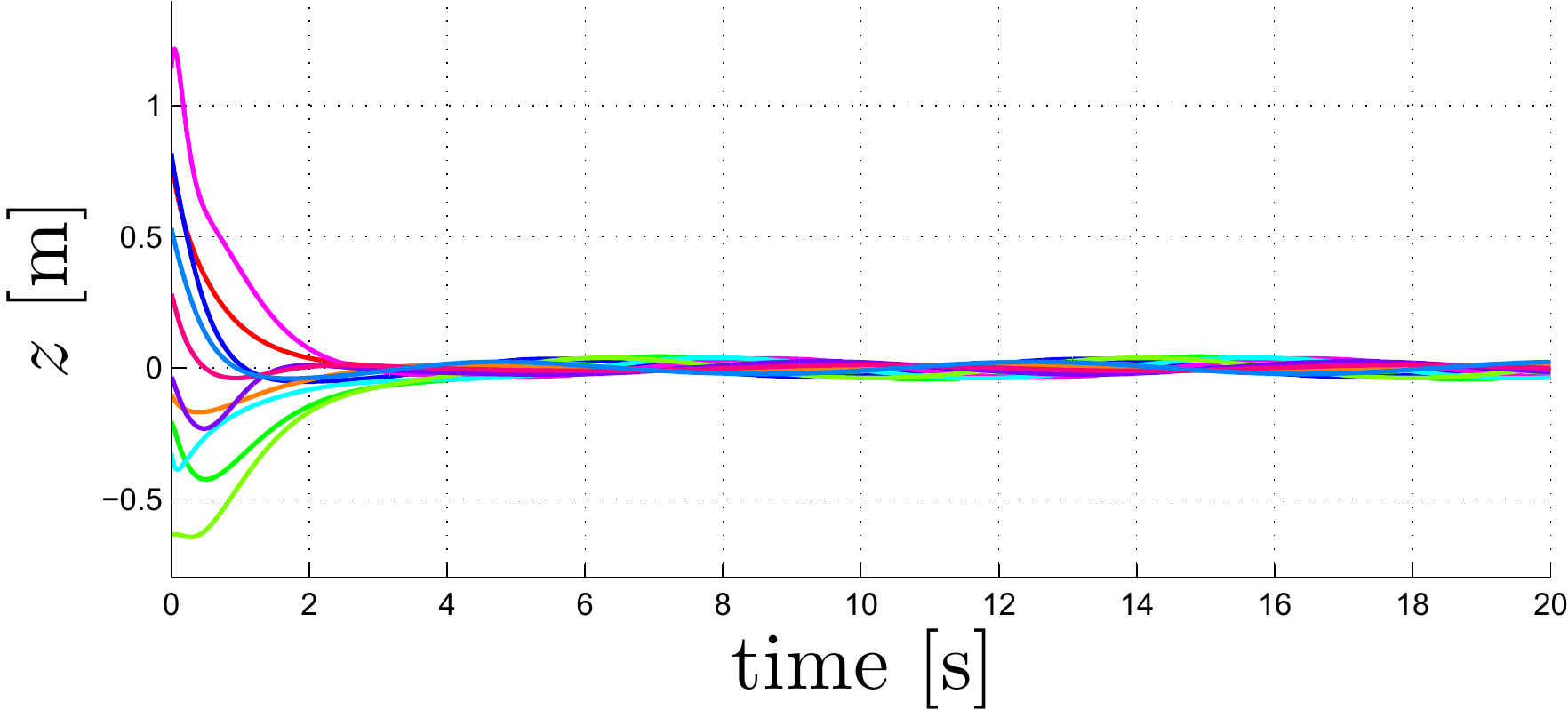}}}
\def\FigSimBTrajXY{{\includegraphics[width=0.66\columnwidth]{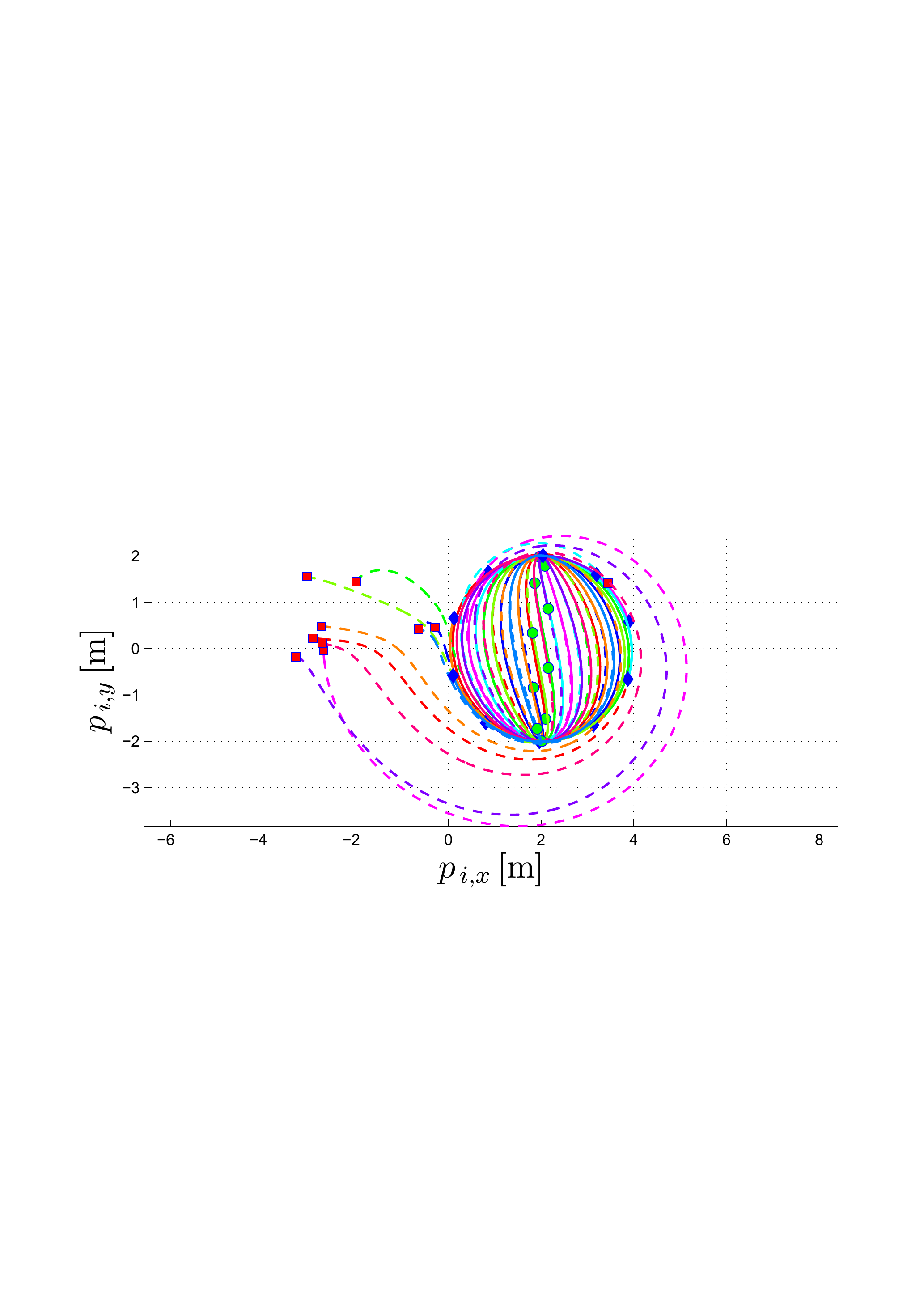}}}
\def\FigSimBTrajXZ{{\includegraphics[width=0.66\columnwidth]{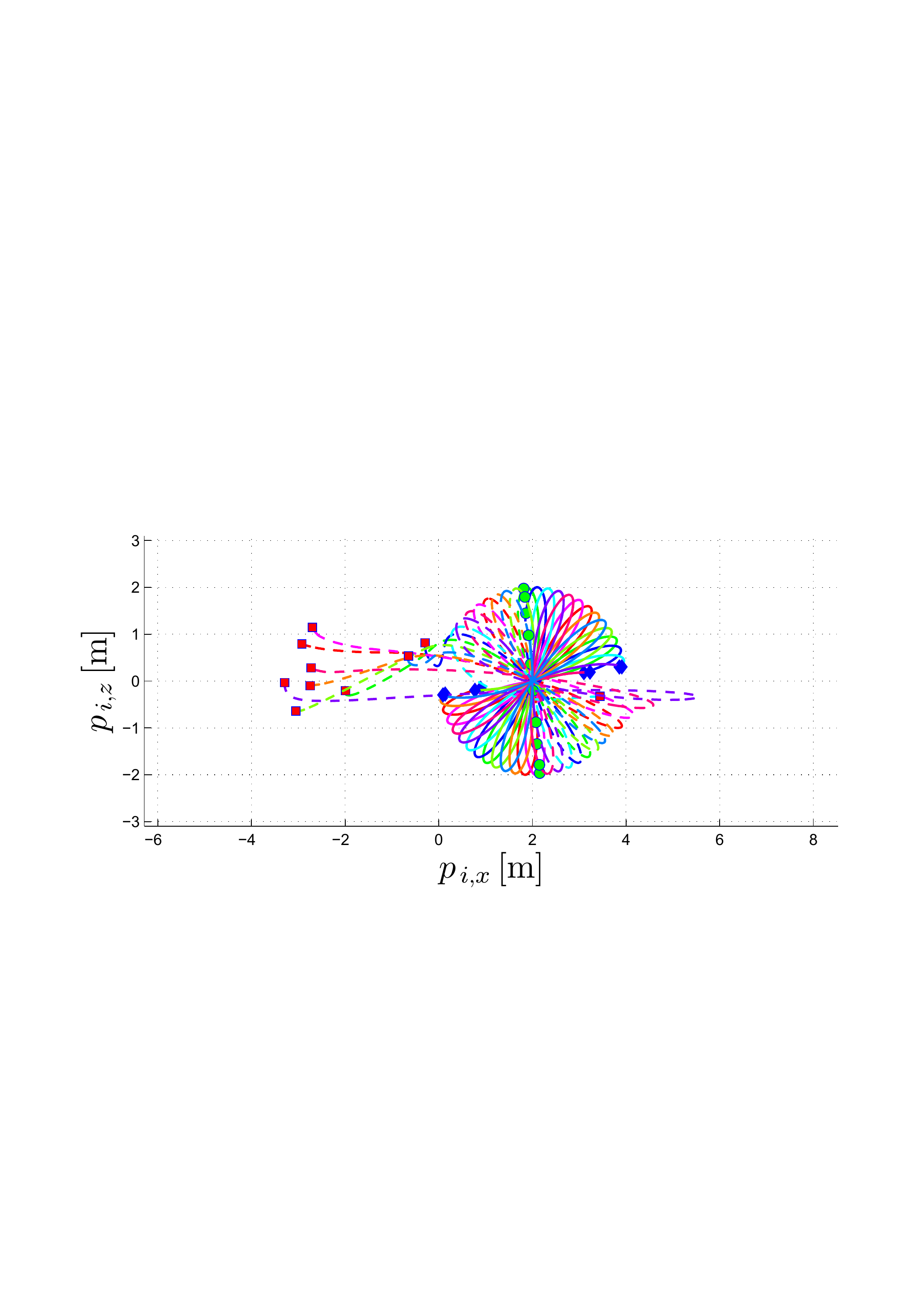}}}
\def\FigSimBTrajYZ{{\includegraphics[width=0.66\columnwidth]{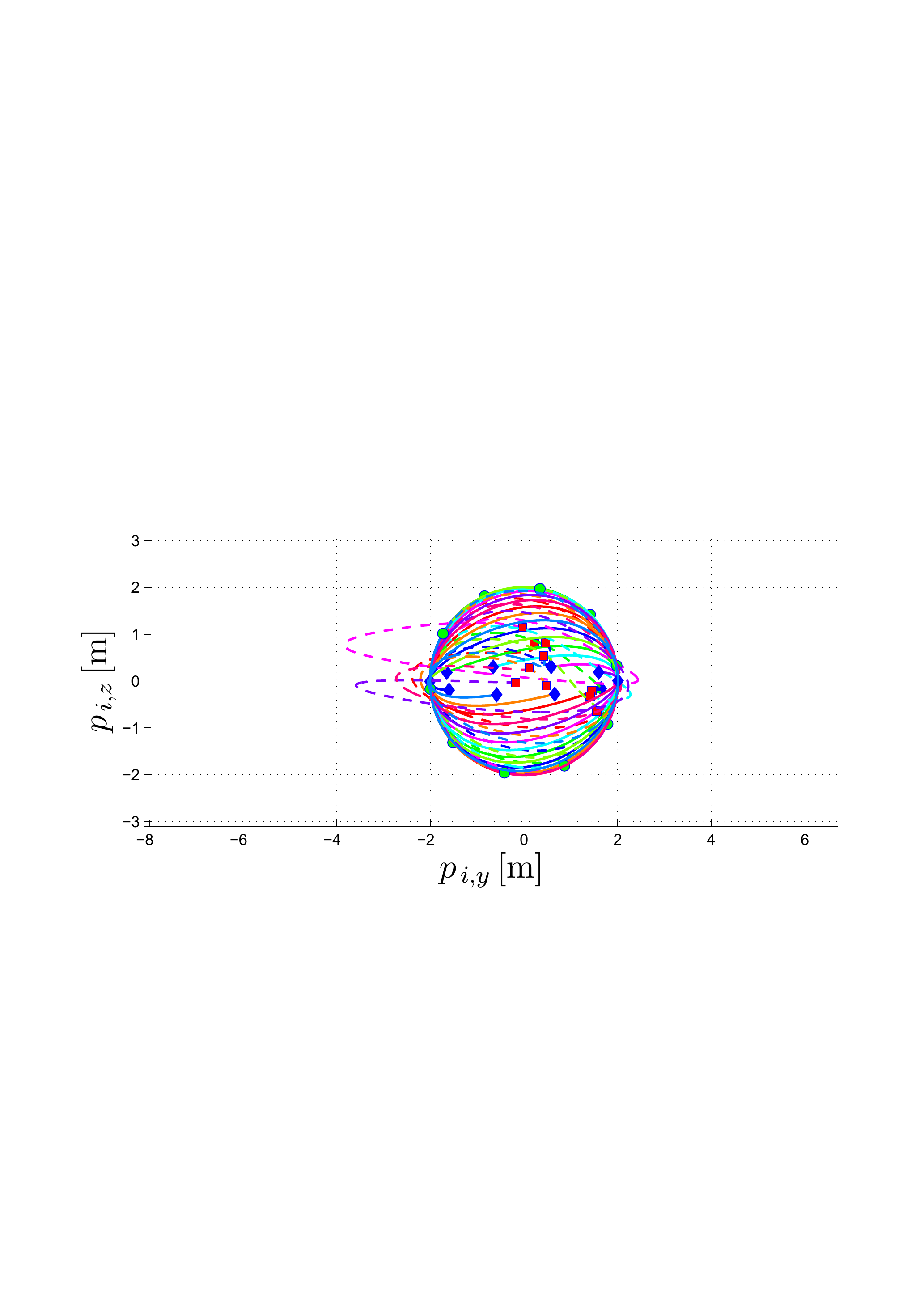}}}
\def\FigSimCErrRho{{\includegraphics[width=0.49\columnwidth]{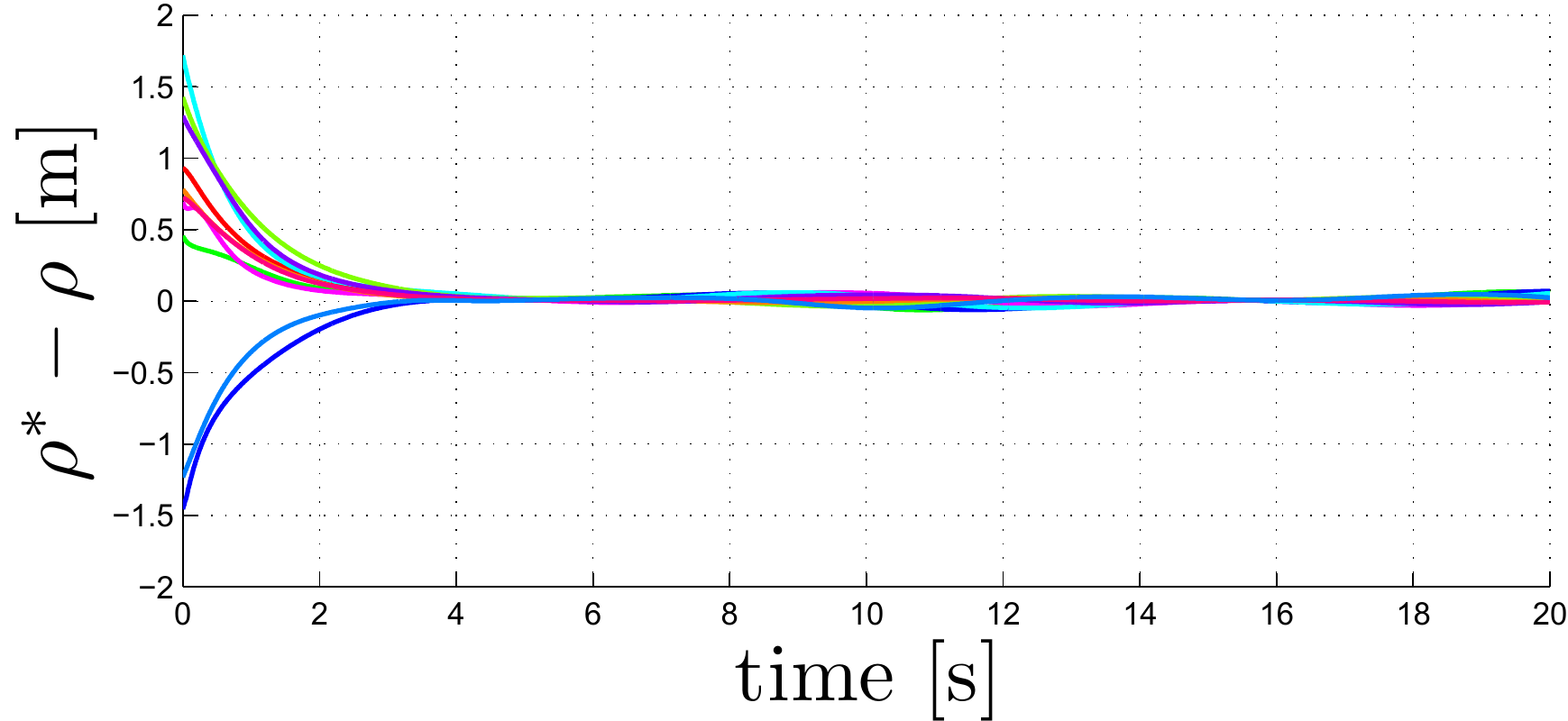}}}
\def\FigSimCErrPhi{{\includegraphics[width=0.49\columnwidth]{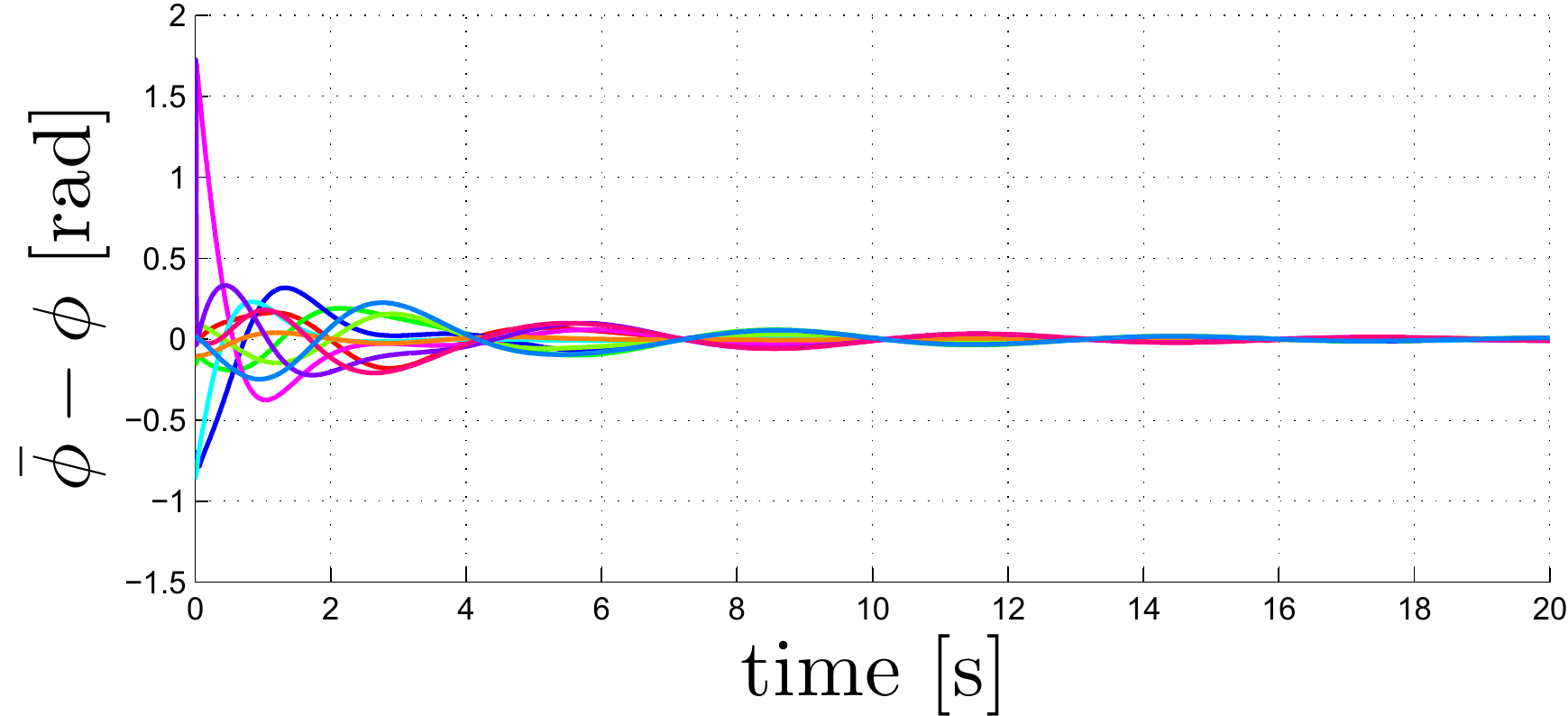}}}
\def\FigSimCErrDotPhi{{\includegraphics[width=0.49\columnwidth]{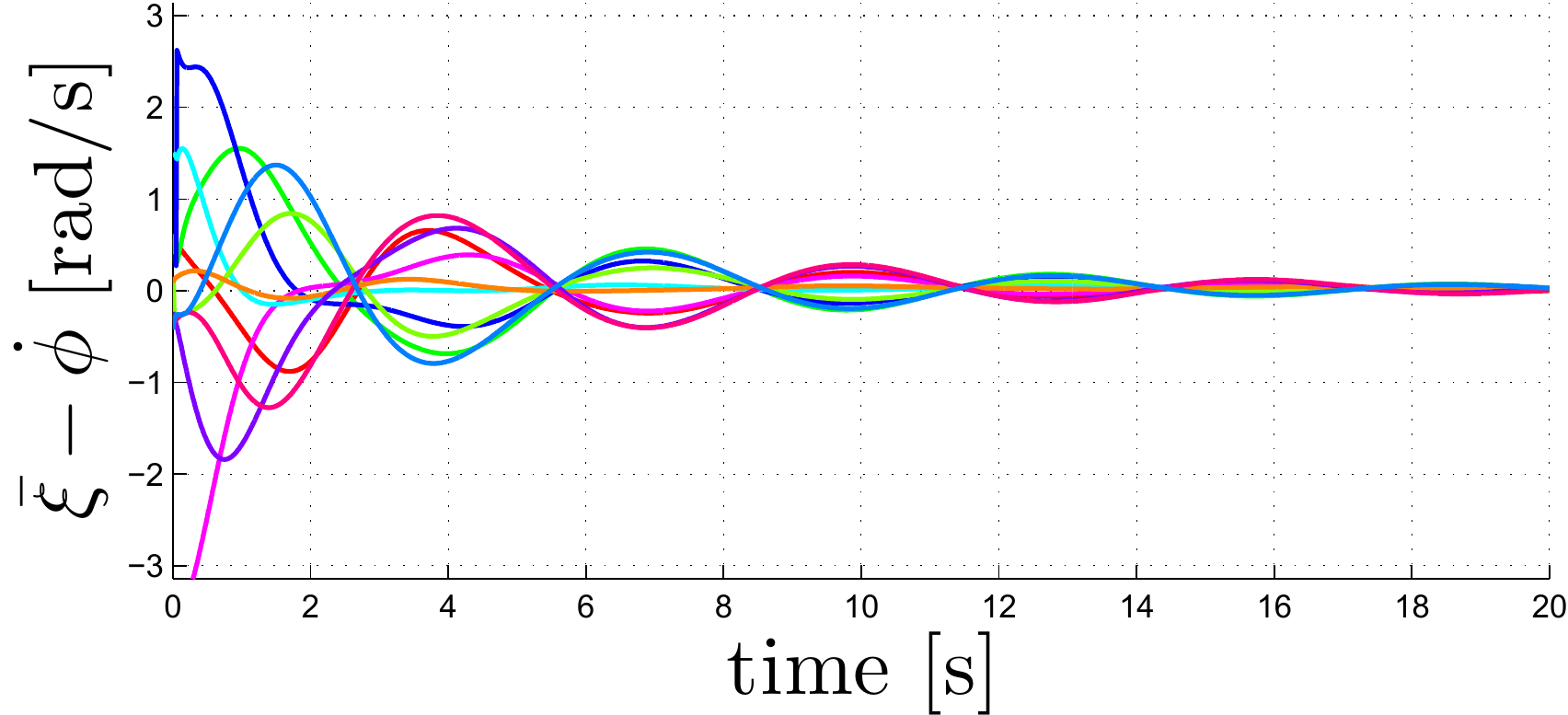}}}
\def\FigSimCErrZ{{\includegraphics[width=0.49\columnwidth]{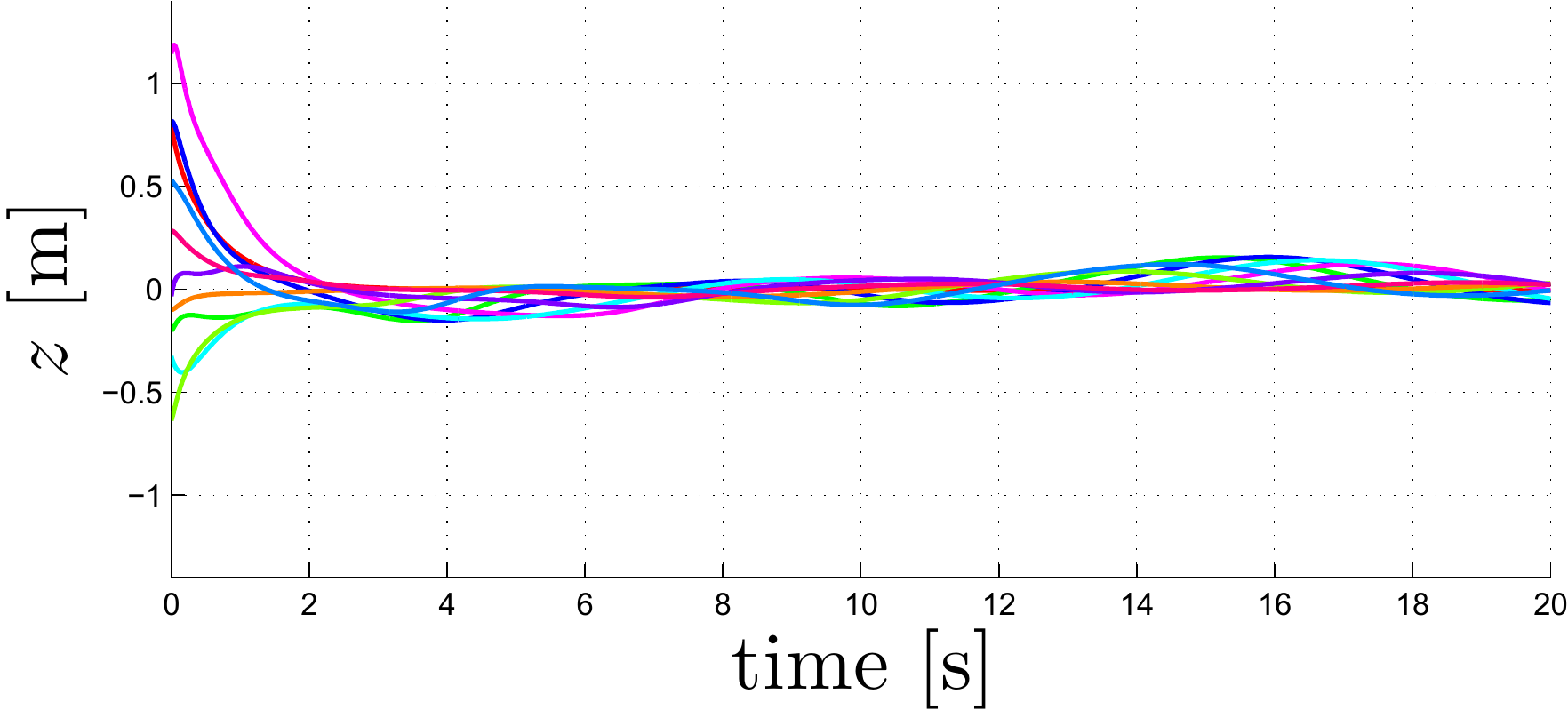}}}
\def\FigSimCTrajXY{{\includegraphics[width=0.66\columnwidth]{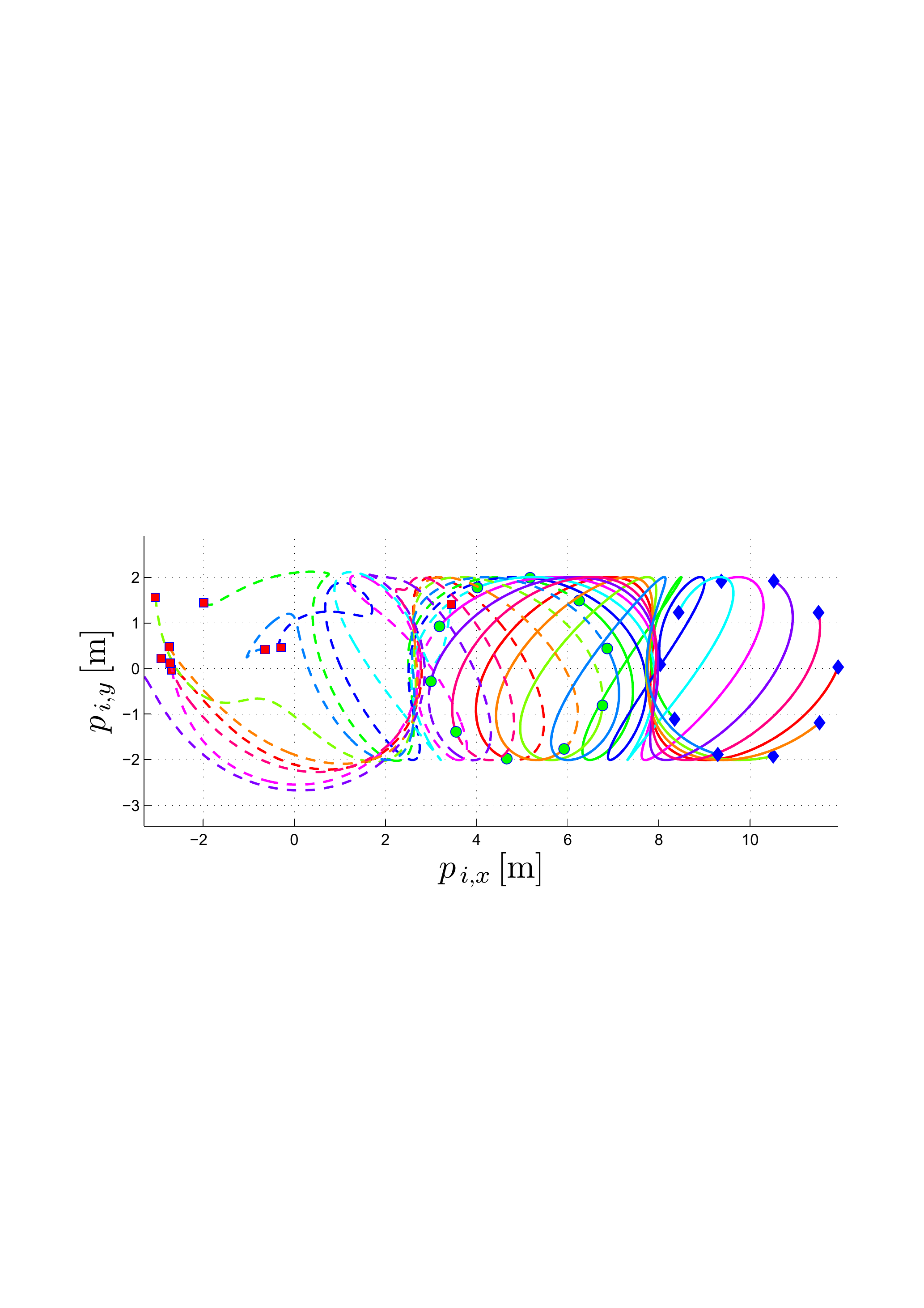}}}
\def\FigSimCTrajXZ{{\includegraphics[width=0.66\columnwidth]{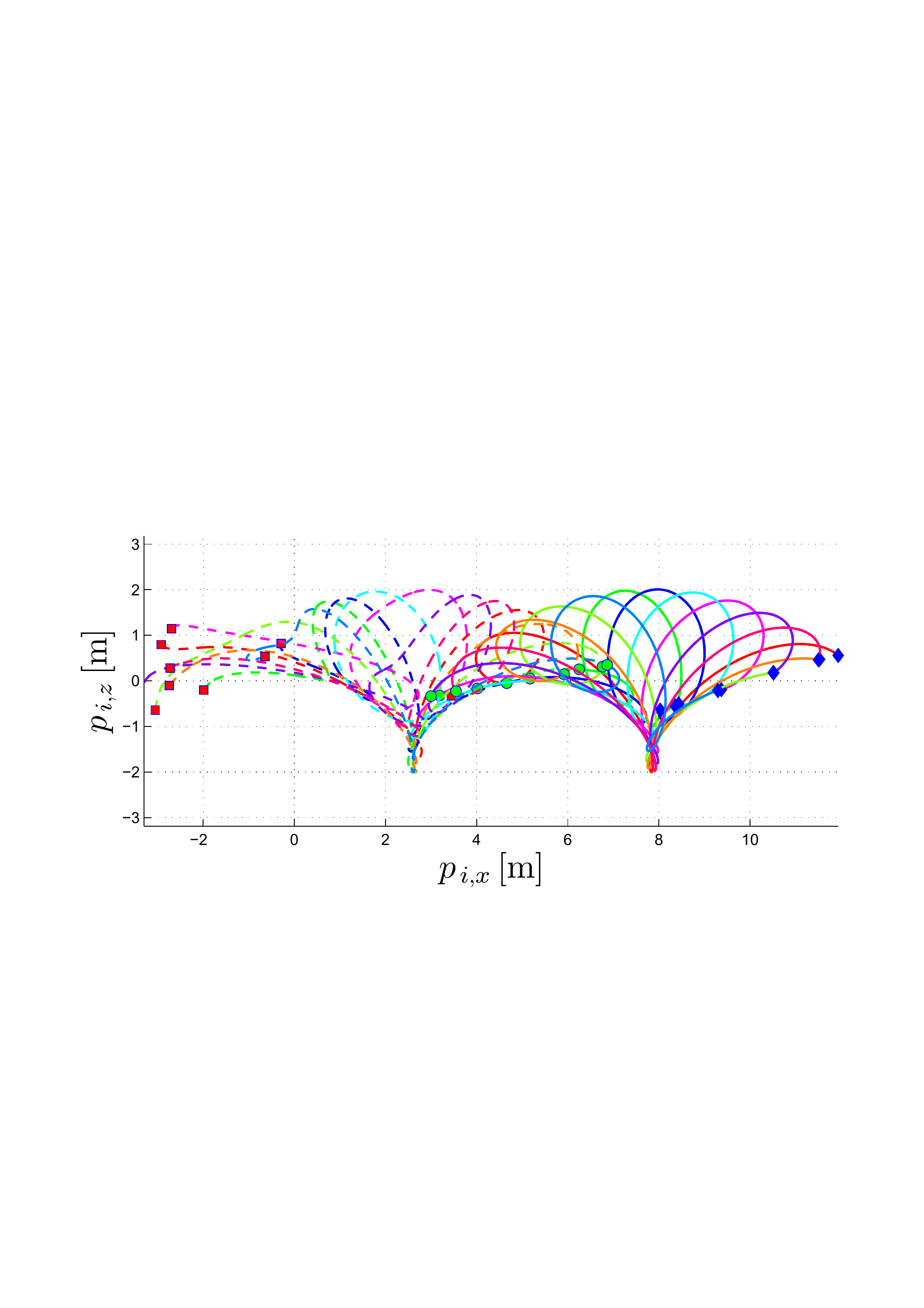}}}
\def\FigSimCTrajYZ{{\includegraphics[width=0.66\columnwidth]{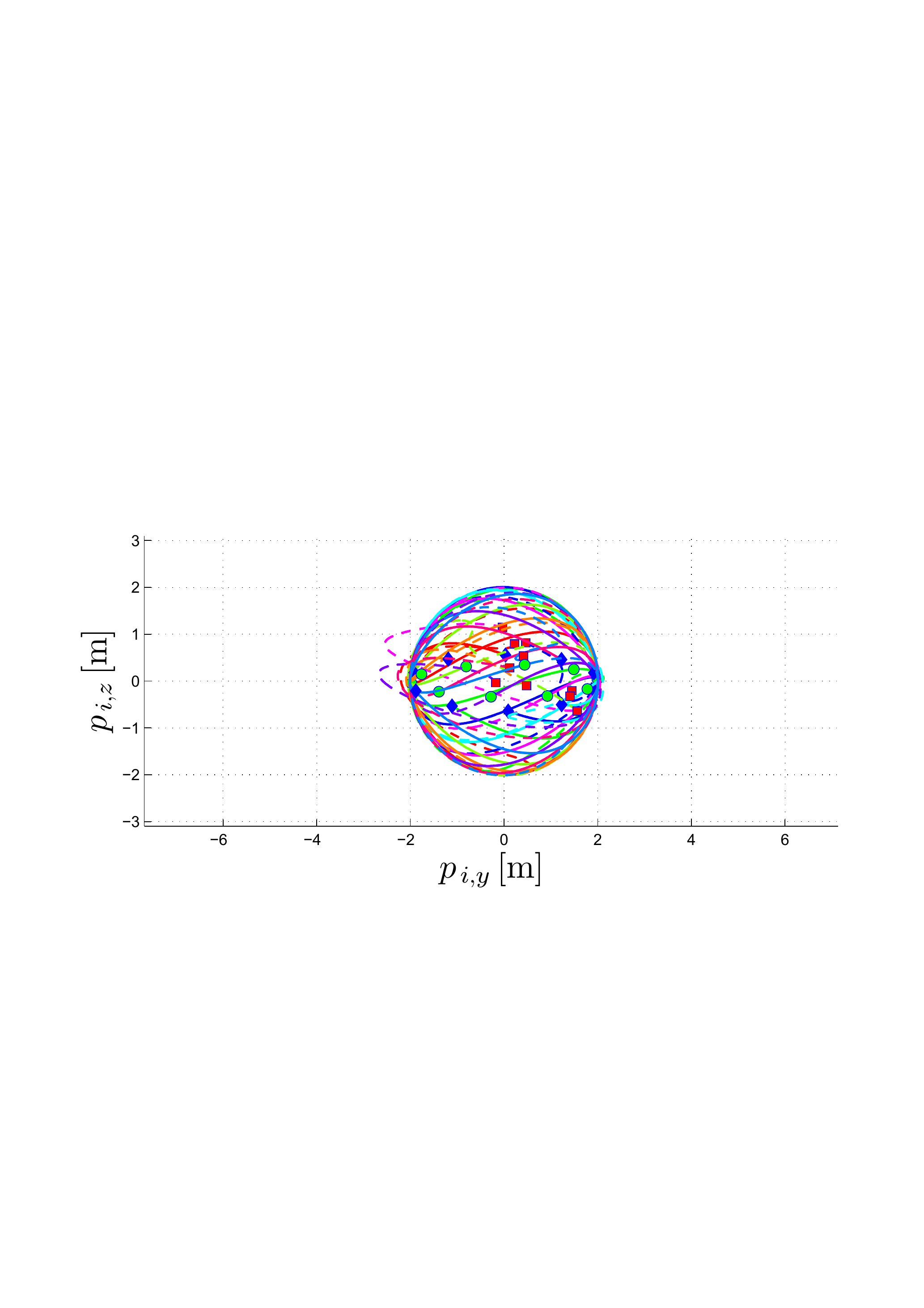}}}
\def\FigSimCOTrajXY{{\includegraphics[width=0.9\columnwidth]{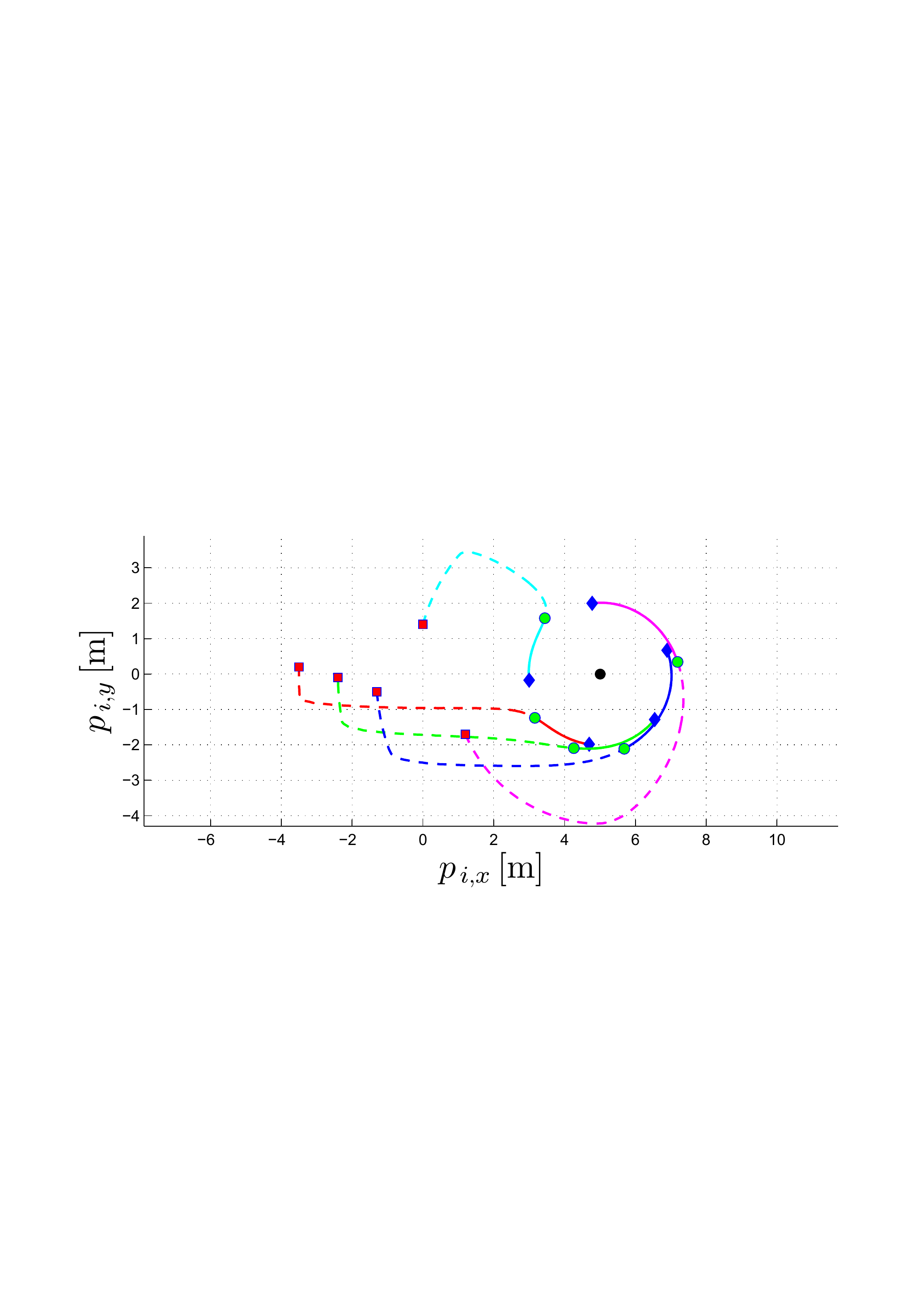}}}
\def\FigSimCODistances{{\includegraphics[width=0.9\columnwidth]{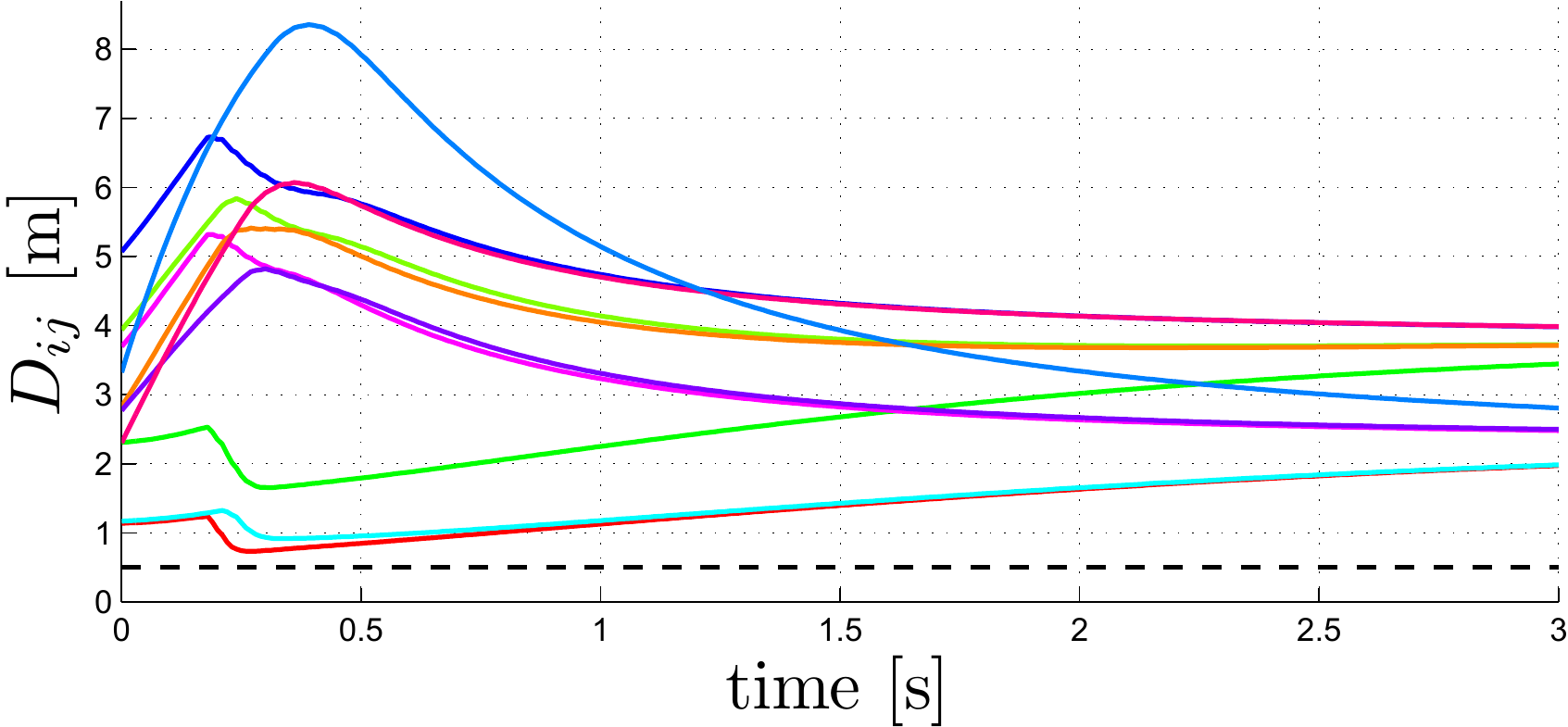}}}
\def\FigSimNOCOTrajXY{{\includegraphics[width=0.9\columnwidth]{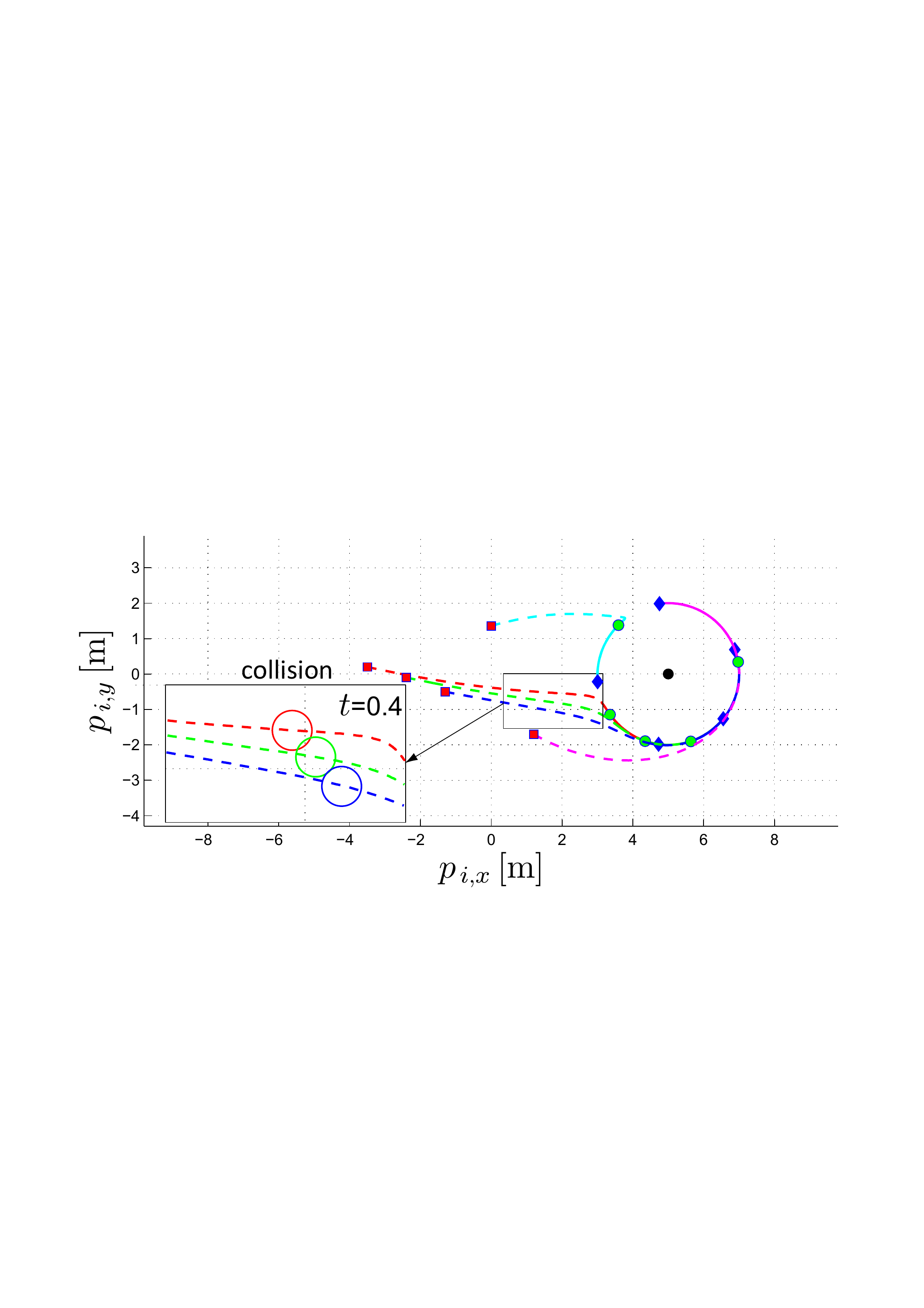}}}
\def\FigSimNOCODistances{{\includegraphics[width=0.9\columnwidth]{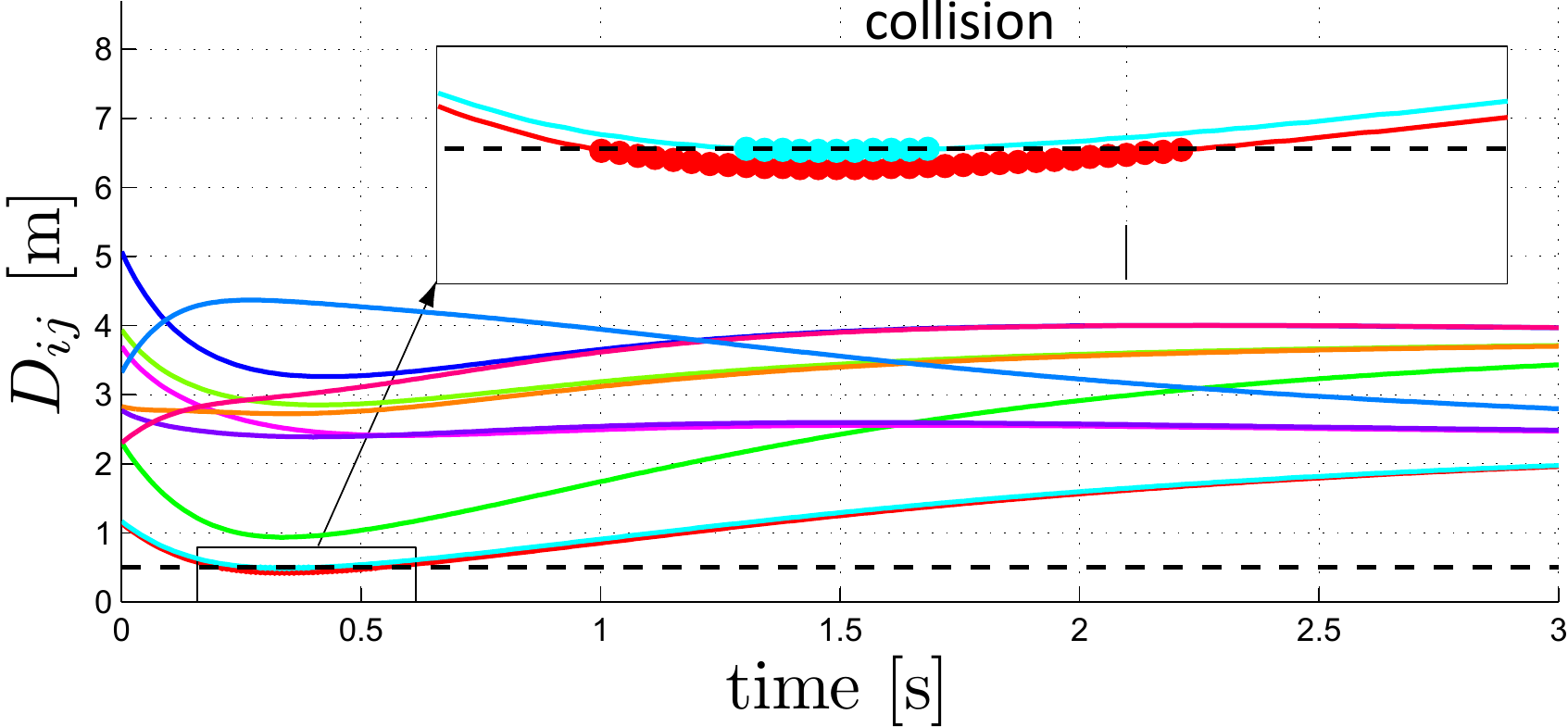}}}
\def\FigSimDAErrRho{{\includegraphics[width=0.59\columnwidth]{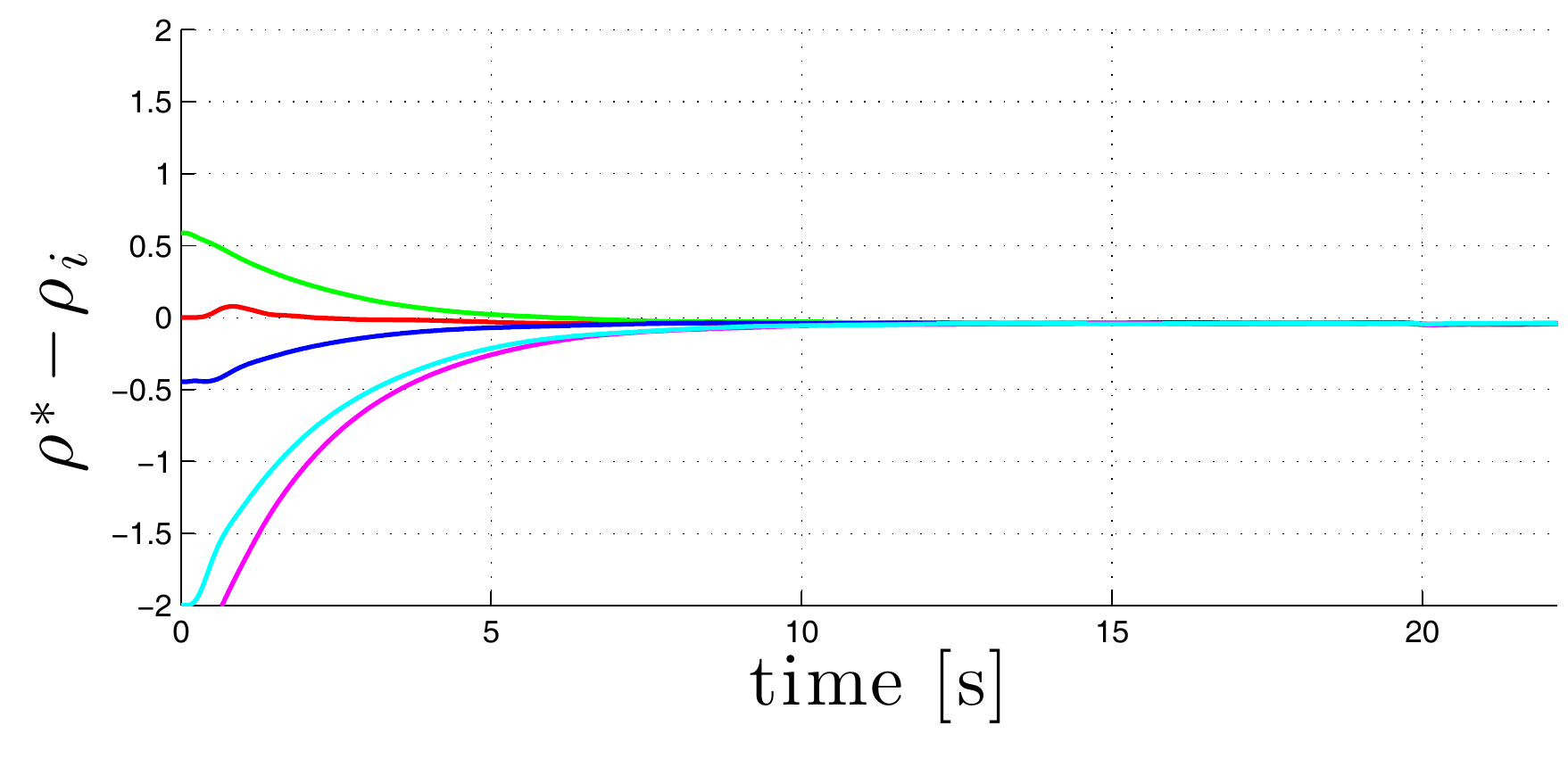}}}
\def\FigSimDAErrPhi{{\includegraphics[width=0.59\columnwidth]{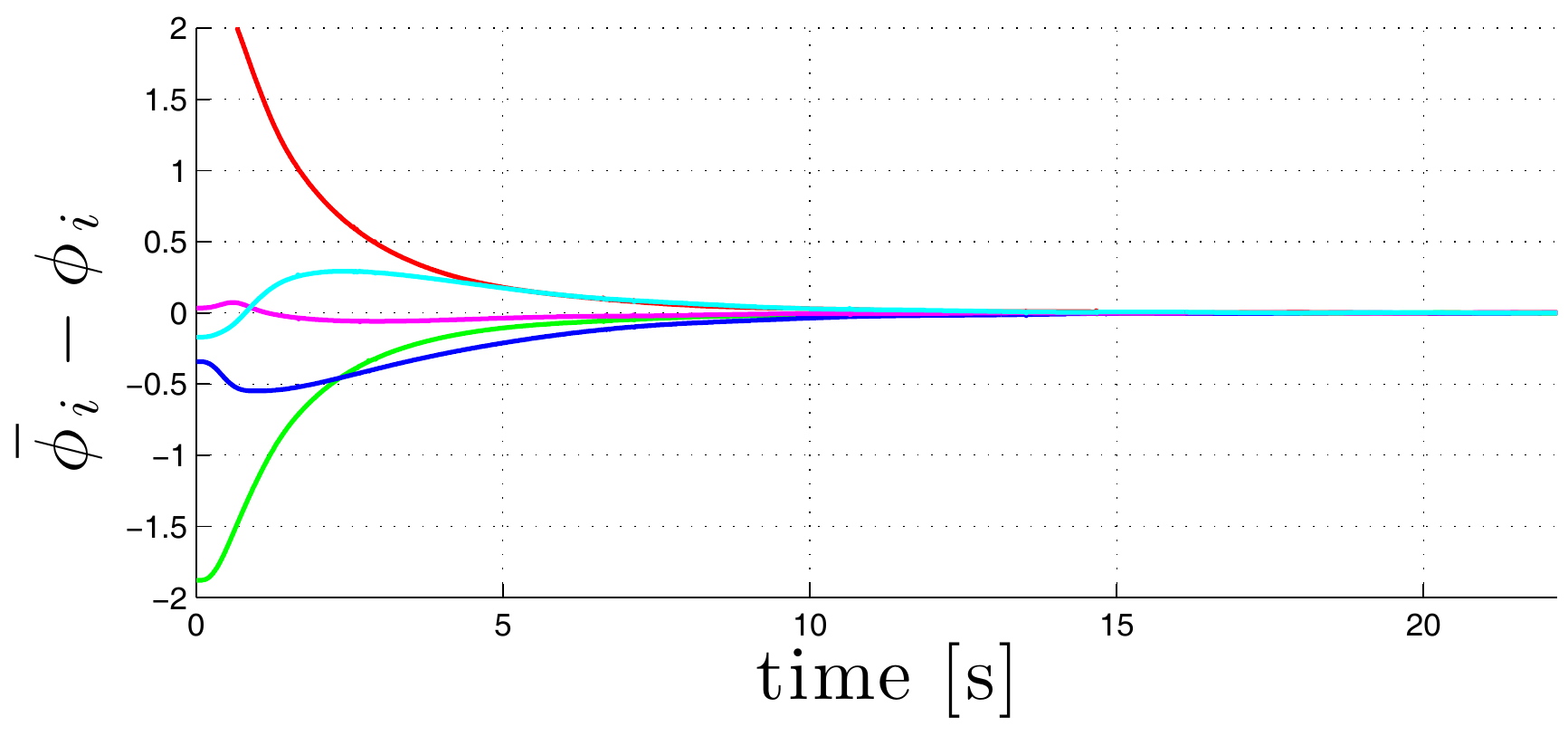}}}
\def\FigSimDAErrDotPhi{{\includegraphics[width=0.59\columnwidth]{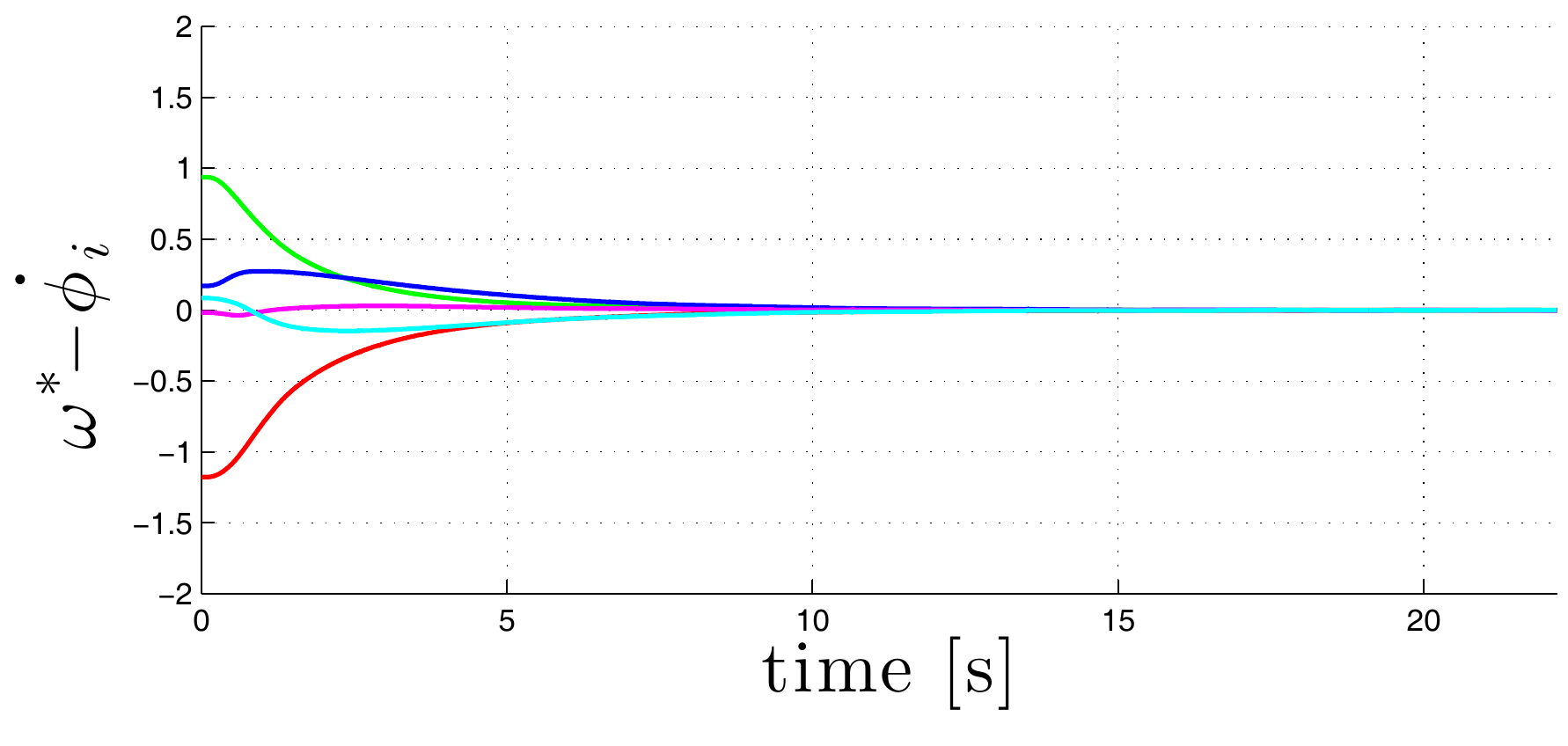}}}
\def\FigSimDAErrZ{{\includegraphics[width=0.58\columnwidth]{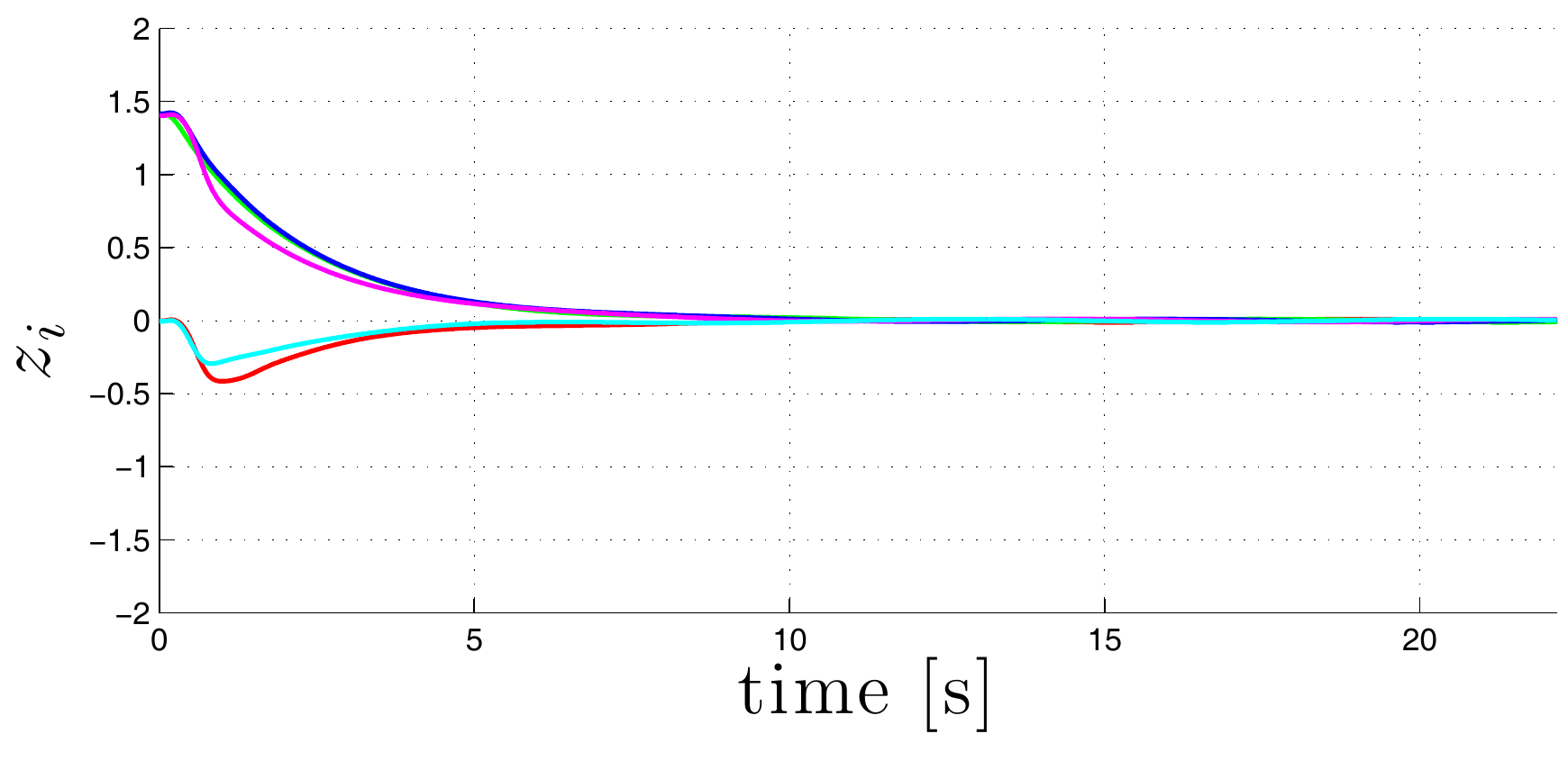}}}
\def\FigSimDADotPT{{\includegraphics[width=0.59\columnwidth]{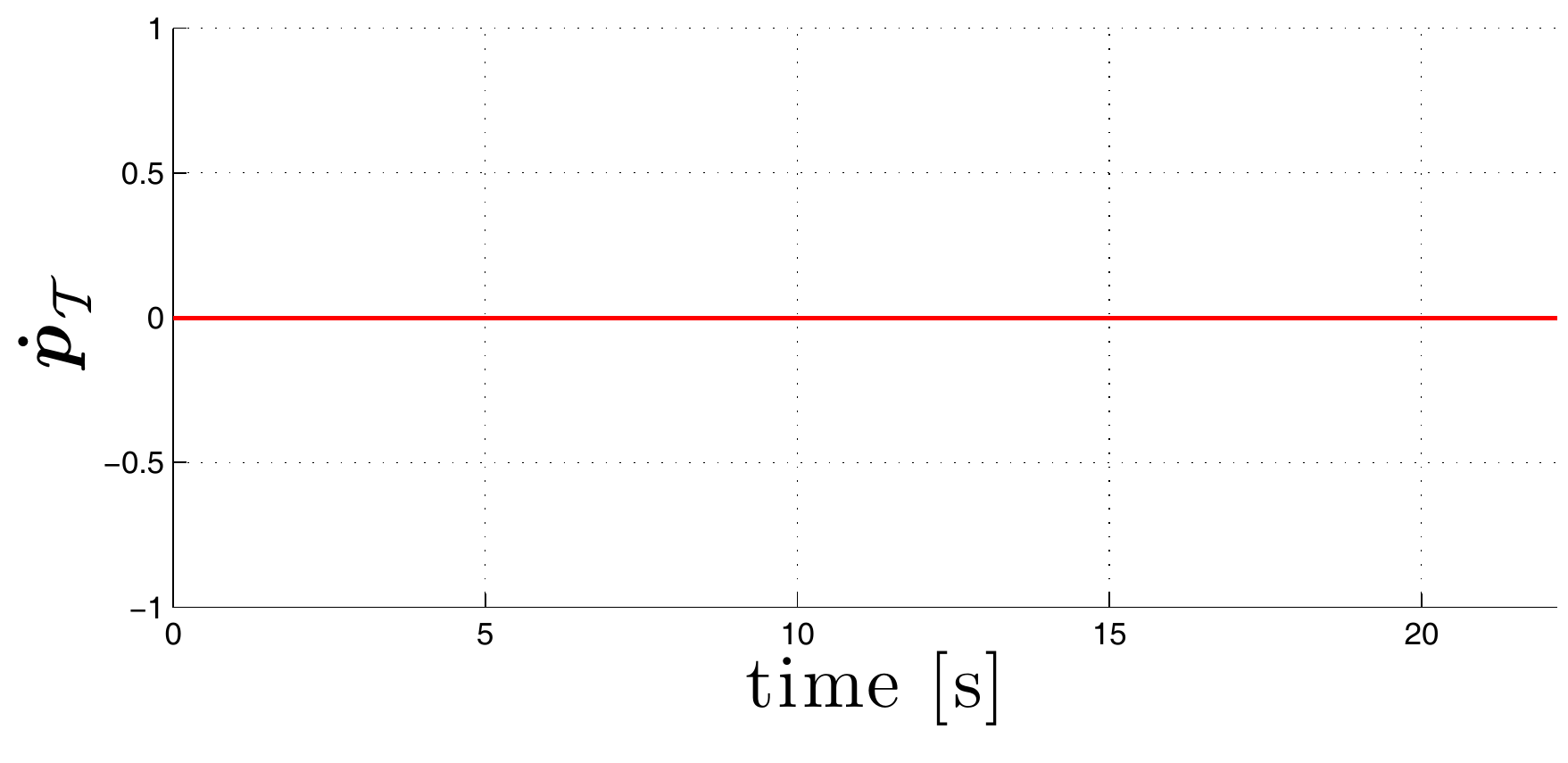}}}
\def\FigSimDAOmegaT{{\includegraphics[width=0.59\columnwidth]{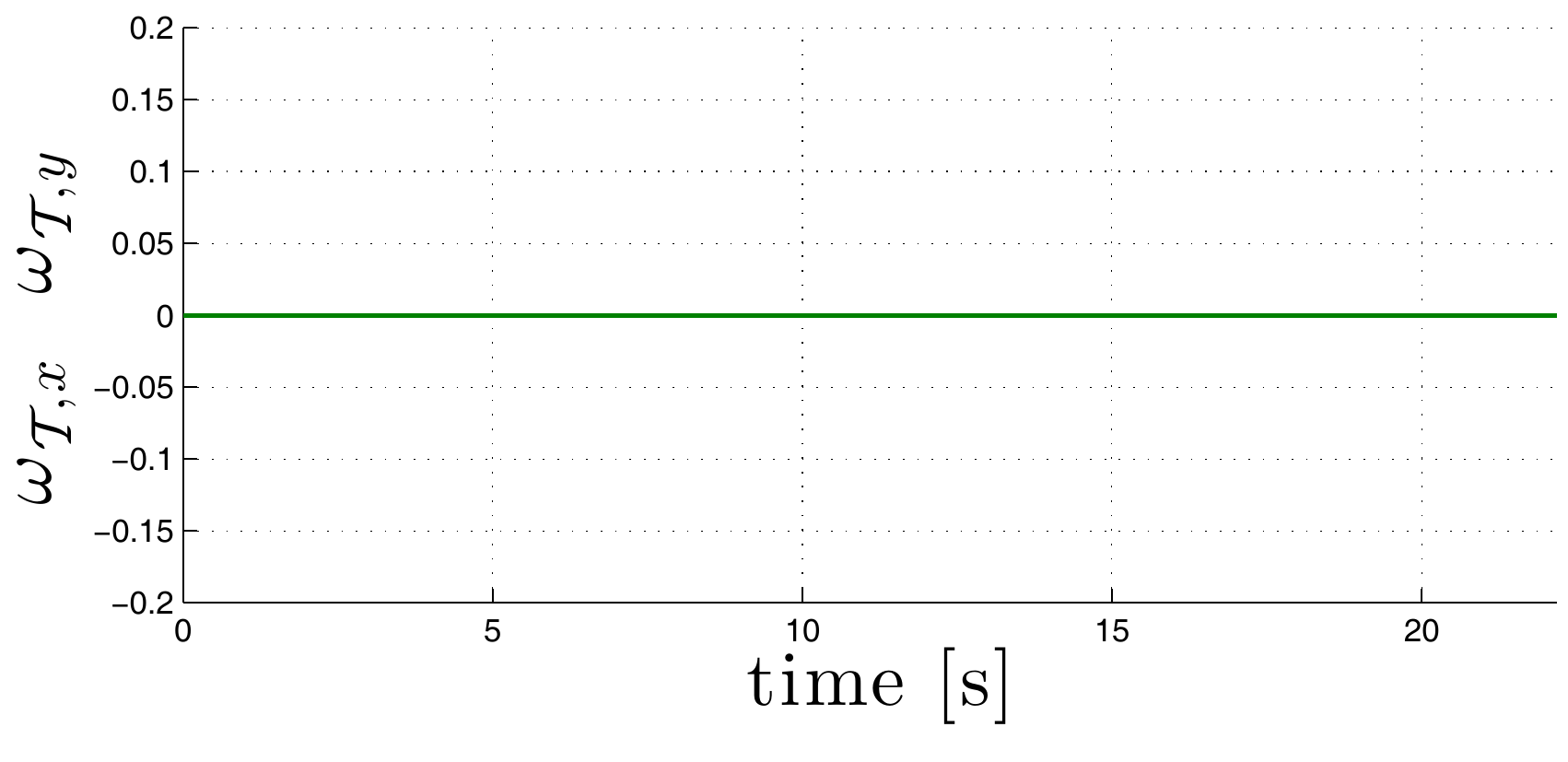}}}
\def\FigSimDEErrRho{{\includegraphics[width=0.59\columnwidth]{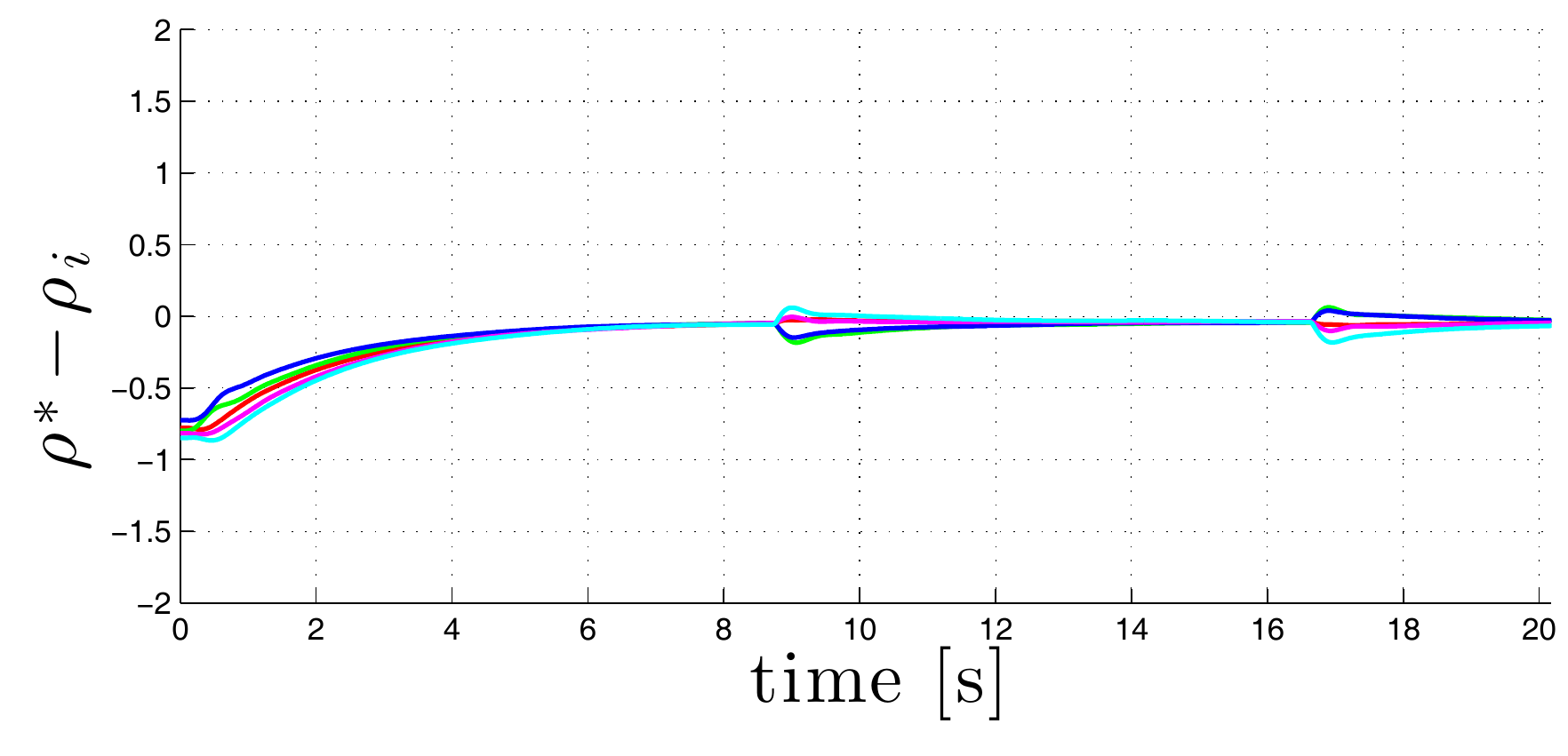}}}
\def\FigSimDEErrPhi{{\includegraphics[width=0.59\columnwidth]{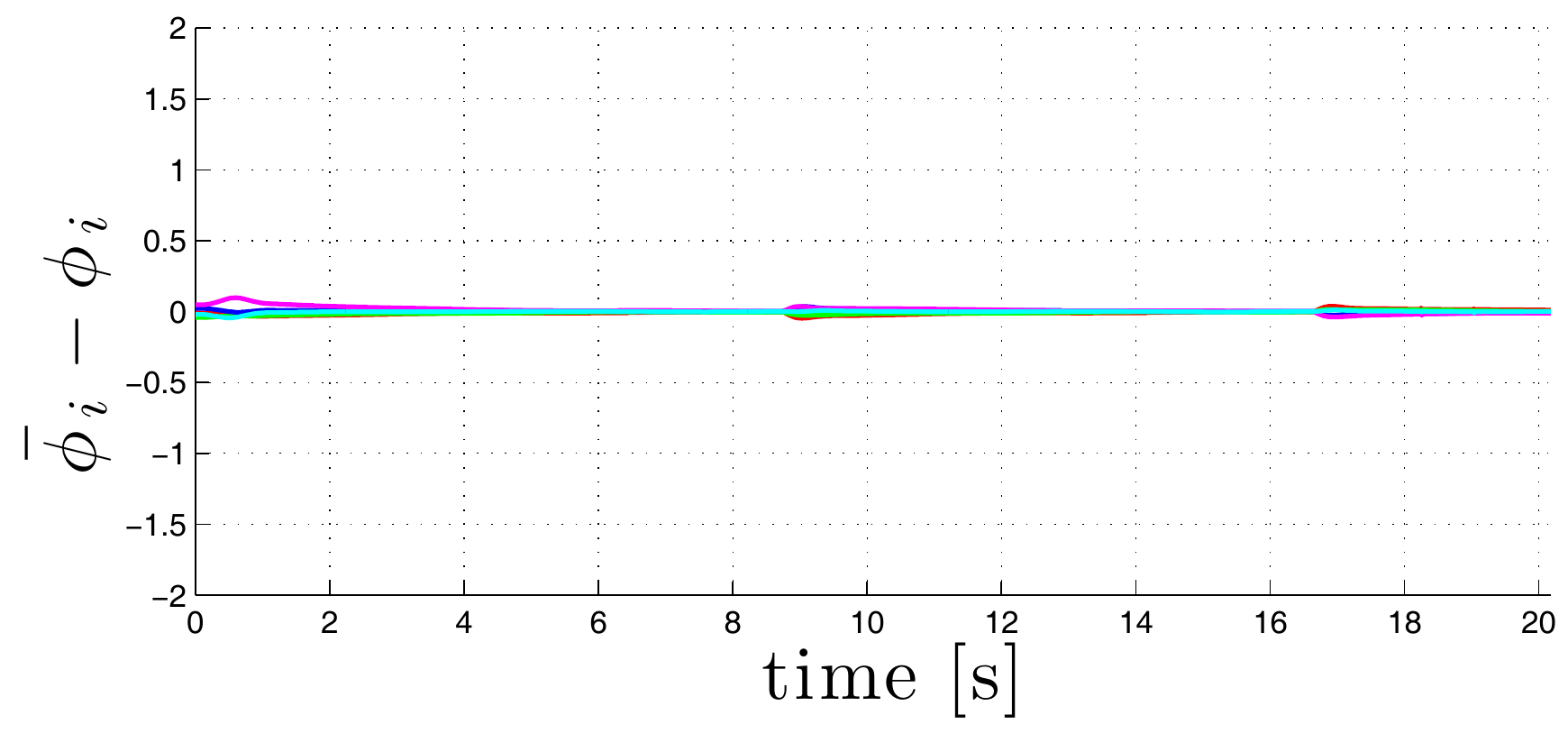}}}
\def\FigSimDEErrDotPhi{{\includegraphics[width=0.59\columnwidth]{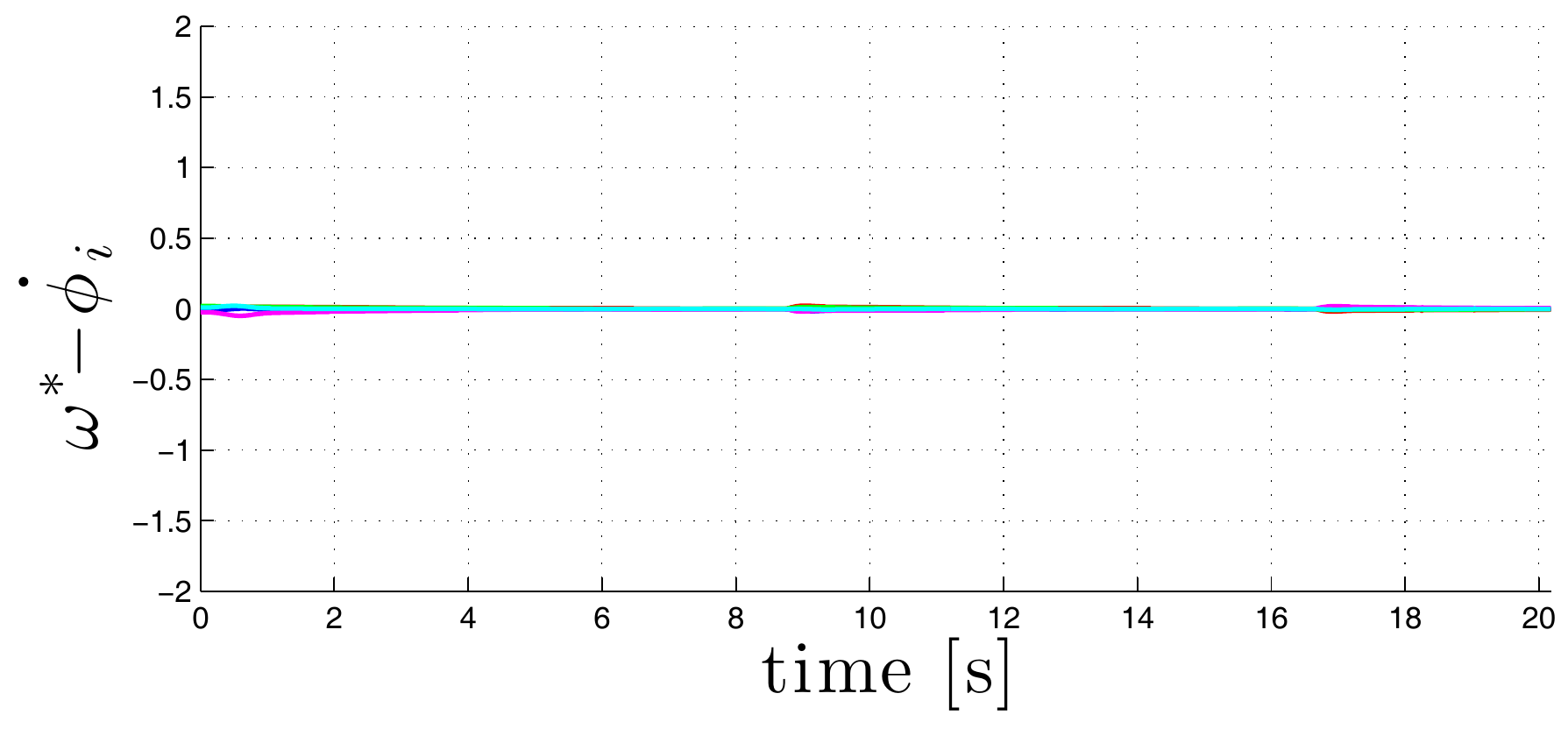}}}
\def\FigSimDEErrZ{{\includegraphics[width=0.58\columnwidth]{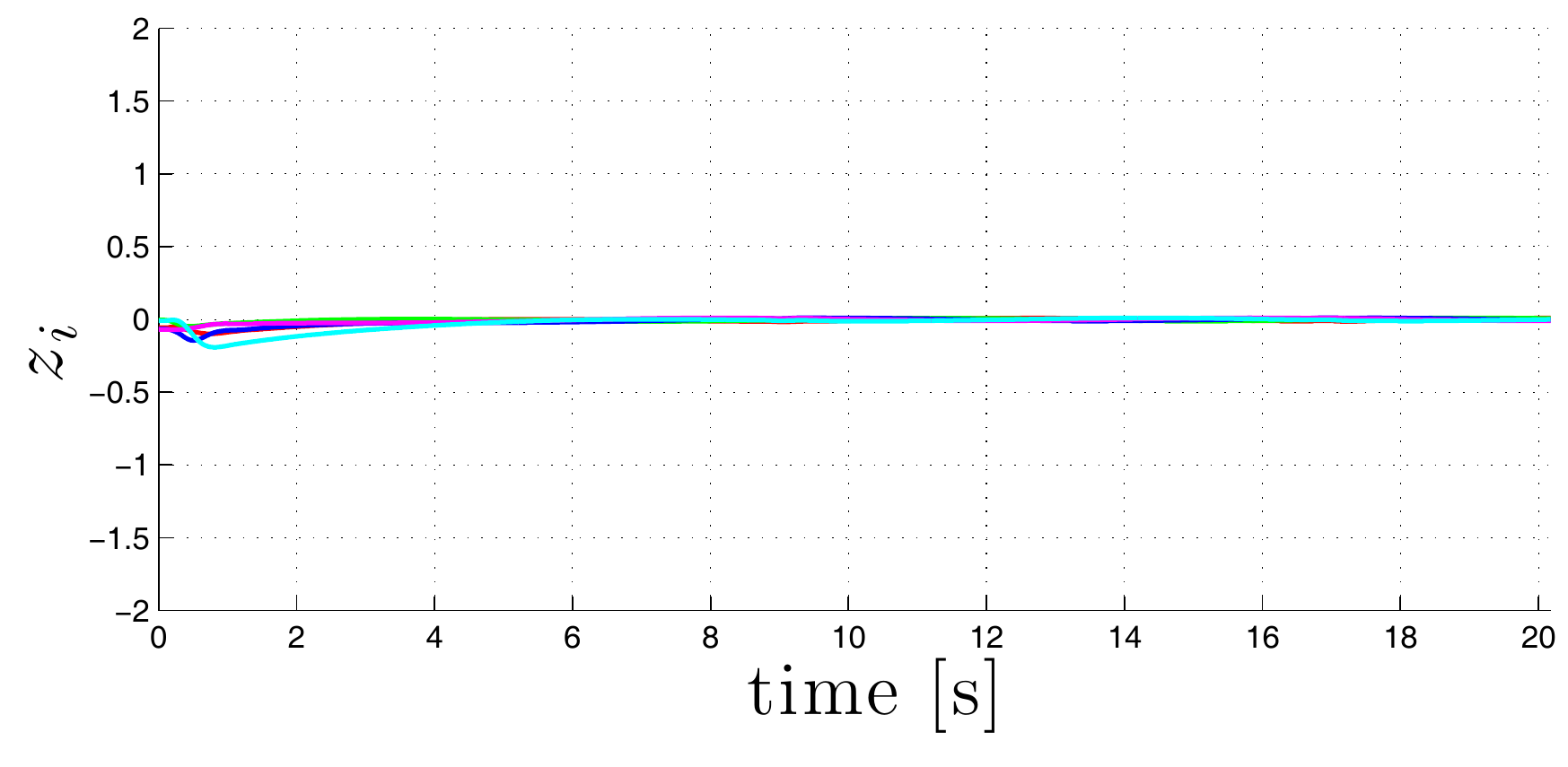}}}
\def\FigSimDEDotPT{{\includegraphics[width=0.59\columnwidth]{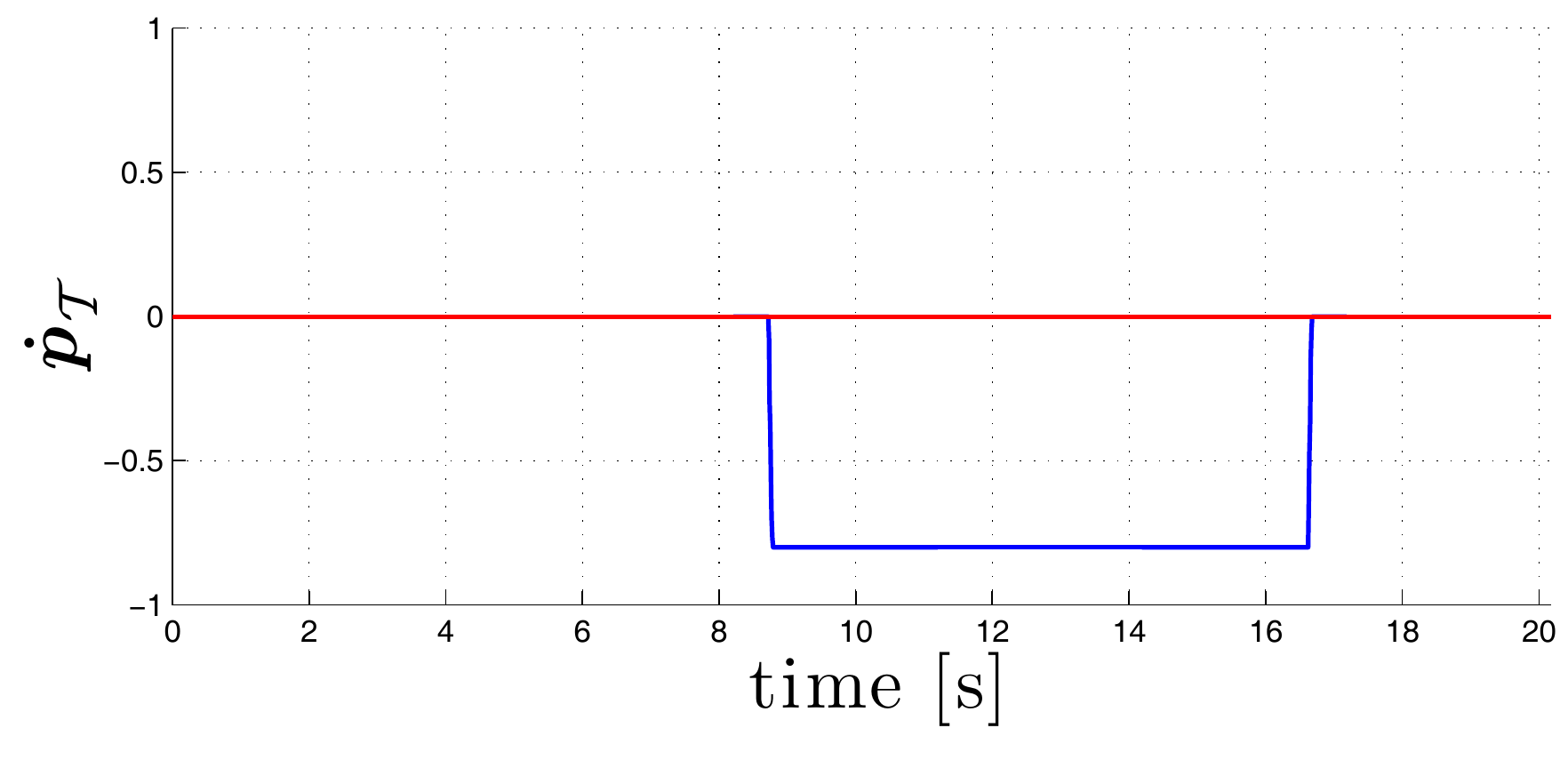}}}
\def\FigSimDEOmegaT{{\includegraphics[width=0.59\columnwidth]{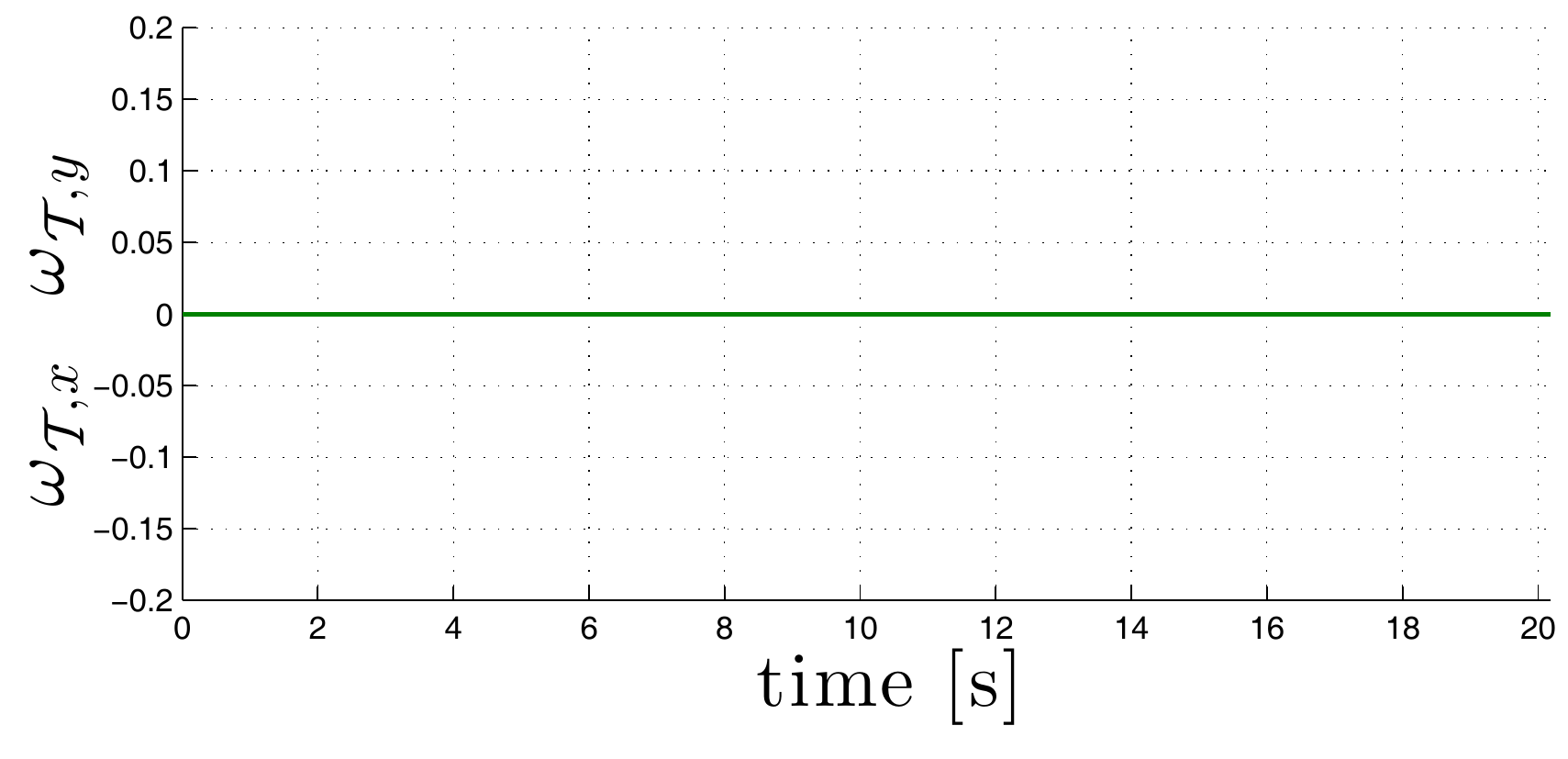}}}
\def\FigSimDDErrRho{{\includegraphics[width=0.59\columnwidth]{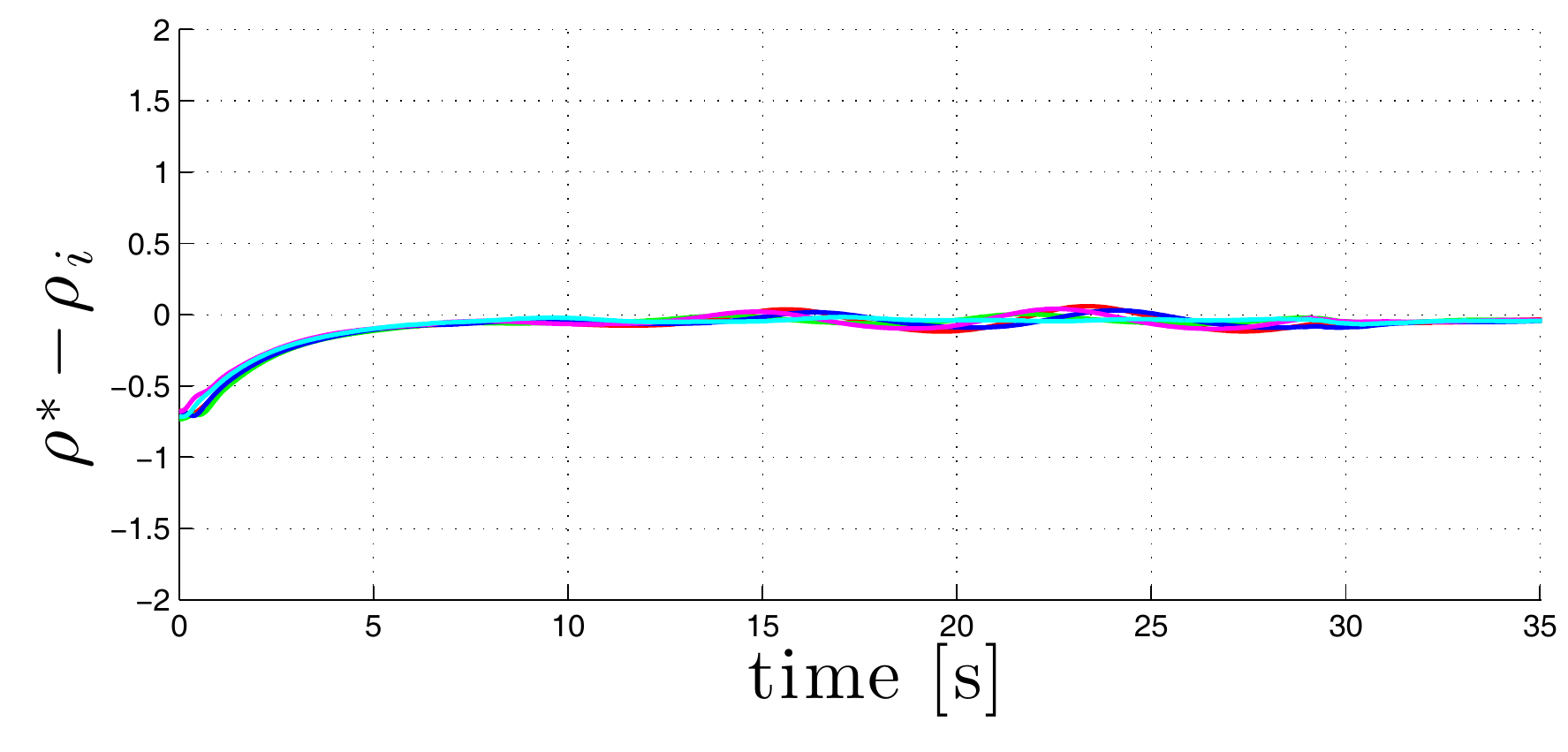}}}
\def\FigSimDDErrPhi{{\includegraphics[width=0.59\columnwidth]{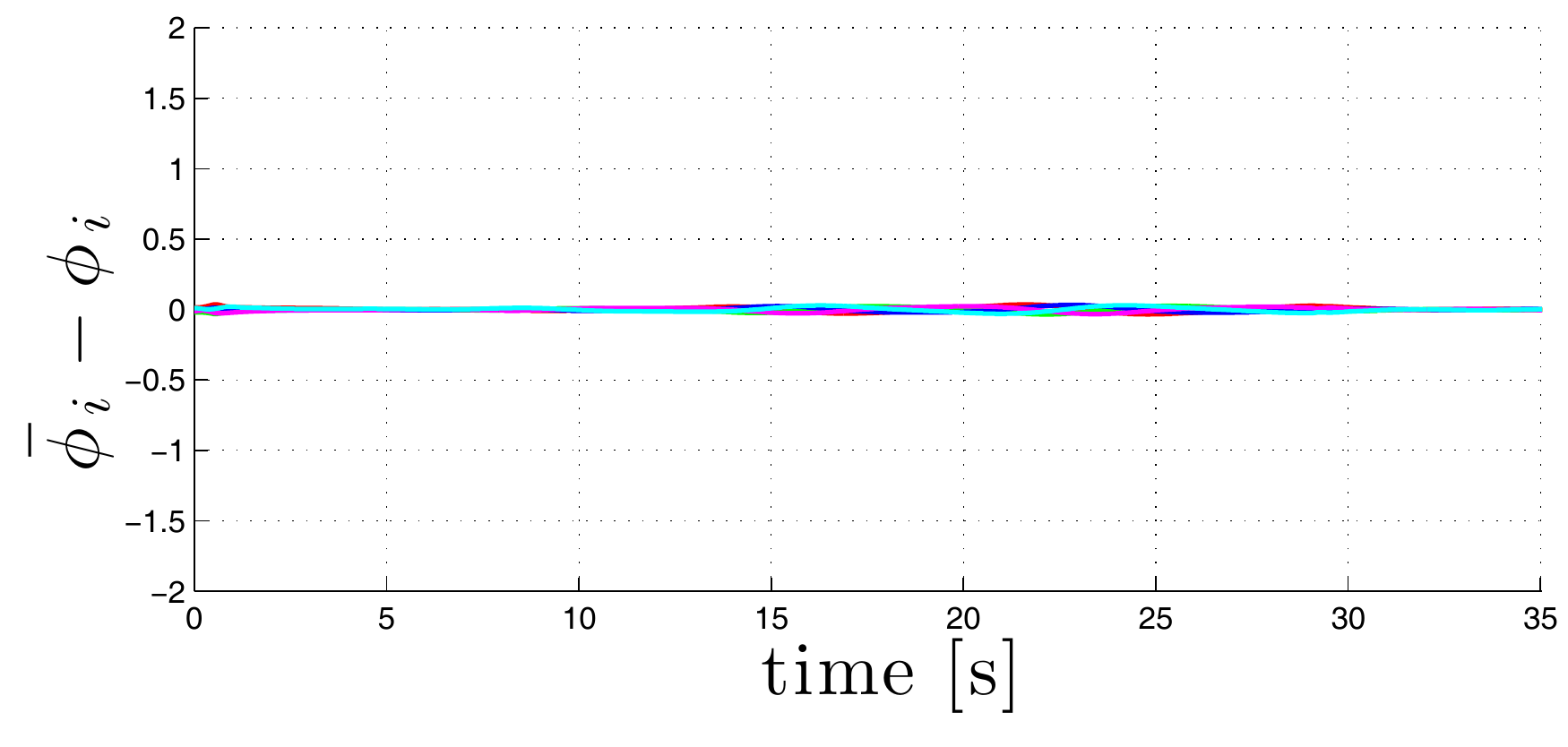}}}
\def\FigSimDDErrDotPhi{{\includegraphics[width=0.59\columnwidth]{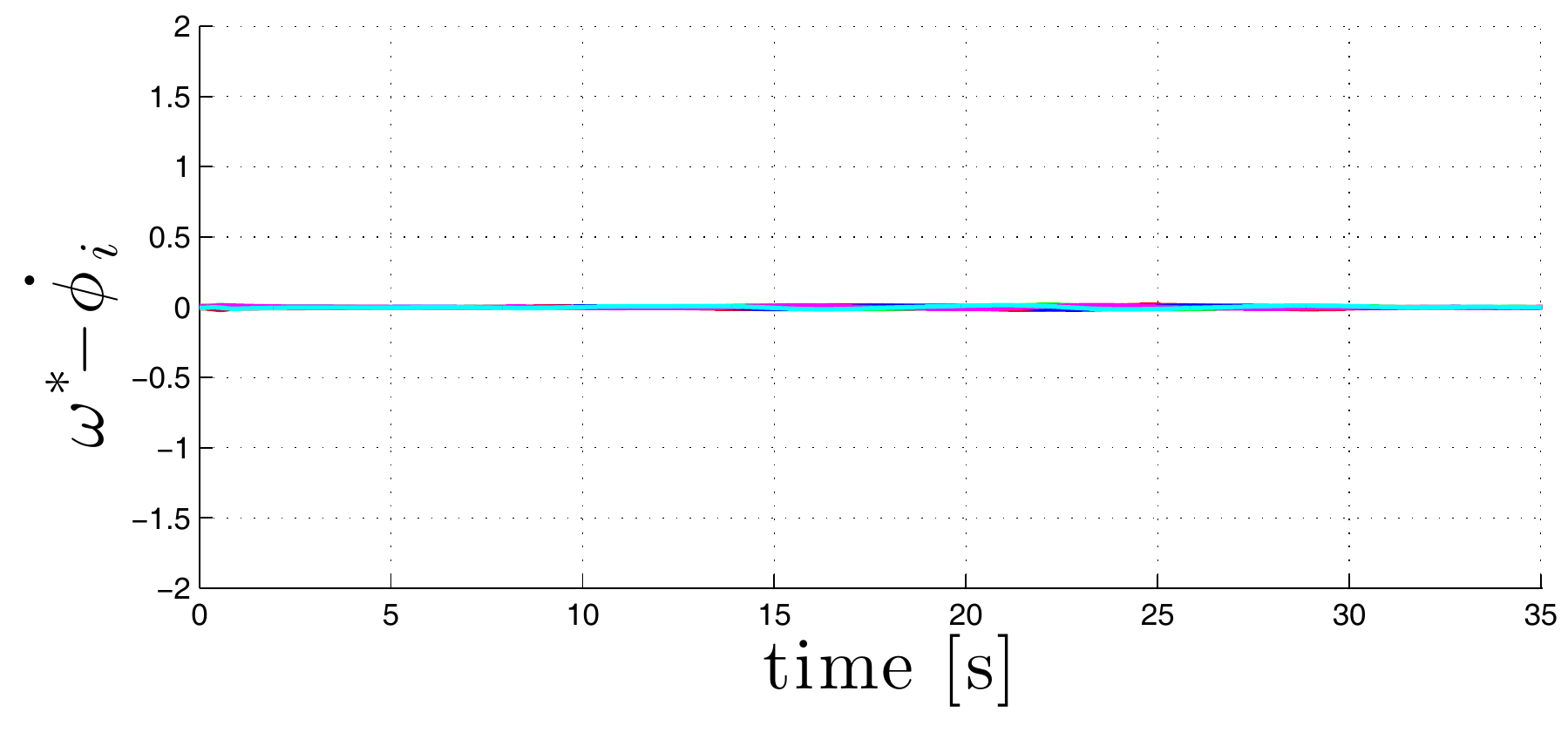}}}
\def\FigSimDDErrZ{{\includegraphics[width=0.58\columnwidth]{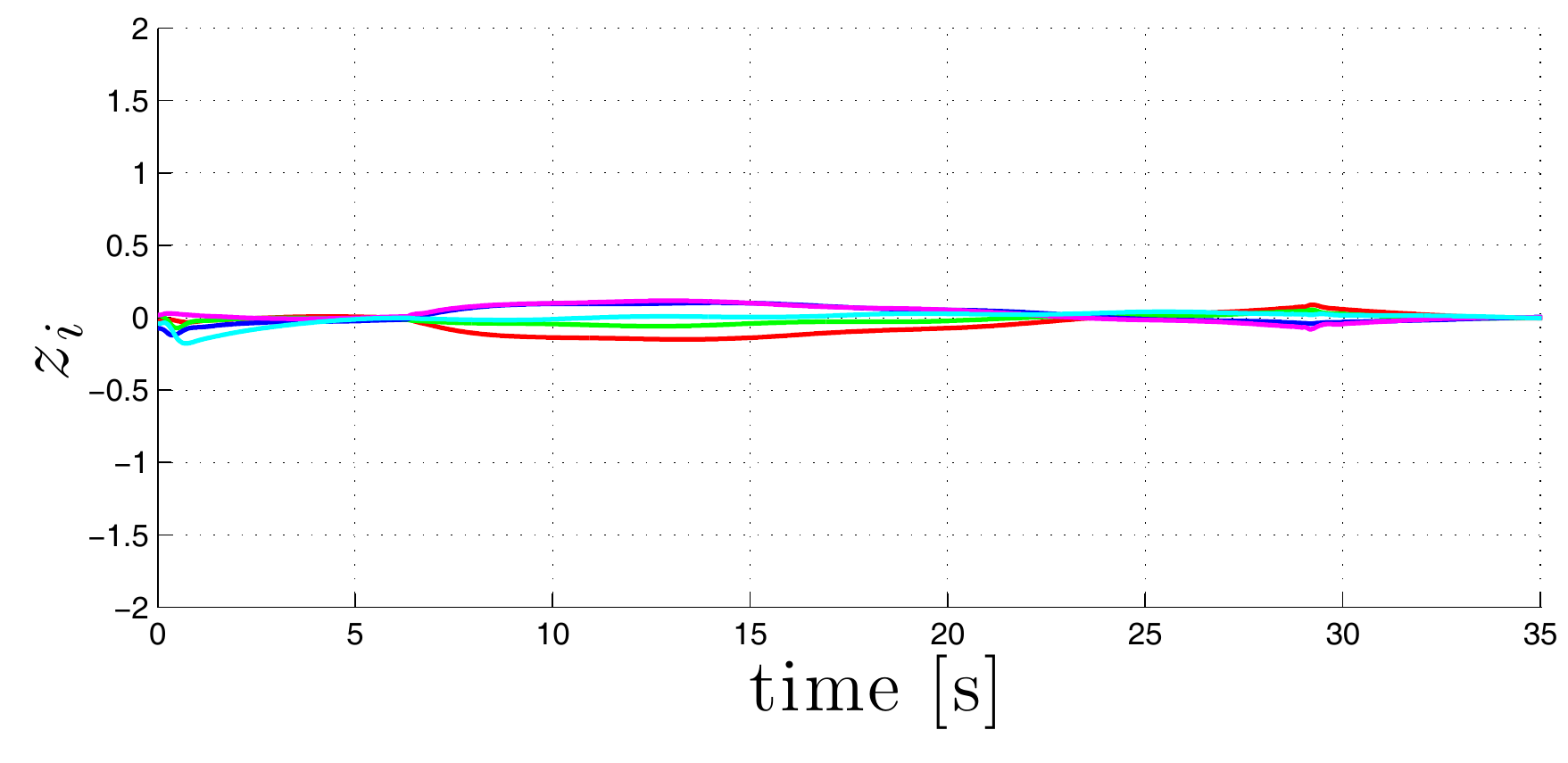}}}
\def\FigSimDDDotPT{{\includegraphics[width=0.59\columnwidth]{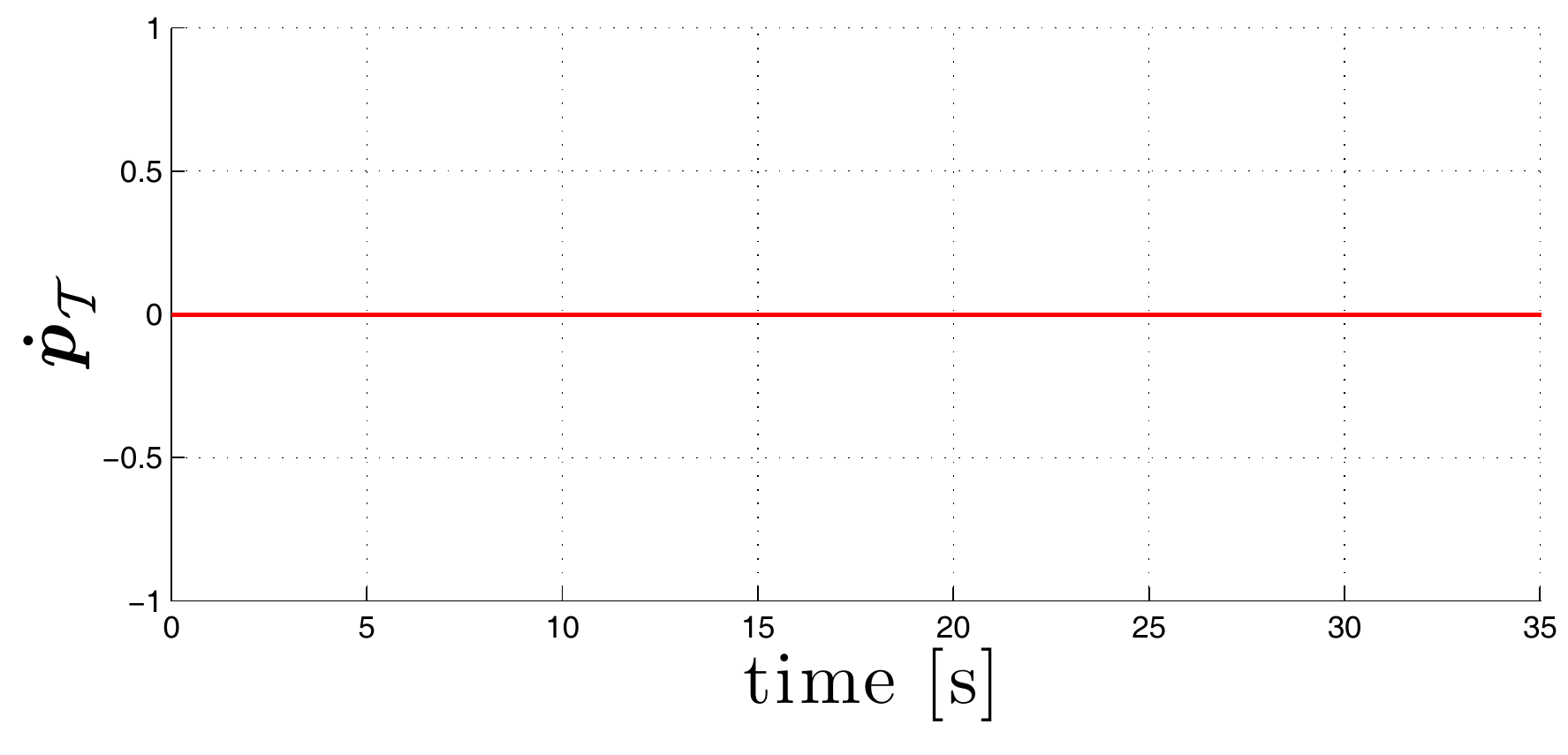}}}
\def\FigSimDDOmegaT{{\includegraphics[width=0.59\columnwidth]{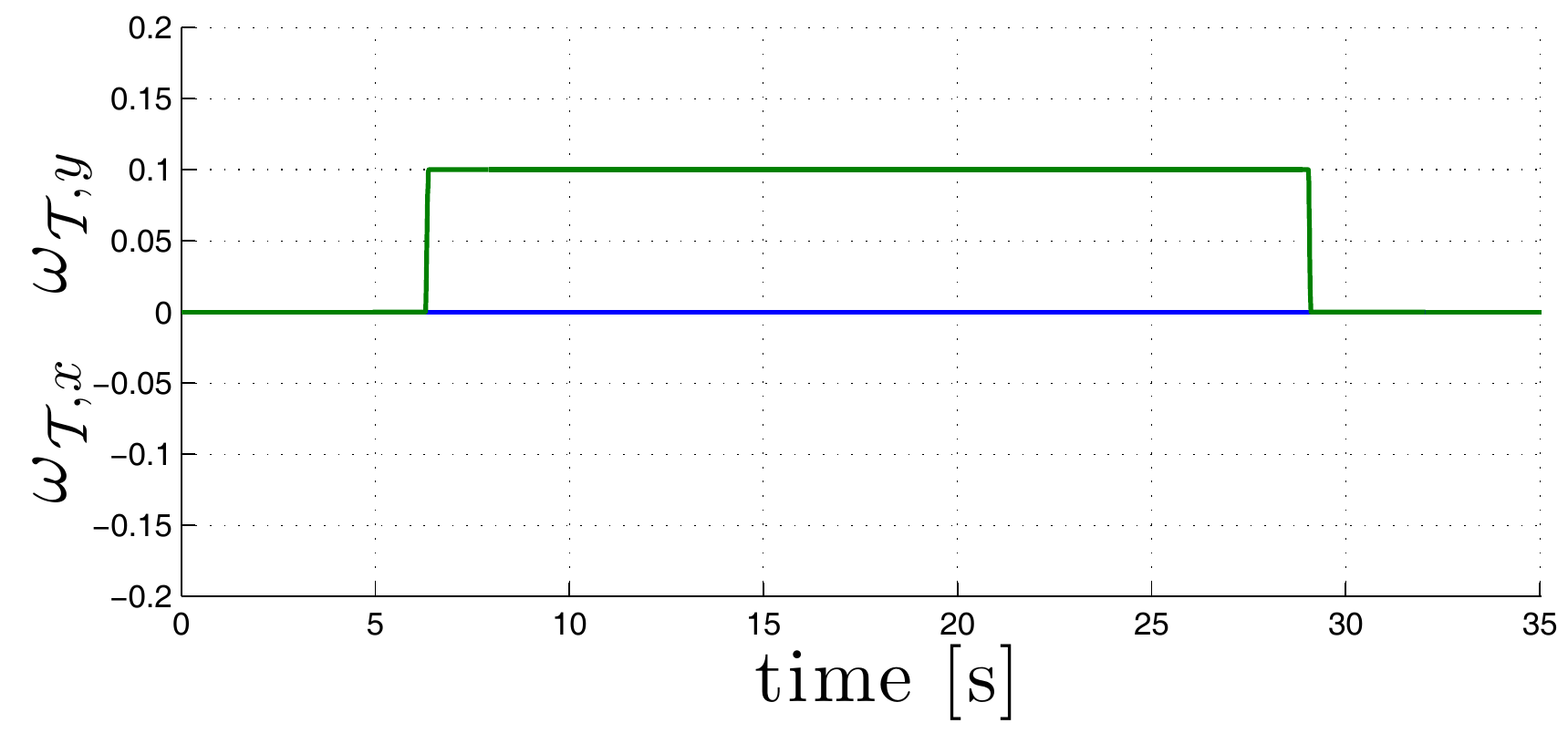}}}
\def\FigSimDScreenShots{{\includegraphics[width=1.99\columnwidth]{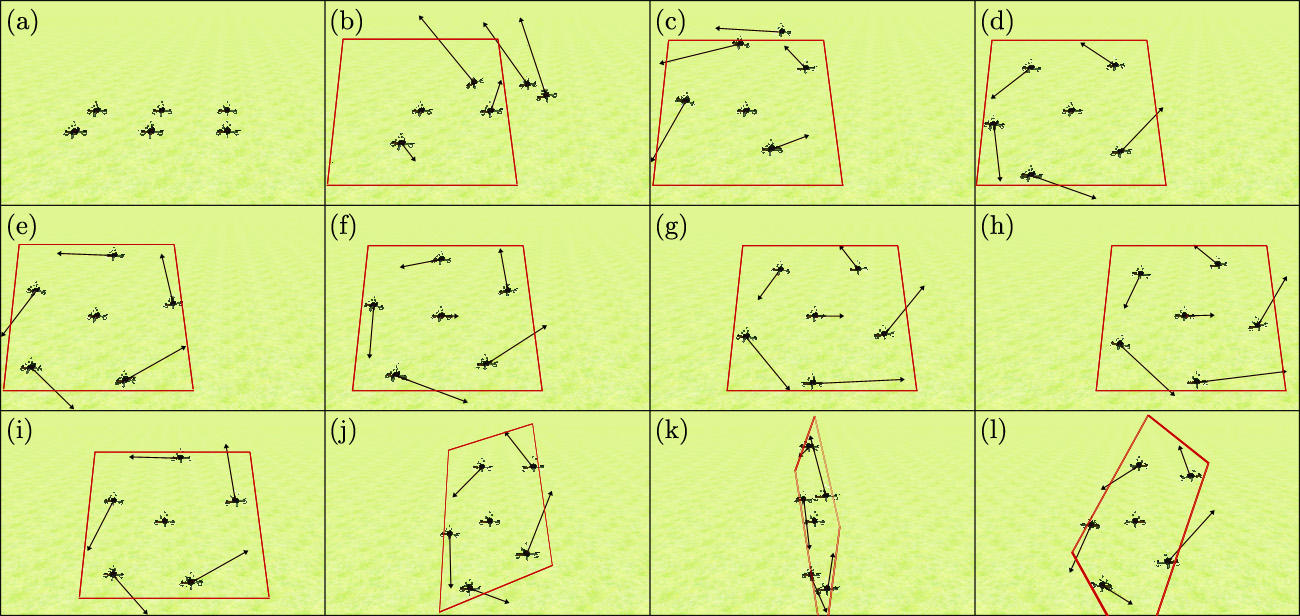}}}
\def\ps@IEEEtitlepagestyle{
	\def\@oddfoot{\textcolor{red}{\sf\footnotesize Preprint - final, definitive version available at http://www.springer.com/ ~ ~ ~ ~ \thepage\hfill accepted for Autonomous Robots, June 2015}}
	\def\@evenfoot{}
	\def\@oddhead{\textcolor{red}{\sf\footnotesize Preprint version - final, definitive version available at http://www.springer.com/ \hfill accepted for Autonomous Robots, June 2015}}
	\def\@evenhead{\textcolor{red}{}}%
}%
\def\ps@titlepagestyle{
	\def\@oddfoot{\textcolor{red}{\sf\footnotesize Preprint - final, definitive version available at http://www.springer.com/ ~ ~ ~ ~ \thepage\hfill accepted for Autonomous Robots, June 2015}}
	\def\@evenfoot{}
	\def\@oddhead{\textcolor{red}{}}
	\def\@evenhead{\textcolor{red}{}}%
}%
\def\ps@headings{
	\def\@oddfoot{\textcolor{red}{\sf\footnotesize Preprint - final, definitive version available at http://www.springer.com/ ~ ~ ~ ~ \thepage\hfill accepted for Autonomous Robots, June 2015}}
	\def\@evenfoot{\textcolor{red}{\sf\footnotesize Preprint - final, definitive version available at http://www.springer.com/ ~ ~ ~ ~ \thepage\hfill accepted for Autonomous Robots, June 2015}}
	\def\@oddhead{}
	\def\@evenhead{}%
}%
\begin{document}\sloppy

\title{Decentralized Multi-Robot Encirclement of a 3D Target\\ with Guaranteed Collision Avoidance
}

\author{Antonio Franchi         \and
        Paolo Stegagno \and
        Giuseppe Oriolo 
}

\institute{A.~Franchi \at
              CNRS, LAAS,
7 Avenue du Colonel Roche, F-31400 Toulouse, France. \\
              Univ de Toulouse, LAAS, F-31400 Toulouse, France\\
              \email{antonio.franchi@laas.fr}           
           \and
           P.~Stegagno \at
           Max Planck Institute for Biological Cybernetics, Spemanstra{\ss}e 38, 72076 T\"ubingen, Germany
           \email{paolo.stegagno@tuebingen.mpg.de}
           \and
           G.~Oriolo \at
           Dipartimento di Ingegneria Informatica, Automatica e Gestionale, Sapienza Universit\`a di Roma, Via Ariosto 25, 00185 Roma, Italy
           \email{oriolo@diag.uniroma1.it}
}

\date{Received: date / Accepted: date}
\thispagestyle{headings}

\maketitle

\begin{abstract}
We present a control framework for achieving encirclement of a target moving in 3D using a multi-robot system. Three variations of a basic control strategy are proposed for different versions of the encirclement problem, and their effectiveness is formally established. An extension ensuring maintenance of a safe inter-robot distance is also discussed. The proposed framework is fully decentralized and only requires local communication among robots; in particular, each robot locally estimates all the relevant global quantities. We validate the proposed strategy through simulations on kinematic point robots and quadrotor UAVs, as well as  experiments on differential-drive wheeled mobile robots. 
\keywords{Distributed Robot Systems, Motion Control, Multi-robot Decentralized Control, Encirclement, Escorting, Entrapment.
}
\end{abstract}

\ifx\nointro\undefined
\section{Introduction}\label{sec:intro}

The general problem of steering a group of mobile ag\-ents/robots in a regular and cohesive formation is an important topic in robotics because of the large number of potential applications. As such, it 
has been often considered in the research literature.

Early works on this topic focus on basic tasks such as aggregation and obstacle avoidance.
\cite{2001-LeoFio} use virtual points to change the formation shape and to address collision avoidance. The performance of a swarm that approaches a goal maintaining cohesion is analyzed by~\cite{2002-GazPas} depending on the attractive and repulsive control profiles.
A study of the convergence depending on the topology of the communication graph is considered by \cite{2005-Mor}, while \cite{2005-LinFraMag} place the \emph{$\alpha-$stability} concept at the basis of a fixed-topology algorithm, and \cite{2007b-Ren} applies consensus results to formation control.

\cite{2011-GonPimPer}, \cite{2010-SabSecFanDem,2013-SabSecFan} and~\cite{2008-HsiKumCha} present methods based on artificial potentials or  fields to make a group of robots circulate along a static curve defined by two implicit functions. \cite{2012-TurMicKum} show that it is possible to solve an optimization problem  online (i.e., at each control step) in order to drive a multi-UAV system along a pre-planned reference trajectory.

Many authors have investigated formation controllers for specific tasks.
An important example in this category is the encirclement of a point or target.
Despite its relative simplicity, this task represents several interesting missions, such as {\em coverage} (retrieve and fuse data about an environmental point-of-interest from different viewpoints), {\em patrolling} (guard the perimeter of a given area centered at the encircled point), {\em escorting} (protect an important member of the group from unfriendly agents) and {\em entrapment} (contain the motion of a hostile object).

The problem of moving a group of unicycles in a regular formation around a common point is considered by \cite{2007-SepPalLeo}. A related approach is presented by \cite{2009-MosMicJadDan} where a centralized vision system is used for the experimentation. Encirclement with relative bearing sensors is investigated by~\cite{2008-CecDimGarGia}.
The convergence of a decentralized controller is proven using Lyapunov arguments in \cite{2012-SadKosVanHuiNij}, while other authors develop controllers based on cyclic pursuit-evasion schemes \citep{2007-PavFra,2008-HarKimHor} and consensus techniques \citep{2010-JonKao}.
\cite{2010-LanLin} propose a decentralized hybrid controller for encircling and tracking a moving target using multiple unicycles, under the assumption that the velocity of the target is known.

Some works rely on a centralized approach to the problem.
For example, in \cite{2008-AntArrChi} a global vision system provides the configuration of each robot to a centralized controller based on input-output linearization.
Similarly, \cite{2009-MasLi_AcaKit} propose a centralized system in which  measurements are expressed in a world frame and cluster space control is used. An additive robot-level obstacle avoidance term introduces some decentralization in the system.

In other works, additional challenges (e.g., higher dimensional problem, disturbances) have been introduced w.r.t. the basic encirclement problem. 
\cite{2009-KawTor} prove the stability of a decentralized controller for a multi-robot system moving in a 3D space. 
A different problem is considered by \cite{2010-MelPal}, who design a backstepping controller for stabilizing a circular formation in a uniform flow field.
\cite{2010-ShaFidAnd} present a control law that steers two-dimensional agents on a fixed regular-polygon formation.

In this paper, we introduce a new type of decentralized encirclement controller. In particular, three variations of a basic control strategy are proposed for different versions of the encirclement problem, and their effectiveness is formally established. The most significant contributions of the proposed approach with respect to the literature are the following:

\begin{itemize}

\item the applicability to both 3D (spatial) and 2D (planar) encirclement without modifications;

\item an integrated scheme for estimating in a decentralized way all the global quantities needed by the control law;

\item a provably effective strategy for inter-robot collision avoidance;

\item an extensive numerical validation showing applicability of the method to both holonomic point robots and underactuated UAVs;

\item an experimental implementation on nonholonomic ground vehicles using only onboard sensors (i.e., without any external localization system).

\end{itemize} 

In particular, the last point shows that our approach is viable and robust in a realistic setting, in which each robot must estimate all quantities needed by the encirclement controller using only  information gathered by its own sensory system.
In particular, there is no need for an external tracking system, such as a GPS or a motion capture system.

While some of these properties were individually achieved in previous works with different controllers, we present here a self-contained and comprehensive  work that covers all the aspects of the encirclement problem: theoretical analysis, control design,  discussion, simulations, and experimental validation with onboard-only sensors. We believe that this feature represents a contribution, per se, in addition to each single contribution.

The main novelties of this paper with respect to our previous related work~\citep{2010c-FraSteDirOri} are the following: {\em (i)} extension of the controller to the 3D case {\em (ii)} integration of a decentralized mechanism for maintaining a safe inter-robot distance {\em (iii)} decentralized estimation of the global quantities {\em (iv)} new simulation and experiments, including the case of 3D aerial vehicles. 

The paper is organized as follows. Section~\ref{sec:prob} discusses the encirclement problem and formulates its three versions considered in the paper. Section~\ref{sec:control} introduces the encirclement controllers, while Section~\ref{sec:distance} describes an extension that guarantees maintenance of a safe inter-robot distance. Section~\ref{sec:simexp} presents simulation results with 3D point robots and quadrotor UAVs, as well as experimental results with differential-drive ground robots. Section~\ref{sec:concl} concludes the manuscript and hints at some future work.

\fi
\ifx\noprob\undefined
\section{Problem Formulation}
\label{sec:prob}

\begin{table}[h!]
\begin{center}
\begin{tabular}{|c|p{5cm}|}
\hline
$n$ & number of robots\\[0.06cm]
$R_i$ & $i$-th point robot\\[0.06cm]
$\calW$ & inertial frame\\[0.06cm]
$\pv_i\in\mathbb{R}^3$ & cartesian position of $R_i$ in $\calW$\\[0.06cm]
$\uv_i\in\mathbb{R}^3$ & cartesian velocity input for $R_i$\\[0.06cm]
$T$ & representative point of the target\\[0.06cm]
$\calN_i$ & the \emph{neighbor set} of robot $i$\\[0.06cm]
$\calT$ & target frame\\[0.06cm]
$\pv_\calT\in\mathbb{R}^3$ & cartesian position of $T$ in $\calW$\\[0.06cm]
$\Rm_{\calT} \in SO(3)$ & rotation matrix from~$\calW$ to $\calT$\\[0.06cm]
$\qv_i=(\rho_i \> \phi_i \> z_i)^T$ & position of $R_i$ in $\calT$ in cylindrical coordinates \\[0.06cm]
$\rho_i \in \mathbb{R}_0^+$ & radius of $R_i$ in $\calT$\\[0.06cm]
$\phi_i \hspace{-0.7mm}\in\hspace{-0.7mm} [0, 2\pi)$ & phase of $R_i$ in $\calT$ \\[0.06cm]
$z_i\in\mathbb{R}$ & height of $R_i$ in $\calT$ \\[0.06cm]
$\vv_i=\dot \qv_i$ & cylindrical velocity input for $R_i$\\[0.06cm]
$\bar \phi_i$ & average between the phases \\[0.06cm]
	\; & \quad of the successor and the predecessor of $R_i$\\[0.06cm]
$\Delta_i$ & half-difference between the phases\\[0.06cm]
       \; & \quad of the successor and the predecessor of $R_i$\\[0.06cm]
$\delta_i$ & difference between the phases of $R_i$ \\[0.06cm]
       \; & \quad and its predecessor\\[0.06cm]
$\rho^*$ & desired encirclement radius\\[0.06cm]
$\omega$ & encirclement angular speed\\[0.06cm]
$\omega^*$ & desired encirclement angular speed\\[0.06cm]
$s$ & escape window\\[0.06cm]
$\Cm\hspace{-0.5mm}\in\hspace{-0.5mm}\mathbb{R}^{n\times n}$ & circulant matrix with first row\\[0.06cm]
       \; & \quad $\left(0\>\> \frac{1}{2} \>\> 0\> \cdots\> 0\>\> \frac{1}{2}\right)$\\[0.06cm]
$\Dm\hspace{-0.5mm}\in\hspace{-0.5mm}\mathbb{R}^{n\times n}$ & circulant matrix with first row\\[0.06cm] 
       \; & \quad $\left(0\>\> -\frac{1}{2}\>\> 0\> \cdots\> 0\>\> \frac{1}{2}\right)$\\[0.06cm]
$\mat{H}\hspace{-0.5mm}\in\hspace{-0.5mm}\mathbb{R}^{n\times n}$ & circulant matrix with first row\\[0.06cm] 
       \; & \quad $\left(0\>\> 0\>\> 0 \>\> \cdots\>\> 0\>\> -1\right)$\\[0.06cm]
$\vect{1}\in\mathbb{R}^{n}$  & $\left(1 \cdots 1\right)^T$\\[0.06cm]
$\vect{0}\in\mathbb{R}^{n}$  & $\left(0 \cdots 0\right)^T$\\[0.06cm]
$\vect{b}\in\mathbb{R}^{n}$  & $\left(-\pi \; 0 \cdots 0  \; \;\pi \right)^T$\\[0.06cm]
$\vect{g}\in\mathbb{R}^{n}$  & $\left(\pi \; 0 \cdots 0  \; \pi \right)^T$\\[0.06cm]
$\vect{h}\in\mathbb{R}^{n}$  & $\left(2\pi \; 0 \cdots 0  \; 0 \right)^T$\\[0.06cm]
${\vect{\phi}}\in\mathbb{R}^{n}$ & vector of robot phases\\[0.06cm]
$\bar{\vect{\phi}}\in\mathbb{R}^{n}$ & vector of phase averages\\[0.06cm]
${\vect{\Delta}}\in\mathbb{R}^{n}$ & vector of phase half-differences \\[0.06cm]
${\vect{\delta}}\in\mathbb{R}^{n}$ & vector of consecutive phase differences \\[0.06cm]
${\vect{e}}_{\phi}\in\mathbb{R}^{n}$ & phase error vector\\[0.06cm]
$\xi_i$ & constant forcing term for $R_i$\\[0.06cm]
$\bar \xi$ &  average of the forcing terms\\[0.06cm]
$\hat \eta_i$  & estimate of a generic global quantity $\eta$\\[0.06cm]
  \; & \quad computed by $R_i$\\[0.06cm]
$\delta_{\rm min}(t)$ & $\min_{i}\delta_i(t)$\\[0.06cm]
$r$ & safety radius of the robots\\[0.06cm]
$D_{ij}$ & inter-distance between $R_i$ and $R_j$\\[0.06cm]
\hline
\end{tabular}
\end{center}
\caption{Main symbols used in the paper}
\label{table:symbols}
\end{table}

\noindent
For the convenience of the reader, we have collected in Table~\ref{table:symbols} the main symbols to be used in the paper.

Consider a system of mobile robots and a target moving in a 3D space. The target can be an inanimate object, another robot, or even a living agent. The task assigned to the multi-robot system is to {\em encircle} the target, i.e., move around it in a regular circular formation, often referred to as {\em splay state} formation~\citep{2008-SepPalLeo}. The problem can be reformulated in 2D, if needed, by assuming that robots and target always move on the same plane and discarding the orthogonal coordinate to that plane.

The robots are modeled as $n$ kinematic 3D points $R_1,\ldots,R_n$. Denoting the position of $R_i$ in the inertial world frame $\calW$ by $\pv_i\in \mathbb{R}^3$, each robot is modeled as a first-order dynamical system of the form
\begin{align}
\dpv_i = \uv_i, \qquad i=1,\ldots,n,
\label{eq:KinRobot}
\end{align}
where $\uv_i$ is the velocity control input. Note that the number $n$ is \emph{not} known to the robots, and will not be directly used in any of the control laws to be designed, 

Using a fully actuated kinematic model for the robots allows to focus on the design of decentralized control laws for achieving the encirclement task rather than on the specific dynamics of the robot. On the other hand, our control method will still be applicable to a large class of robots. In fact, the cartesian trajectories generated by the ideal model~(\ref{eq:KinRobot}) can be effectively used as reference for any mobile robot provided that at least one point $P_i$ of the robot can asymptotically track any (smooth) trajectory. A sufficient condition for this is that the position of $P_i$ is (part of) a set of linearizing outputs for the robot; in this case, in fact, there exists a feedback transformation such that the position of $P_i$ is produced by a chain of input-output integrators~\citep{1995-Isi}. Relevant examples include:

\begin{itemize}

\item the majority of wheeled mobile robots, and in particular differential-drive and car-like vehicles, in which feedback linearization of the position of suitable 2D points of the robot body can be obtained via static or dynamic feedback~\citep{2002-OriDelVen};

\item helicopter and quadrotor UAVs, where dynamic feedback linearization of the 3D center of mass can be achieved~\citep{1998-VniMur,2001-MisBenMsi};

\item more in general, all differentially flat systems~\citep{1995-FliLevMarRou} in which the flat outputs include the cartesian position of a point.

\end{itemize}

The effectiveness of this approach will be illustrated in Sect.~\ref{sec:simexp}, where we will report simulations on quadrotor UAVs and experiments on differential-drive mobile robots.

In the following, we also assume that each robot~$i$ can communicate with a subset of robots, denoted by $\calN_i$ (the \emph{neighbor set}), which implicitly defines the  \emph{communication graph}.
The communication graph can change arbitrarily over time with the only constraint that connectivity is preserved. Since the communication graph is connected, each robot can in principle exchange information with any other robot, e.g., using a suitable routing strategy~\citep{2009-GuiDaiCim}. Nevertheless, for the sake of scalability and decentralization, all-to-all information exchange should be avoided as much as possible in multi-robot algorithms. In particular, all the controllers to be presented in the paper can be implemented provided that each robot can communicate with two other members of the group, regardless of its size. We will come back on this important aspect in Section III-E.

\begin{figure*}[!t]
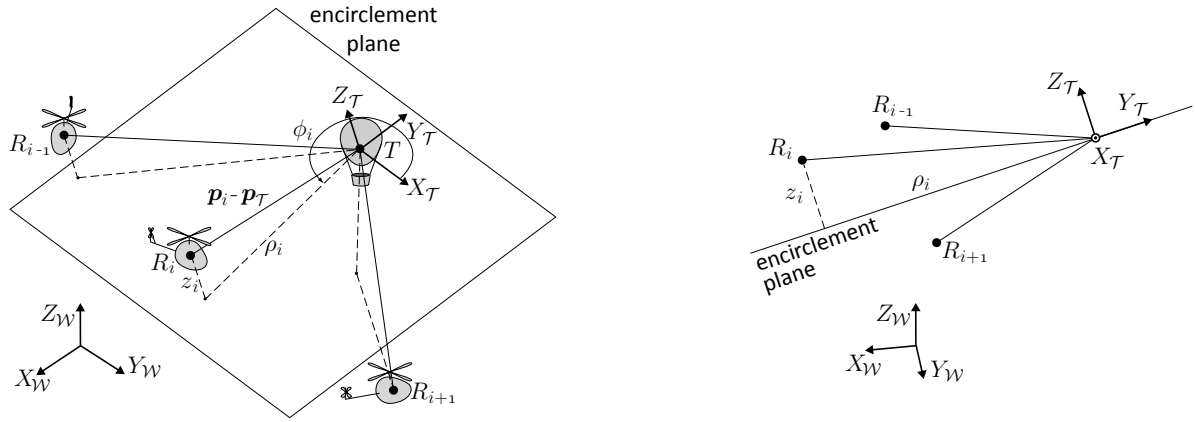

\figPolarThreeD
\centering
\caption{Geometrical setting for the encirclement problem: perspective view (left) and side view (right). The target to be encircled is represented as a balloon whereas the robots are helicopters. Note the cylindrical coordinates and the robot ordering defined by the phase angles.}
\label{fig:cylindrical}
\end{figure*}

Consider a representative point $T$ of the target. The encirclement task requires that $R_1,\dots,R_n$ to converge to a regular circular formation centered at $T$ and lying on a plane passing through $T$, called {\em encirclement plane}, whose orientation is assigned. We consider then a target frame $\calT$ centered at  $T$ and such that the plane $X_\calT$-$Y_\calT$ coincides with the encirclement plane. Since the target may move, and a time-varying orientation may be assigned to the encirclement plane, $\calT$ is in general a moving frame.

A natural formulation of the encirclement problem is obtained by using cylindrical (rather than cartesian) coordinates centered at $T$ as robot configuration variables. In particular, with reference to Fig.~\ref{fig:cylindrical}, let 
\begin{align}
\qv_i=(\rho_i \>\> \phi_i \>\> z_i)^T,\notag
\end{align}
where $\rho_i$ is the distance between $T$ and the orthogonal projection of $R_i$ on the encirclement plane $X_\calT$-$Y_\calT$, $\phi_i$ is the angle between $X_\calT$ and the line joining that projection with $T$, and $z_i$ is the distance between $R_i$ and $X_\calT$-$Y_\calT$. We will call the coordinates $\rho_i$, $\phi_i$, and $z_i$ respectively the {\em radius}, {\em phase}, and {\em height} of the $i$-th robot.

The cylindrical coordinates $\qv_i$ can be easily computed from the cartesian coordinates $\pv_i$. To this end, define the following scalar functions of a generic position vector $\pv = (p_x \>\> p_y \>\> p_z)^T$ 
\begin{align}
 \rho&:\mathbb{R}^3\to \mathbb{R}^+_0,  &\rho(\pv) &= \sqrt{p_x^2+p_y^2} \label{eq:cyl_rho}\\
 \phi&:\mathbb{R}^3\to[0,2\pi),  &\phi(\pv) &= {\rm atan2}(p_y,p_x) \qquad\qquad\label{eq:cyl_phi}\\
 z&:\mathbb{R}^3\to\mathbb{R},      &z(\pv) &=  p_z\label{eq:cyl_z}
\end{align}
and the vector function
\begin{equation}
\qv:\mathbb{R}^3\to \mathbb{R}^3, \qquad  \qv(\pv) = (\rho(\pv) \>\> \phi(\pv) \>\> z(\pv))^T.
\label{eq:qfun}
\end{equation}
We can then write
\begin{align}
 \rho_i &= \rho\left(\Rm_\calT^T(\pv_i - \pv_\calT)\right) \label{eq:cyl_rho_i}\\
 \phi_i &= \phi\left(\Rm_\calT^T(\pv_i - \pv_\calT)\right)\label{eq:cyl_phi_i}\\
  z_i   &= z\left(\Rm_\calT^T(\pv_i - \pv_\calT)\right),\label{eq:cyl_z_i}
\end{align} 
where $\Rm_\calT$ is the rotation matrix from~$\calW$ to $\calT$ and $\pv_\calT$ is the position of $T$ in~$\calW$.
In a compact form, we have
\begin{equation}
\qv_i=\qv\left(\Rm_\calT^T(\pv_i - \pv_\calT)\right).
\label{eq:cylcoords}
\end{equation}

In the following, it is assumed that the robot index $i$ refers to the cyclic counterclockwise ordering of the robots defined by their increasing phase angles at the initial time instant $t_0$ (see Fig.~\ref{fig:cylindrical}).
Note that the labeling defined by the phase ordering at time $t_0$ is never changed: that is, from the viewpoint of the generic robot $i$, the identity of robot $i-1$ ($i+1$) is the same throughout the motion, even if at time $t>t_0$ the actual predecessor (successor) in the phase ordering may be a different robot.

Define the average between the phases of the successor and the predecessor of the $i$-th robot as
\begin{align}
\bar\phi_1 &= \frac{\phi_2 + \phi_n - 2\pi}{2}\label{eq:PhAve1}\\[0.15cm]
\bar \phi_i &= \frac{\phi_{i+1}+\phi_{i-1}}{2} \qquad i=2,\dots,n-1\label{eq:PhAvei}\\[0.15cm]
\bar\phi_n &= \frac{\phi_1 + 2\pi + \phi_{n-1}}{2}.\label{eq:PhAven}
\end{align} 

We have now all the elements to state our basic problem. 

\begin{problem}[Encirclement]

The encirclement task is encoded by the following asymptotic conditions
\begin{align}
\lim_{t \to \infty} \rho_i(t) &= \rho^* \label{eq:EncTask_R}\\
\lim_{t \to \infty} \phi_i(t)  &= \bar\phi_i(t)\label{eq:EncTask_phi}\\
\lim_{t \to \infty} \dot\phi_i(t) &= \omega,\label{eq:EncTask_omega}\\
\lim_{t \to \infty} z_i(t) &= 0\label{eq:EncTask_z}
\end{align}
for all $i=1\ldots n$. Here, $\rho^*$ and $\omega$ are respectively the \emph{encirclement radius} and \emph{encirclement angular speed}, identical for all robots.
\end{problem}

We will consider three different versions of the basic encirclement problem entailed by~(\ref{eq:EncTask_R}--\ref{eq:EncTask_z}).  In all of them, the encirclement radius $\rho^*$ is assigned in advance. The three versions differ in the way the encirclement angular speed $\omega$ in~(\ref{eq:EncTask_omega}) is generated.

\smallskip
\noindent
{\em Encirclement Problem, Version 1:}  A desired angular speed $\omega=\omega^*$ is specified in advance.
\smallskip

In this version, the value of $\omega^*$ typically corresponds to a preferred cruise speed derived, e.g., from energy-related considerations.

\smallskip
\noindent
{\em Encirclement Problem, Version 2:} An {\em escape window} $s$ is assigned, defined as the time interval between two consecutive passings of robots through a generic point of the circle at steady state.
\smallskip

The practical motivation behind Version~2 of the encirclement problem could be to guarantee the effectiveness of the entrapment/escorting task by limiting the possibility that the entrapped target escapes or that the escort is penetrated by a hostile agent. In fact, $s$ may be interpreted as the time in which the circular formation may be violated.

\smallskip
\noindent
{\em Encirclement Problem, Version 3:} The robots must autonomously agree on a certain value of the encirclement angular speed $\omega$. 
\smallskip

Version 3 is interesting from both the theoretical and practical viewpoint since it gives the opportunity to the multi-robot system to autonomously regulate its cruise speed without the need for an external command.

\section{Encirclement Control}
\label{sec:control}

\noindent
We first establish some notation which will be useful for analyzing the proposed control laws. 

Throughout the paper, we denote with $\mat{I}$ the $n\times n$ identity matrix, and with $\mat{C},\mat{D},\mat{H}$ the $n\times n$ {\em circulant} matrices \citep{2006-Gra} with first rows
$\left(0\>\> 1/2\>\> 0\>\> \cdots\>\> 0\>\> 1/2\right)$, $\left(0\>\> -1/2\>\> 0\>\> \cdots\>\> 0\>\> 1/2\right)$, and 
$\left(0\>\> 0\>\> 0\>\> \cdots\>\> 0\>\> -1\right)$, respectively.
Moreover, $\vect{1}$, $\vect{0}$, $\vect{b}$, $\vect{g}$, and $\vect{h}$ are constant vectors whose definition is given in Table~\ref{table:symbols}. Finally, we aggregate the robot phases in ${\vect{\phi}} = \left( \phi_1 \; \cdots \; \phi_n \right)^T$.

We can now define in a compact way three useful vectors: $\bar{\vect{\phi}}$, ${\vect{\Delta}} $, and ${\vect{\delta}}$. The first collects the phase averages $\bar\phi_i$, $i=1,\dots,n$, already defined in~(\ref{eq:PhAve1}--\ref{eq:PhAven}):
\[
\bar{\vect{\phi}} = \left( \bar\phi_1 \>\> \cdots \>\> \bar\phi_n \right)^T = {\mat{C}}{\vect{\phi}} + \vect{b}.
\]
The $i$-th component of the second vector ${\vect{\Delta}}$ is the half-difference between the phases of the successor and the predecessor of robot $i$ (compare with~(\ref{eq:PhAve1}--\ref{eq:PhAven})):
\begin{align}
\Delta_1 &= \frac{\phi_2 - \phi_n + 2\pi}{2} \label{eq:PhHDif1}\\[0.15cm]
\Delta_i &= \frac{\phi_{i+1}-\phi_{i-1}}{2} \qquad i=2,\dots,n-1 \label{eq:PhHDifi}\\[0.15cm]
\Delta_n &= \frac{\phi_1 + 2\pi - \phi_{n-1}}{2} .\label{eq:PhHDifn}
\end{align} 
and the vector itself can be written as
\[
{\vect{\Delta}} = \left( \Delta_1 \>\> \cdots \>\> \Delta_n \right)^T = {\mat{D}} \,{\vect{\phi}} + \vect{g}.
\]
Finally, define the consecutive phase differences 
\begin{align}
\delta_1 &= \phi_1 - \phi_n + 2\pi \label{eq:PhCDif1}\\[0.15cm]
\delta_i &= \phi_{i}-\phi_{i-1} \qquad i=2,\dots,n \label{eq:PhCDifi}
\end{align} 
and the third vector is defined as
\[
{\vect{\delta}} = \left( \delta_1 \>\> \cdots \>\> \delta_n \right)^T = ({\mat{H} + \mat{I}}){\vect{\phi}} + \vect{h}.
\]

We are now ready to address the design of encirclement control laws. The dynamics of the generic robot in cylindrical coordinates is obtained from~(\ref{eq:cylcoords}) and (\ref{eq:KinRobot}) as:
\[
\dqv_i = \Jm_i \left(\dot{\Rm}_\calT^T(\pv_i - \pv_\calT) + \Rm_\calT^T(\uv_i - \dpv_\calT)\right),
\]
where $\Jm_i=\partial \qv/\partial \pv |_{\pv=\pv_i}$ is the Jacobian of the map defined by~(\ref{eq:qfun}), computed at $\pv_i$. Therefore, by letting\footnote{Note that matrix $J_i$ is invertible whenever $\phi_i$ is defined, i.e., unless the $i$-th robot is exactly above the target. This zero-measure case represents a purely theoretical problem, especially considering the presence of noise in the measurements.}
\begin{equation}
\uv_i = \dpv_\calT + 
{\Rm_\calT}
\left(
\Jm_i^{-1} \vv_i - \dot{\Rm}_\calT^T(\pv_i - \pv_\calT)
\right)
\label{eq:feed_lin}
\end{equation}
the dynamics of the robot in cylindrical coordinates become linear and decoupled
\begin{equation}
\dqv_i = \vv_i.
\label{eq:LinDynCylCoords}
\end{equation}
in the new control input $\vv_i=(\dot \rho_i \>\> \dot\phi_i \>\> \dot z_i)^T$.

We have the following straightforward result.

\begin{lemma}\label{res:rho_z}
Letting
\begin{align}
\dot\rho_i   &= k_\rho (\rho^* - \rho_i),  
\label{eq:rhoDyn}\\
\dot z_i &= -k_z z_i, 
\label{eq:zDyn}
\end{align}
with $k_\rho$, $k_z$  positive gains, $\rho_i$ and $z_i$ exponentially converge to $\rho^*$ and $0$, respectively, for any initial condition. \end{lemma}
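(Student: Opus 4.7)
The plan is to exploit the fact that, thanks to the feedback linearization in~(\ref{eq:feed_lin}) and the resulting decoupled dynamics~(\ref{eq:LinDynCylCoords}), the choice of $\dot\rho_i$ and $\dot z_i$ as the first and third components of $\vv_i$ directly prescribes the closed-loop evolution of $\rho_i$ and $z_i$ independently of the phase $\phi_i$ and of what happens for the other robots. Hence the problem reduces to analyzing two scalar, decoupled, autonomous linear ODEs.

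First I would consider the height equation $\dot z_i = -k_z z_i$, which is already in the form of a linear first-order autonomous ODE with negative eigenvalue $-k_z<0$. Its solution is $z_i(t)=z_i(t_0)\,e^{-k_z(t-t_0)}$, which for $k_z>0$ converges exponentially to $0$ for any initial condition $z_i(t_0)\in\mathbb{R}$.

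Next I would handle the radius equation by the standard trick of introducing the error variable $e_{\rho,i}=\rho_i-\rho^*$. Since $\rho^*$ is constant, we have $\dot e_{\rho,i}=\dot\rho_i=-k_\rho e_{\rho,i}$, whose solution is $e_{\rho,i}(t)=e_{\rho,i}(t_0)\,e^{-k_\rho(t-t_0)}$. Positivity of $k_\rho$ yields exponential convergence of $e_{\rho,i}$ to $0$, i.e., of $\rho_i$ to $\rho^*$. Since $\rho_i(t_0)\in\mathbb{R}_0^+$ and the trajectory $\rho_i(t)=\rho^*+(\rho_i(t_0)-\rho^*)e^{-k_\rho(t-t_0)}$ is monotonic from $\rho_i(t_0)$ to $\rho^*\geq 0$, no singularity (i.e., $\rho_i=0$, at which the cylindrical representation breaks down) is encountered during the transient provided $\rho^*>0$, so the result is well-posed.

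There is essentially no obstacle here: the main point of the lemma is to record that, because of the feedback-linearized form~(\ref{eq:LinDynCylCoords}), the two coordinates $\rho_i$ and $z_i$ admit simple scalar proportional controllers that yield exponential convergence, leaving only the phase coordinate $\phi_i$ to be dealt with by the more substantial arguments of the subsequent sections.
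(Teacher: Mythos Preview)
Your proof is correct. The paper does not actually supply a proof for this lemma --- it is stated as a ``straightforward result'' and left without a \texttt{proof} environment --- so your explicit argument (solving the two decoupled scalar linear ODEs and checking exponential decay of the error variables) is exactly the elementary reasoning the paper takes for granted.
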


In other words, all robots will converge to the desired circular trajectory centered at the target and lying on the encirclement plane. 
Note that the evolution of coordinates $\rho_i$ and $z_i$ is not influenced by the motion of the other robots.
In Section~\ref{sec:distance}, we shall modify eq.~\eqref{eq:rhoDyn} to guarantee that a safe inter-robot distance is maintained.

The choice of the second component of the control input $\vv_i$ (i.e., $\dot\phi_i$) depends on which version of the encirclement problem is considered. The three versions are analyzed in detail in the rest of this section. 

\subsection{Encirclement Control, Version 1}
\label{sec:ctrl_ver1}

\noindent
In Version 1 of the encirclement problem, a desired encirclement angular speed $\omega^\ast$ is assigned. Let the second component of the control input $\vv_i$ be defined as
\begin{equation}
\dot\phi_i = \omega^\ast + k_\phi (\bar\phi_i - \phi_i), 
\label{eq:deltaDyn}
\end{equation}
where $k_\phi$ is a positive gain. We have the following result.

\begin{prop}[Controller~1, Desired Angular Speed]
\label{prop:contrOne} 
The control law expressed by~(\ref{eq:feed_lin}) and~(\ref{eq:rhoDyn}), (\ref{eq:zDyn}), (\ref{eq:deltaDyn}) guarantees global exponential convergence of $\rho_i$ to $\rho^*$, $\phi_i$ to $\bar\phi_i$, $\dot\phi_i$ to $\omega^*$, and $z_i$ to $0$, for any choice of $\rho^*$ and $\omega^*$.
\end{prop}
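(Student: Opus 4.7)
The plan is to use the feedback linearization~(\ref{eq:feed_lin}) to reduce the robot dynamics in cylindrical coordinates to the linear decoupled form~(\ref{eq:LinDynCylCoords}), so that the three components of $\vv_i$ can be analyzed separately. The $\rho_i$ and $z_i$ components are directly handled by Lemma~\ref{res:rho_z}, which gives exponential convergence to $\rho^*$ and $0$, respectively. So the substantive work concerns only the phase dynamics produced by~(\ref{eq:deltaDyn}), and ultimately reduces to a linear-algebraic study of a circulant matrix.

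Stacking the phases, the closed-loop phase dynamics takes the compact form
\[
\dot{\vect{\phi}} = \omega^*\vect{1} - k_\phi \vect{e}_\phi,\qquad \vect{e}_\phi := {\vect{\phi}} - \bar{\vect{\phi}} = (\mat{I} - \mat{C}){\vect{\phi}} - \vect{b}.
\]
Differentiating and using $\mat{C}\vect{1}=\vect{1}$ (the rows of $\mat{C}$ sum to one), the term $\omega^*\vect{1}$ is annihilated, yielding the autonomous error equation
\[
\dot{\vect{e}}_\phi = -k_\phi(\mat{I}-\mat{C})\vect{e}_\phi.
\]
I would then verify, using $\mat{C}^T\vect{1}=\vect{1}$ and $\vect{1}^T\vect{b}=0$, that $\vect{1}^T \vect{e}_\phi \equiv 0$ at all times, so the error trajectory is confined to the $(n-1)$-dimensional subspace $V := \{\vect{x}\in\mathbb{R}^n:\vect{1}^T\vect{x}=0\}$.

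The core step is the spectral analysis of the circulant matrix $\mat{I}-\mat{C}$. Since $\mat{C}$ has first row $(0,\tfrac{1}{2},0,\ldots,0,\tfrac{1}{2})$, its eigenvalues are
\[
\lambda_k(\mat{C}) = \tfrac{1}{2}\bigl(e^{2\pi i k/n} + e^{-2\pi i k/n}\bigr) = \cos(2\pi k/n),\qquad k=0,1,\dots,n-1,
\]
so the eigenvalues of $\mat{I}-\mat{C}$ are $1-\cos(2\pi k/n)\ge 0$, with the single zero eigenvalue attained at $k=0$ in the direction $\vect{1}$. Consequently $-k_\phi(\mat{I}-\mat{C})$ is negative definite when restricted to the invariant subspace $V$, and $\vect{e}_\phi$ decays exponentially to $\vect{0}$ with rate at least $k_\phi\bigl(1-\cos(2\pi/n)\bigr)$, proving ${\phi}_i\to\bar\phi_i$ globally and exponentially for every $i$.

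Finally, substituting $\vect{e}_\phi\to\vect{0}$ back into~(\ref{eq:deltaDyn}) gives $\dot\phi_i = \omega^* - k_\phi e_{\phi,i}\to \omega^*$ exponentially, completing all four convergence claims. The only delicate point I foresee is the handling of the zero eigenvalue of $\mat{I}-\mat{C}$: one has to justify that the error already lies in $V$ from the outset (i.e.\ that the offset $\vect{b}$ in the definition of $\bar{\vect{\phi}}$ is compatible with $\vect{1}^T\vect{e}_\phi=0$), otherwise one might erroneously conclude only semi-stability. Once that invariance is established, the conclusion is immediate from standard circulant-matrix diagonalization.
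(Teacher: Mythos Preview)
Your proposal is correct and follows essentially the same route as the paper: reduce to the linear error dynamics $\dot{\vect{e}}_\phi = k_\phi(\mat{C}-\mat{I})\vect{e}_\phi$, observe that $\mat{C}-\mat{I}$ has a single zero eigenvalue along $\vect{1}$ and is otherwise negative, and check that $\vect{1}^T\vect{e}_\phi\equiv 0$ (via $\vect{1}^T(\mat{C}-\mat{I})=\vect{0}^T$ and $\vect{1}^T\vect{b}=0$) so the error lives entirely in the stable subspace. The only cosmetic differences are that you use the opposite sign convention for $\vect{e}_\phi$ relative to the paper's Table~\ref{table:symbols}, and that you compute the eigenvalues $1-\cos(2\pi k/n)$ explicitly from the circulant DFT formula (yielding a concrete rate $k_\phi(1-\cos(2\pi/n))$), whereas the paper simply invokes the graph-Laplacian interpretation of $-(\mat{C}-\mat{I})$ and standard consensus results.
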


\begin{proof}
In Lemma~\ref{res:rho_z} we have already established that $\rho_i$ and $z_i$ exponentially converge to $\rho^*$ and $0$, respectively. In order to prove the rest of the thesis, it is sufficient to show that the phase error vector
\begin{equation}
\vect{e}_{\phi}=\bar{\vect{\phi}}-{\vect{\phi}} = ({\mat{C}} - {\mat{I}}){\vect{\phi}} + \vect{b}
\label{eq:aux1}
\end{equation}
converges to $\vect{0}$. Rewrite (\ref{eq:deltaDyn}) for the multi-robot system as
\begin{equation}
\dot{\vect{\phi}} = \omega^\ast \vect{1} + k_\phi \,\vect{e}_{\phi},
\label{eq:aux2}
\end{equation}
so that the error dynamics is obtained as
\[
\dot{\vect{e}}_{\phi} = ({\mat{C}}- {\mat{I}}) \dot{\vect{\phi}} = 
\omega^*({\mat{C}}- {\mat{I}})\vect{1} + k_\phi({\mat{C}}- {\mat{I}}){\vect{e}}_{\phi}.
\]
Since $-(\mat{C}- \mat{I})$ can be interpreted as the Laplacian matrix of the undirected ring with weights $1/2$, we conclude the proof by following closely the theory of consensus protocol (see, e.g.~\cite{2004-OlfMur}). In particular, note the following facts:

\begin{itemize}

\item $\vect{1}^T ({\mat{C}}- {\mat{I}})= ({\mat{C}}- {\mat{I}})\vect{1} = {\vect{0}}$;

\item ${\rm ker}({\mat{C}}- {\mat{I}}) = {\rm span}\{\vect{1}\}$; 

\item  ${\mat{C}}- {\mat{I}}$ has a single zero eigenvalue and $n-1$ negative real eigenvalues (hence, it is is negative semidefinite).

\end{itemize}

\noindent
The error dynamics becomes then
\[
\dot{\vect{e}}_{\phi} = k_\phi({\mat{C}}- {\mat{I}}){\vect{e}}_{\phi} .
\]
Writing the free evolution of this linear system in spectral form and using the aforementioned properties of  ${\mat{C}}- {\mat{I}}$ it is easy to conclude that 
\[
\lim_{t \to \infty} \vect{e}_{\phi} = (\vect{1}^T{\vect{e}}_{\phi}(0))\vect{1} = 
\left(\vect{1}^T({\mat{C}}- {\mat{I}}){\vect{\phi}}(0) + \vect{1}^T\vect{b}\right)\vect{1} ={\vect{0}},
\]
and that convergence is exponential.
In view of~(\ref{eq:deltaDyn}), this also implies that $\dot \phi_i$ tends exponentially to $\omega^\ast$, and the proof is complete.
\end{proof}

\smallskip
Note that, independently on the value of $n$,  robot $i$ only needs to communicate with robot $i-1$ and $i+1$ (whose identity has been defined at $t_0$) to implement Controller~1.

\subsection{Encirclement Control, Version 2}

\noindent
In Version 2 of the encirclement problem the robots are assigned a steady-state escape window $s$, which would require an asymptotic angular speed $\omega= 2\pi/ n\,s$. 
However, since $n$ is unknown, the robots must compute a decentralized estimate of this number, denoted by $\hat n$.

In particular, each robot computes its own current estimate as $\hat n_i = 2\pi/\Delta_i$, and correspondingly a desired angular speed $\omega_i = 2\pi/ \hat n_i \, s =\Delta_i/s$, with $\Delta_i$ given by~(\ref{eq:PhHDif1}--\ref{eq:PhHDifn}). This is used as a feedforward term in~(\ref{eq:deltaDyn}) in place of $\omega^\ast$, leading to the following control law for the robot phase:
\begin{equation}
\dot\phi_i = \Delta_i/s  + k_\phi (\bar\phi_i - \phi_i). 
\label{eq:deltaDynTwo}
\end{equation}

\begin{prop}[Controller~2, Desired Escape Window] 
The control law expressed by~(\ref{eq:feed_lin}) and~(\ref{eq:rhoDyn}), (\ref{eq:zDyn}), (\ref{eq:deltaDynTwo}) guarantees global exponential convergence of $\rho_i$ to $\rho^*$, $\phi_i$ to $\bar\phi_i$, $\dot\phi_i$ to $2\pi/n\,s$, and $z_i$ to $0$, for any choice of $\rho^*$ and $s$.
\label{prop:contrTwo}
\end{prop}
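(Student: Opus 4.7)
The plan is to mirror the proof of Proposition~\ref{prop:contrOne}, with the extra complication that the feedforward term $\vect{\Delta}/s$ in~\eqref{eq:deltaDynTwo} is now state-dependent rather than the constant $\omega^\ast\vect{1}$ of Controller~1. Lemma~\ref{res:rho_z} already gives $\rho_i\to\rho^\ast$ and $z_i\to 0$, so I would focus entirely on the phase subsystem.

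The key trick I would use is to differentiate~\eqref{eq:deltaDynTwo} once more in time. Since $\vect{\Delta}=\mat{D}\vect{\phi}+\vect{g}$ and $\vect{e}_\phi=(\mat{C}-\mat{I})\vect{\phi}+\vect{b}$ are both affine in $\vect{\phi}$, the constants $\vect{g},\vect{b}$ drop out and I obtain the homogeneous linear system $\ddot{\vect{\phi}}=\mat{M}\,\dot{\vect{\phi}}$ with $\mat{M}=\mat{D}/s+k_\phi(\mat{C}-\mat{I})$. Because both summands are circulant with vanishing row and column sums, so is $\mat{M}$; diagonalizing in the Fourier basis its eigenvalues are $\lambda_k=-(i/s)\sin(2\pi k/n)+k_\phi(\cos(2\pi k/n)-1)$, $k=0,\dots,n-1$. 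Only $\lambda_0=0$ has zero real part, with $\vect{1}$ as both left and right eigenvector, so $\dot{\vect{\phi}}(t)\to\beta\vect{1}$ exponentially, where $\beta=\vect{1}^T\dot{\vect{\phi}}(t)/n$ is conserved thanks to $\vect{1}^T\mat{M}=\vect{0}^T$.

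To identify $\beta$ I would evaluate this conserved quantity at $t_0$: a telescoping computation on~\eqref{eq:PhHDif1}--\eqref{eq:PhHDifn} and~\eqref{eq:PhAve1}--\eqref{eq:PhAven} shows $\sum_i\Delta_i=2\pi$ and $\sum_i(\bar\phi_i-\phi_i)=0$, hence $\vect{1}^T\dot{\vect{\phi}}(0)=2\pi/s$ and $\beta=2\pi/(ns)$, which delivers $\dot\phi_i\to 2\pi/(ns)$ as required. The remaining task is to upgrade this to $\phi_i\to\bar\phi_i$. Since the $\dot\phi_i$ all tend to the common value $\beta$, the consecutive differences $\delta_i$ converge to constants $c_i$ with $\sum_i c_i=2\pi$. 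Substituting $\Delta_i=(c_i+c_{i+1})/2$ and $\bar\phi_i-\phi_i=(c_{i+1}-c_i)/2$ into~\eqref{eq:deltaDynTwo} at the limit and setting $\dot\phi_i=\beta$ yields the cyclic affine recursion
\[
c_{i+1}=\frac{2\beta s-(1-k_\phi s)\,c_i}{1+k_\phi s}.
\]
Since $|(1-k_\phi s)/(1+k_\phi s)|<1$ for all $k_\phi,s>0$, the only $n$-periodic solution is the fixed point $c_i=\beta s=2\pi/n$, forcing equispacing and thus $\vect{e}_\phi\to\vect{0}$; the exponential rate is inherited from the spectral decay of $\mat{M}$ on $\vect{1}^{\perp}$.

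The main obstacle I anticipate is this last uniqueness step. Unlike in Proposition~\ref{prop:contrOne}, the error dynamics $\dot{\vect{e}}_\phi=(\mat{C}-\mat{I})\dot{\vect{\phi}}$ does \emph{not} collapse to a pure consensus law, because $(\mat{C}-\mat{I})\vect{\Delta}/s$ does not vanish identically; one cannot read off $\vect{e}_\phi\to\vect{0}$ from a single Laplacian spectral argument and must instead exploit the stricter structure of relative equilibria via the scalar recursion above.
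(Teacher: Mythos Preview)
Your argument is correct, but the obstacle you anticipate in the last paragraph is not real, and the paper exploits exactly this to give a much shorter proof. The error dynamics \emph{does} close autonomously in $\vect{e}_\phi$: since $\mat{C}-\mat{I}$ and $\mat{D}$ are circulant they commute, and one checks directly that $(\mat{C}-\mat{I})\vect{g}=\mat{D}\vect{b}$, so
\[
\dot{\vect{e}}_\phi=(\mat{C}-\mat{I})\dot{\vect{\phi}}
=\tfrac{1}{s}(\mat{C}-\mat{I})(\mat{D}\vect{\phi}+\vect{g})+k_\phi(\mat{C}-\mat{I})\vect{e}_\phi
=\bigl(k_\phi(\mat{C}-\mat{I})+\tfrac{1}{s}\mat{D}\bigr)\vect{e}_\phi
=\mat{M}\,\vect{e}_\phi,
\]
with the very same $\mat{M}$ you obtained for $\dot{\vect{\phi}}$. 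The spectral analysis you carried out then gives $\vect{e}_\phi\to\vect{0}$ in one stroke (using $\vect{1}^T\vect{e}_\phi(t_0)=0$), from which equispacing $\delta_i\to 2\pi/n$ and hence $\dot\phi_i=\Delta_i/s+k_\phi e_{\phi,i}\to 2\pi/(ns)$ follow immediately.

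Your route---differentiate once more to get $\ddot{\vect{\phi}}=\mat{M}\dot{\vect{\phi}}$, identify the conserved average $2\pi/(ns)$, and then pin down the equilibrium spacings via the contractive cyclic recursion---is valid and self-contained, and it has the minor virtue of computing the limiting angular speed \emph{before} establishing equispacing. But it duplicates work: the recursion argument is replaced in the paper by the single algebraic identity $(\mat{C}-\mat{I})\vect{g}=\mat{D}\vect{b}$, which is what makes the $\vect{\Delta}/s$ feedforward ``disappear'' into $\mat{M}\vect{e}_\phi$ rather than leaving a residual term.
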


\begin{proof}
 Let $f=1/s$ and write~(\ref{eq:deltaDynTwo}) for the multi-robot system as
\[
\dot{\vect{\phi}} = f {\vect{\Delta}} + k_\phi (\bar{\vect{\phi}}-{\vect{\phi}}).
\]
The error dynamics is
\[
\dot{\vect{e}}_{\phi} =  ({\mat{C}}- {\mat{I}}) \dot{\vect{\phi}} = 
f({\mat{C}}- {\mat{I}})({\mat{D}}{\vect{\phi}} + \vect{g}) + k_\phi({\mat{C}}- {\mat{I}}){\vect{e}}_{\phi}.
\]
Using the fact that ${\mat{C}}- {\mat{I}}$ and ${\mat{D}}$ commute, and rearranging terms, we get
\[
\dot{\vect{e}}_{\phi} = (k_\phi({\mat{C}}- {\mat{I}}) + f{\mat{D}}){\vect{e}}_{\phi} + f(({\mat{C}}- {\mat{I}})\vect{g} - {\mat{D}}\vect{b}),
\]
and since $({\mat{C}}- {\mat{I}})\vect{g} - {\mat{D}}\vect{b} = {\vect{0}}$ we conclude that
\[
\dot{\vect{e}}_{\phi} = (k_\phi({\mat{C}}- {\mat{I}}) + f{\mat{D}}){\vect{e}}_{\phi}.
\]
It is easy to verify that matrix $k_\phi({\mat{C}}- {\mat{I}}) + f{\mat{D}}$ has exactly the same properties\footnote{It is a differently weighted Laplacian of the undirected ring.} of ${\mat{C}}- {\mat{I}}$ which were used in the proof of Proposition~\ref{prop:contrOne}. Therefore, we can once again conclude that ${\vect{e}}_{\phi}$ converges to ${\vect{0}}$, and this automatically implies that $\hat{\phi}_i$ converges to $2\pi/n$ and $\dot{\phi}_i$ to $2\pi/ns$. Note that all variables converge exponentially.
\end{proof}

\smallskip
The communication requirements of Controller~1 are the same as Controller~2.

\subsection{Encirclement Control, Version 3}

\noindent
In Version 3 of the encirclement problem the robots must autonomously agree on a common value of the angular speed $\omega$. 
To this end, we propose the following {\em dynamic} control law for controlling the phase of the $i$-th robot:
\begin{align}
\dot\omega_i &= k_\omega (\bar\phi_i-\phi_i), \qquad \omega_i(t_0) = 0
\label{eq:omegaDynNoRefSing}\\
\dot\phi_i &= \omega_i + k_\phi (\bar\phi_i-\phi_i) + \xi_i, 
\label{eq:deltaDynNoRefSing}
\end{align}
where $k_\omega,k_\phi$ are positive gains and $\xi_i$ is a nonnegative constant forcing term. Denote by $\bar \xi=\sum_{i=1}^n \xi/n$ the average of the forcing terms  over the multi-robot system.

To prove that~\eqref{eq:omegaDynNoRefSing}--\eqref{eq:deltaDynNoRefSing} achieve the desired control objective we need a preliminary result.

\begin{lemma}\label{lem:eigen}
Consider a $2n\times 2n$ matrix of the form
\[
{\mat{A}} = \left(\begin{array}{cc}
{\vect{0}} &  k_1{\mat{I}} \\
{\mat{B}} &  k_2{\mat{B}}
\end{array}\right)
\]
where ${\vect{0}}$  is the $n\times n$ null matrix, ${\mat{I}}$ is the  $n\times n$ identity matrix, ${\mat{B}}$ is a  $n\times n$ matrix, and $k_1,k_2$ are nonzero real numbers. For any  eigenvalue $\mu$ of ${\mat{B}}$ with eigenvector ${\vect{u}}$, the roots  $\lambda_{1,2}$ of $\lambda^2 - k_2\mu \lambda - k_1\mu$, are eigenvalues of ${\mat{A}}$ with eigenvectors 
$\left(k_1{\vect{u}}^T  \;\;
\lambda_{1,2}{\vect{u}}^T\right)^T$.
\end{lemma}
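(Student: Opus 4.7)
The plan is a direct verification: I would simply take the candidate eigenvector $v = (k_1 u^T \; \lambda u^T)^T$, compute $Av$ block by block, and check that the condition $Av = \lambda v$ reduces exactly to the quadratic $\lambda^2 - k_2\mu\lambda - k_1\mu = 0$ whenever $u$ is an eigenvector of $B$ with eigenvalue $\mu$.

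Concretely, I would partition the multiplication along the block rows. The top block of $Av$ is $k_1 I \cdot \lambda u = k_1 \lambda u$, which already matches the top block of $\lambda v$ without using any property of $B$. The bottom block of $Av$ is $B(k_1 u) + k_2 B(\lambda u) = (k_1 + k_2\lambda) B u$, and here I would invoke the eigenvalue relation $Bu = \mu u$ to get $(k_1 + k_2\lambda)\mu\, u$. Equating this with the bottom block of $\lambda v$, namely $\lambda^2 u$, and cancelling $u$ (which is nonzero because it is an eigenvector), gives precisely $\lambda^2 - k_2\mu\lambda - k_1\mu = 0$. So the two roots $\lambda_{1,2}$ of this quadratic yield two eigenpairs $(\lambda_{1,2}, (k_1 u^T \; \lambda_{1,2} u^T)^T)$ of $A$, as claimed.

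The only minor subtlety worth flagging is that the candidate eigenvector is genuinely nonzero: the top block $k_1 u$ is nonzero since $k_1 \neq 0$ by hypothesis and $u \neq 0$ by assumption, so there is no degenerate case to worry about. There is no real obstacle here; the statement is essentially a block-matrix identity dressed up as a lemma, and the proof is a three-line computation. The reason to isolate it as a lemma, I expect, is that in the upcoming Proposition~3 the matrix governing the coupled $(\vect{\phi}, \vect{\omega})$ error dynamics of Controller~3 will have exactly this block form with $\mat{B} = \mat{C} - \mat{I}$, and the lemma will reduce the spectral analysis of a $2n \times 2n$ problem to solving one scalar quadratic per eigenvalue of the known Laplacian-like matrix $\mat{C} - \mat{I}$.
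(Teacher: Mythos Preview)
Your proof is correct and follows essentially the same route as the paper: both arguments reduce the block eigenvalue equation to the two scalar conditions $k_1\lambda u = \lambda(k_1 u)$ and $(k_1+k_2\lambda)\mu u = \lambda^2 u$, the second of which is exactly the quadratic in the statement. The paper first writes the general eigenvector system to \emph{derive} the ansatz $(k_1\vect{v}^T\;\lambda\vect{v}^T)^T$ before specializing to $\vect{v}=\vect{u}$, whereas you start from the ansatz and verify it directly, but the underlying computation is identical; your remark that $k_1\neq 0$ ensures the candidate eigenvector is nonzero is a nice touch the paper omits.
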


\begin{proof} 
In view of the structure of ${\mat{A}}$, vector $\left( {\vect{v_1}}^T {\vect{v_2}}^T\right)^T$ is an eigenvector of ${\mat{A}}$ associated to $\lambda$ if
\begin{align}
k_1{\vect{v_2}} &= \lambda {\vect{v_1}}\label{eq:eigenOne}\\
{\mat{B}}{\vect{v_1}} + k_2{\mat{B}}{\vect{v_2}} &= \lambda {\vect{v_2}}.\label{eq:eigenTwo}
\end{align}
Eq.~(\ref{eq:eigenOne}) means that eigenvectors associated to $\lambda$ must have the structure $\left(k_1{\vect{v}}^T  \;\;
\lambda {\vect{v}}^T\right)^T$. Setting ${\vect{v}}={\vect{u}}$ in this structure, and substituting into (\ref{eq:eigenTwo}) we obtain
$k_1\mu{\vect{u}} + k_2\mu\lambda{\vect{u}} = \lambda^2{\vect{u}}$. The thesis follows. 
\end{proof}

\medskip
The convergence result can now be established.

\begin{prop}[Controller~3, Angular Speed Consensus]
\label{prop:omegaCons} 
The control law expressed by~(\ref{eq:feed_lin}) and~(\ref{eq:rhoDyn}), (\ref{eq:zDyn}), (\ref{eq:omegaDynNoRefSing}--\ref{eq:deltaDynNoRefSing}) guarantees global exponential convergence of $\rho_i$ to $\rho^*$, $\phi_i$ to $\bar\phi_i$, $\dot\phi_i$ to $\bar \xi$, and $z_i$ to $0$, for any choice of $\rho^*$.
\end{prop}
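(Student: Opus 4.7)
The plan is to dispose of $\rho_i$ and $z_i$ immediately through Lemma~\ref{res:rho_z}, which already gives exponential convergence to $\rho^*$ and $0$, and then focus entirely on the coupled phase/velocity dynamics induced by~(\ref{eq:omegaDynNoRefSing})--(\ref{eq:deltaDynNoRefSing}). Collecting the scalar equations into vectors and differentiating $\vect{e}_\phi=(\mat{C}-\mat{I})\vect{\phi}+\vect{b}$ along the closed loop, I obtain the affine linear system
\begin{align*}
\dot{\vect{\omega}} &= k_\omega\,\vect{e}_\phi,\\
\dot{\vect{e}}_\phi &= (\mat{C}-\mat{I})\vect{\omega} + k_\phi(\mat{C}-\mat{I})\vect{e}_\phi + (\mat{C}-\mat{I})\vect{\xi},
\end{align*}
which has exactly the block structure of Lemma~\ref{lem:eigen} (up to the constant forcing), with $\mat{B}=\mat{C}-\mat{I}$, $k_1=k_\omega$, $k_2=k_\phi$. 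My strategy is then to locate the equilibrium, kill the forcing via a change of variables, and read off stability from Lemma~\ref{lem:eigen}.

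Next I would exploit two conservation laws built into the problem. Since $\mat{C}$ is symmetric with unit row sums and $\vect{1}^T\vect{b}=0$, one has $\vect{1}^T\vect{e}_\phi\equiv 0$, which together with $\vect{\omega}(t_0)=\vect{0}$ forces $\vect{1}^T\vect{\omega}(t)\equiv 0$. Imposing $\dot{\vect{\omega}}=\vect{0}$ and $\dot{\vect{e}}_\phi=\vect{0}$ consistently with this zero-mean constraint singles out the unique candidate equilibrium $\vect{e}_\phi^\ast=\vect{0}$, $\vect{\omega}^\ast=\bar\xi\,\vect{1}-\vect{\xi}$, at which~(\ref{eq:deltaDynNoRefSing}) gives $\dot\phi_i=\bar\xi$ for every $i$. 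Shifting via $\tilde{\vect{\omega}}=\vect{\omega}-\vect{\omega}^\ast$ and using $(\mat{C}-\mat{I})\vect{1}=\vect{0}$ absorbs the forcing term, reducing the error dynamics to a homogeneous linear system $\dot{\vect{y}}=\mat{A}\vect{y}$ with $\vect{y}=\bigl(\tilde{\vect{\omega}}^T\;\vect{e}_\phi^T\bigr)^T$ and $\mat{A}$ in the form of Lemma~\ref{lem:eigen}. The trajectory is then confined to the invariant subspace $V=\{\vect{y}:\vect{1}^T\tilde{\vect{\omega}}=\vect{1}^T\vect{e}_\phi=0\}$.

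The decisive step is the spectral analysis of $\mat{A}|_V$. By Lemma~\ref{lem:eigen}, the $2n$ eigenvalues of $\mat{A}$ are the roots of $\lambda^2-k_\phi\mu\lambda-k_\omega\mu=0$ as $\mu$ ranges over the spectrum of $\mat{C}-\mat{I}$. The eigenvalue $\mu=0$, with eigenvector $\vect{1}$, produces a double root $\lambda=0$ whose associated eigen- and generalized-eigendirections in $\mathbb{R}^{2n}$ have their $\tilde{\vect{\omega}}$- or $\vect{e}_\phi$-blocks in $\mathrm{span}\{\vect{1}\}$ and are therefore excluded by $V$. For each of the remaining $n-1$ strictly negative eigenvalues $\mu$ of $\mat{C}-\mat{I}$, a short discussion of the discriminant $k_\phi^2\mu^2+4k_\omega\mu$ shows that both roots $\lambda=(k_\phi\mu\pm\sqrt{k_\phi^2\mu^2+4k_\omega\mu})/2$ have strictly negative real part: in the complex case the real part is $k_\phi\mu/2<0$, and in the real case $\sqrt{k_\phi^2\mu^2+4k_\omega\mu}<-k_\phi\mu$ because $4k_\omega\mu<0$. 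Thus $\mat{A}|_V$ is Hurwitz, $(\tilde{\vect{\omega}},\vect{e}_\phi)\to\vect{0}$ exponentially, and substituting back into~(\ref{eq:deltaDynNoRefSing}) yields $\dot\phi_i\to\bar\xi$.

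The main obstacle I anticipate is precisely this invariance/spectrum bookkeeping: the full matrix $\mat{A}$ is \emph{not} Hurwitz, so the clean consensus-style argument used for Propositions~\ref{prop:contrOne} and~\ref{prop:contrTwo} does not transfer verbatim. One must explicitly pair the two zero eigenvalues of the $2n\times 2n$ system with the two conserved quantities $\vect{1}^T\vect{\omega}\equiv 0$ and $\vect{1}^T\vect{e}_\phi\equiv 0$, and verify that the remaining $2(n-1)$ eigenvalues lie in the open left half-plane for every positive choice of $k_\omega$ and $k_\phi$ and every $n$, which is exactly what the discriminant discussion above achieves.
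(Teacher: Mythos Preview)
Your proposal is correct and follows essentially the same approach as the paper: your shifted variable $\tilde{\vect{\omega}}=\vect{\omega}+\vect{\xi}-\bar\xi\vect{1}$ is precisely the paper's angular-frequency error $\vect{e}_\omega$, the block matrix $\mat{A}$ is the paper's $\mat{\tilde{A}}$, and both proofs invoke Lemma~\ref{lem:eigen} on $\mat{B}=\mat{C}-\mat{I}$ and then pair the two zero eigenvalues with the invariance of $\{\vect{1}^T\vect{e}_\omega=\vect{1}^T\vect{e}_\phi=0\}$. Your discriminant case split (real vs.\ complex roots) and your explicit mention of the \emph{generalized} eigendirection at $\lambda=0$ are in fact slightly more careful than the paper's corresponding one-line statements.
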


\begin{proof}
Let  ${\vect{\omega}}=\left( \omega_1 \; \cdots \; \omega_n \right)^T$, ${\vect{\xi}} = \left( \xi_1 \; \cdots \; \xi_n \right)^T$ and define the angular frequency error (the reason for the name will be clear at the end of the proof) as
\[
{\vect{e}}_{\omega} = {\vect{\omega}} + {\vect{\xi}} - \bar\xi \vect{1}.
\] 

Writing (\ref{eq:omegaDynNoRefSing}), (\ref{eq:deltaDynNoRefSing}) for the multi-robot system we obtain
\begin{align}
\dot{\vect{\omega}} &= k_\omega (\bar{\vect{\phi}}-{\vect{\phi}}), \qquad {\vect{\omega}}(t_0) = {\vect{0}}\notag\\
\dot{\vect{\phi}} &= {\vect{\omega}} + k_\phi (\bar{\vect{\phi}}-{\vect{\phi}}) +  {\vect{\xi}}.\notag
\end{align}
Now compute the dynamics of the error ${\vect{e}} = ({\vect{e}}_{\phi}^T \; {\vect{e}}_\omega^T)^T$
\begin{align}
\dot{\vect{e}} &=  \left(\begin{array}{c}
k_\omega{\vect{e}}_{\phi} \\
({\mat{C}}-{\mat{I}})({\vect{\omega}}+{\vect{u}}) + k_\phi({\mat{C}}-{\mat{I}}){\vect{e}}_{\phi}
\end{array}\right)=\notag \\
&= \left(\begin{array}{cc}
       {\vect{0}}      & k_\omega{\mat{I}} \\
{\mat{C}}-{\mat{I}}\;\; & k_\phi({\mat{C}}-{\mat{I}})
\end{array}\right)\left(\begin{array}{c}
{\vect{e}}_{\omega} \\
{\vect{e}}_{\phi} 
\end{array}\right)={\mat{\tilde{A}}}{\vect{e}},\notag
\end{align}
where we have used $\bar \xi = \vect{1}^T{\vect{\xi}}/n$ and $({\mat{C}}-{\mat{I}})\vect{1} = {\vect{0}}$. In view of Lemma~\ref{lem:eigen}, the eigenvalues of ${\mat{\tilde{A}}}$ are computed by solving $\lambda^2 - k_\phi\mu \lambda - k_\omega\mu=0$, with $\mu$ eigenvalue of $\mat{C} - \mat{I}$.
We obtain thus
\[
\lambda_{1,2} (\mu) = \frac{1}{2}\left(k_\phi\mu\pm\sqrt{k_\phi^2\mu^2+4k_\omega\mu}\right).
\]
We recall (see the proof of Proposition~\ref{prop:contrOne}) that $\mat{C} - \mat{I}$ has a single zero eigenvalue and $n-1$ negative real eigenvalues. In correspondence to $\mu=0$ we immediately get $\lambda_{1,2}(0)=0$, whereas a simple reasoning shows that for any other eigenvalue $\mu<0$ we get $\lambda_{1,2}(\mu)<0$.

To conclude the proof, we show that the error $\vect{e}$ is always orthogonal to the eigenspace of $\mat{\tilde{A}}$ associated to the double eigenvalue in $0$. This is a consequence of three facts. First, it may be readily verified that such eigenspace is $A_0={\rm span}\{(\vect{1}^T \; {\vect{0}}^T)^T,({\vect{0}}^T \; \vect{1}^T)^T\}$. Second, the orthogonal complement $A_0^\perp$ of $A_0$ is an invariant set for the error dynamics, because for any $\vect{w}_\perp\in A_0^\perp$ we have
\[
\begin{pmatrix}
\vect{1}^T & \vect{0}^T\\
\vect{0}^T & \vect{1}^T
\end{pmatrix}
\mat{\tilde{A}}\vect{w}_\perp = 
\begin{pmatrix}
\vect{0}^T & k_\omega\vect{1}^T\\
\vect{0}^T & \vect{0}^T
\end{pmatrix}\vect{w}_\perp =
\begin{pmatrix}
0\\
0
\end{pmatrix}, 
\]
where we exploited twice the fact that $\vect{1}^T(\mat{C}-\mat{I})=\vect{0}$.
Finally, $\vect{e}(t_0)$ belongs to $A_0^\perp$ by construction, being both $\vect{1}^T{\vect{e}}_{\omega}(t_0)=0$ and $\vect{1}^T{\vect{e}}_{\phi}(t_0) = 0$.

Wrapping up, the error ${\vect{e}} = ({\vect{e}}_{\phi}^T \; {\vect{e}}_\omega^T)^T$ converges to zero. The convergence of ${\vect{e}}_{\phi}$ to zero implies the convergence of $\vect{\phi}$ to $\bar{\vect{\phi}}$, whereas the convergence of ${\vect{e}}_{\omega}$ to zero implies that ${\vect{\omega}} + {\vect{\xi}}$ converges to $\bar\xi \vect{1}$, i.e., that $\dot\phi_i$ converges to $\bar \xi$ (see~\eqref{eq:deltaDynNoRefSing}). Once again, all variables converge exponentially.
\end{proof}

\smallskip
As the previous control laws, also Controller~3 can be implemented on robot $i$ provided that the phases of robot $i-1$ and $i+1$ are available via communication.

An interesting scenario for Controller~3 is considered in the following
\begin{corollary} 
Assume that all forcing terms $\xi_i$ in~(\ref{eq:deltaDynNoRefSing}) are zero, with a single exception $\xi_k \neq 0$. Then, the $k$-th robot acts as a leader by imposing $\xi_k/n$ as encirclement angular speed to the whole multi-robot system.
\end{corollary}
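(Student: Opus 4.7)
The plan is to apply Proposition~\ref{prop:omegaCons} directly. That proposition asserts that, under Controller~3, for every robot $i$ the derivative $\dot\phi_i$ converges exponentially to the average forcing term $\bar\xi = \frac{1}{n}\sum_{j=1}^{n}\xi_j$, regardless of how the $\xi_j$'s are distributed among the robots.

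First I would substitute the corollary's hypothesis into this average: since $\xi_j = 0$ for every $j \neq k$, the sum collapses to the single nonzero term and $\bar\xi = \xi_k/n$. Then, invoking Proposition~\ref{prop:omegaCons}, for every $i = 1, \ldots, n$ the angular speed $\dot\phi_i(t)$ converges exponentially to $\xi_k/n$. This is precisely the statement that $R_k$ imposes $\xi_k/n$ as the encirclement angular speed for the whole multi-robot system, which justifies calling $R_k$ a leader.

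There is essentially no obstacle to overcome, because the corollary is a one-line consequence of an already-established proposition. I would therefore keep the argument brief and devote the remaining space to interpretation: the ``leader'' label reflects the fact that $R_k$ is the only robot injecting a nonzero feedforward command, while the coupled consensus dynamics of~\eqref{eq:omegaDynNoRefSing}--\eqref{eq:deltaDynNoRefSing} propagate this command around the ring so that all robots eventually share the same steady-state angular speed. The characteristic $1/n$ scaling arises because the leader's bias is averaged with the zero contributions of the $n-1$ followers; if a faster cruise speed were desired, $R_k$ would simply need to scale $\xi_k$ by its own local estimate of $n$, obtained, e.g., as in Controller~2 via $2\pi/\Delta_k$.
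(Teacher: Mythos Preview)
Your proposal is correct and matches the paper's intent: the corollary is stated in the paper without proof, as an immediate consequence of Proposition~\ref{prop:omegaCons}, and your argument of substituting $\xi_j=0$ for $j\neq k$ into $\bar\xi$ is exactly the one-line observation the paper implicitly relies on. The interpretive remarks you add are reasonable but optional.
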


\subsection{Decentralized Estimation of the Global Quantities} 
\label{Sect:DecEst}

\noindent
All the three proposed controllers hinge upon the feedback transformation~(\ref{eq:feed_lin}) to linearize and decouple the robot dynamics in cylindrical coordinates. To compute such transformation, each robot should know the quantities $(\pv_\calT,\dpv_\calT)$  and $(\Rm_\calT$, $\dot{\Rm}_\calT)$. While the first two (respectively, position and velocity of the target point) can in principle be measured or reconstructed by on-board sensors, the last two are related to the desired orientation of the encirclement plane and, as such, are specified by the task.
Note that the robots are not required to express the estimated global quantities in an absolute frame of reference, being sufficient the agreement of all robots on those quantities in a relative frame. Once again, absolute measurements are not required, as also proven in the experimental section.

We consider here the most challenging case, in which only one of the robots is {\em informed} (i.e., knows the above quantities), either by direct measurement or as part of the task description. In order to propagate the necessary information to  the remaining $n-1$ robots of the group, we adopt a decentralized estimator based on the consensus tracking algorithm proposed in~\cite{2007c-Ren}.

Denote with $l$ the index of the informed robot that knows $\pv_\calT$, $\dpv_\calT$, $\Rm_\calT$, $\dot{\Rm}_\calT$, and with $\eta$ the generic scalar component of these vector/matrix quantities.
The $i$-th robot computes an estimate $\hat{\eta}_i$ of $\eta$ by using the following algorithm:
\begin{align}
\dot{\hat{\eta}}_i =
\begin{cases}
\dot{\eta} + k_\eta\left(\eta - \hat{\eta}_i\right) & i = l\\
{\rm ave}_{j\in\calN_i} (\dot{\hat{\eta}}_j)
+ k_\eta{\rm ave}_{j\in\calN_i} (\hat{\eta}_j), & i \neq l\
\end{cases}
\label{eq:cons_track}
\end{align}
where $k_\eta$ is a positive constant, and ${\rm ave}_{j_\in\calN_i}(\cdot)$ returns the average of the estimates of the neighbors of robot~$i$.
The following result holds.

\begin{prop}
If the communication graph remains connected, the multi-robot system can achieve decentralized estimation of any time-varying quantity $\eta$ known by one robot using the algorithm~\eqref{eq:cons_track}; i.e., $\hat{\eta}_i$ globally converges to $ \eta$, for $i=1\ldots n$.  
\end{prop}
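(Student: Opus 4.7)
The plan is to work with the estimation errors $\tilde\eta_i = \hat\eta_i - \eta$, $i=1,\ldots,n$, and show that the stacked vector $\tilde{\boldsymbol\eta} = (\tilde\eta_1 \;\cdots\; \tilde\eta_n)^T$ converges globally (indeed exponentially) to $\vect{0}$ for any time profile of the scalar reference $\eta(t)$ and any initial condition. Since the four quantities $\pv_\calT, \dpv_\calT, \Rm_\calT, \dot{\Rm}_\calT$ enter~\eqref{eq:feed_lin} linearly through their scalar components, establishing the result for a generic scalar $\eta$ suffices.

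The first step is to isolate the informed robot $l$. Substituting $\hat\eta_l = \tilde\eta_l + \eta$ into the first branch of~\eqref{eq:cons_track} immediately yields the decoupled error equation
\[
\dot{\tilde\eta}_l = -k_\eta\,\tilde\eta_l,
\]
so $\tilde\eta_l(t)\to 0$ exponentially at rate $k_\eta$ regardless of how $\eta$ evolves. Hence, after a short transient, robot $l$ acts as an accurate pinned source of both $\eta$ and, through its update, $\dot\eta$. The second step is to rewrite the second branch of~\eqref{eq:cons_track} in error coordinates. Because the average of a quantity that is constant in $j$ returns the quantity itself, the explicit $\eta$- and $\dot\eta$-terms cancel between the two sides: this is precisely the role of including the feedforward average $\mathrm{ave}_{j\in\calN_i}(\dot{\hat\eta}_j)$. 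What remains is an autonomous linear system
\[
\dot{\tilde{\boldsymbol\eta}} = -\mat{M}(t)\,\tilde{\boldsymbol\eta},
\]
where $\mat{M}(t)$ has the structure of a (normalized) graph Laplacian of the current communication topology, augmented by a positive pinning contribution on the diagonal entry indexed by~$l$. This is exactly the structure analyzed in the consensus-tracking framework of~\cite{2007c-Ren}. Under the standing assumption that the communication graph remains connected, $\mat{M}(t)$ is uniformly positive definite, and exponential stability of $\tilde{\boldsymbol\eta}=\vect{0}$ follows either from a direct spectral argument in the fixed-topology case, or via the common Lyapunov function $V(\tilde{\boldsymbol\eta}) = \tfrac{1}{2}\tilde{\boldsymbol\eta}^T\tilde{\boldsymbol\eta}$ when the topology switches among connected graphs.

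The main obstacle I expect is the interaction between the \emph{time-varying reference} $\eta(t)$ and the \emph{time-varying neighbor sets} $\calN_i$. Without the feedforward term $\mathrm{ave}_{j\in\calN_i}(\dot{\hat\eta}_j)$, the estimates of non-informed robots would only converge to a bounded neighborhood of $\eta$ whose size scales with $\|\dot\eta\|_\infty$; the derivation above must therefore be carried out carefully to verify that this term removes the forcing exactly. Once the residual system is truly autonomous and of pinned-consensus form, the switching-topology aspect can be absorbed by the common Lyapunov argument, whose derivative along trajectories is uniformly negative thanks to the connectivity hypothesis combined with the presence of at least one pinned node. A separate (minor) point to check is that the cascade structure—informed node converging exponentially, driving the pinning term seen by the remaining nodes—does not introduce any finite escape, which follows from linearity and boundedness of $\mat{M}(t)$.
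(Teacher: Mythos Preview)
Your approach coincides with the paper's: both reduce the claim to the consensus-tracking result of \cite{2007c-Ren}, relying on the fact that connectivity of the communication graph guarantees a directed spanning tree rooted at the informed robot~$l$. The paper's own proof is in fact just that two-line citation, so your error-coordinate rewrite and pinned-Laplacian interpretation already go beyond what the authors provide.
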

\begin{proof}
The adjacency graph underlying the problem topology contains always a directed spanning tree with robot~$1$ as unique root. Then, the convergence directly follows from the proof of the consensus tracking algorithm presented in~\cite{2007c-Ren}.
\end{proof}

\medskip

To apply~\eqref{eq:cons_track}, each robot should in principle know the time derivatives of the estimates of its neighbors. These quantities are  needed to compute the feedforward term for tracking the time-varying signal $\eta$. In a practical (necessarily discrete-time) implementation, the derivatives can be numerically computed using the previous values of the estimates.

We emphasize that, as for any decentralized control strategy that relies on recursive estimation of global quantities, successful performance relies on the existence of a sufficient time-scale separation between the dynamics of the consensus (i.e., of the estimation error) and that of the controller. This assumption is generally satisfied for ground vehicles, but may be more critical for other kinds of robots (e.g., UAVs) that need fast control response.

\subsection{Communication and Scalability} 

\noindent
It has been shown above that the proposed estimation-control scheme works provided that the communication graph remains connected over time.
Indeed, the communication graph may even reduce to a \emph{tree}, i.e., it may contain as little as $n-1$ edges.

In particular, the three proposed encirclement controllers require that the $i$-th robot knows the phases of the $(i+1)$-th and $(i-1)$-th robots. Since the communication graph is connected, this can always be guaranteed using multi-hop communication~\citep{2009-GuiDaiCim}. Therefore:

\begin{itemize}

\item the number of messages sent/received by each robot per unit of time is constant, regardless of the number $n$ of robots;

\item the total number of messages exchanged by the whole multi-robot system per unit of time is linear in the number $n$ of 
robots. 

\end{itemize}

As for the estimation part, the number of robots that need to know the global quantities $(\pv_\calT,\dpv_\calT)$  and $(\Rm_\calT$, $\dot{\Rm}_\calT)$, either by direct measurement or as part of the task description, is also $O(1)$. In particular, in the proposed approach it is sufficient that a single robot is informed.

Altogether, the above remarks indicate that the proposed approach for encirclement scales well with the cardinality $n$ of the multi-robot system, which --- we emphasize --- is unknown to the robots.

\section{Maintaining a Safe Distance}
\label{sec:distance}

\noindent
The objective of this section is to show how the previously described control approach can be extended to guarantee that the moving robots  never get closer to each other than a specified distance. This can be used for avoiding collisions among robots during the encirclement task.

In particular, we shall refer in the following
to Controller~1. We preliminary prove a {\em phase preservation} property which will be instrumental in deriving the main result.  

\subsection{Phase Preservation Property}
\label{sec:phase_preservation}

Proposition \ref{prop:contrOne} implies that under Controller~1 the robot phases at steady state are in the same order as the initial phases (actually, the same is true under Controllers 2 and 3, as implied by Propositions~\ref{prop:contrTwo} and~\ref{prop:omegaCons}, respectively). The next result states that along the trajectories of (\ref{eq:deltaDyn}) the initial ordering is actually maintained at {\em all} instants of time.

\begin{prop}\label{prop:phaseOrderOne}
Consider the phase dynamics (\ref{eq:deltaDyn}) and initial conditions $\delta_i(t_0)>0$, $i=1,\ldots,n$. Then: 
\begin{enumerate}
\item $\delta_i(t)>0$, $i=1,\ldots,n$, for all $t\geq t_0$;
\item the lower-bounding signal 
\begin{equation}
\delta_{\rm min}(t)=\min_{i}\delta_i(t)
\label{eq:delta_min}
\end{equation}
has the following properties:
\begin{enumerate}
	\item $\delta_{\rm min}(t)\le 2\pi/n$, for all $t\geq t_0$;
	\item $\delta_{\rm min}(t)$ is non-decreasing;
	\item $\delta_{\rm min}(t) \to 2\pi/n$ as $t \to \infty$.
\end{enumerate}
\end{enumerate}
\end{prop}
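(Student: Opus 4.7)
My plan is to reduce the phase dynamics to a standard consensus--diffusion on the consecutive gaps $\delta_i$, and then exploit the well-known monotonicity properties of such dynamics.

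First, I derive the closed-loop dynamics of the $\delta_i$'s from (\ref{eq:deltaDyn}). The feedforward term $\omega^\ast$ drops out of $\dot\delta_i=\dot\phi_i-\dot\phi_{i-1}$, and the $2\pi$ offsets appearing in (\ref{eq:PhAve1}), (\ref{eq:PhAven}), and in the definition of $\delta_1$ pair off in the boundary cases $i\in\{1,2,n\}$. A short computation yields the uniform cyclic diffusion
\[
\dot\delta_i \;=\; \frac{k_\phi}{2}\bigl(\delta_{i+1}+\delta_{i-1}-2\delta_i\bigr),\qquad i=1,\ldots,n,
\]
with indices taken modulo $n$. Equivalently, $\dot{\vect{\delta}}=-(k_\phi/2)\,{\mat{L}}\,\vect{\delta}$, where ${\mat{L}}$ is the Laplacian of the undirected cycle $C_n$.

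Two of the listed properties are then immediate. Summing over $i$ gives $\sum_i\dot\delta_i=0$, so $\sum_i\delta_i(t)\equiv 2\pi$ is an invariant of the dynamics; hence the arithmetic mean of the gaps equals $2\pi/n$, which forces $\delta_{\rm min}(t)\le 2\pi/n$ and proves 2(a). Item 2(c) then follows by reusing the spectral argument invoked in the proof of Proposition~\ref{prop:contrOne}: ${\mat{L}}$ has a simple zero eigenvalue on $\mathrm{span}\{\vect{1}\}$ and all remaining eigenvalues are strictly positive, so each $\delta_i$ converges exponentially to the common average $2\pi/n$, and in particular so does $\delta_{\rm min}$.

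What remains is to establish (1) and 2(b), both of which reduce to showing that $\delta_{\rm min}(t)$ is nondecreasing. The key observation is a discrete maximum principle: whenever an index $j$ attains the minimum, $\delta_{j\pm 1}\ge\delta_j$, so
\[
\dot\delta_j \;=\; \frac{k_\phi}{2}\bigl(\delta_{j+1}+\delta_{j-1}-2\delta_j\bigr)\;\ge\; 0.
\]
Since $\delta_{\rm min}$ is the pointwise minimum of finitely many $C^1$ signals it is locally Lipschitz, and its upper-right Dini derivative satisfies $D^+\delta_{\rm min}(t)=\min_{j\in J(t)}\dot\delta_j(t)\ge 0$, where $J(t)=\{j:\delta_j(t)=\delta_{\rm min}(t)\}$. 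Hence $\delta_{\rm min}$ is nondecreasing, which is 2(b), and combined with $\delta_{\rm min}(t_0)>0$ it yields $\delta_i(t)\ge\delta_{\rm min}(t)>0$ for all $i$ and all $t\ge t_0$, i.e., (1). The only technical subtlety, and the main obstacle I foresee, is precisely this non-smoothness of $\delta_{\rm min}$ at instants where several gaps tie; it is dispatched by the Dini-derivative argument above, or equivalently by an invariance-of-cone argument for linear Laplacian consensus.
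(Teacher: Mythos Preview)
Your proof is correct and follows essentially the same route as the paper: both derive the closed-loop gap dynamics $\dot{\vect{\delta}}=k_\phi(\mat{C}-\mat{I})\vect{\delta}$ (equivalently, your $-(k_\phi/2)\mat{L}\vect{\delta}$), use the sum invariant $\sum_i\delta_i=2\pi$ for 2(a), the discrete maximum principle at a minimizing index for 2(b), and the spectral/consensus argument for 2(c). The only cosmetic difference is that the paper obtains item~(1) directly by noting that $k_\phi(\mat{C}-\mat{I})$ is Metzler (hence the system is positive), whereas you deduce (1) as a corollary of the monotonicity of $\delta_{\rm min}$; your added care with the Dini derivative at ties is a nice touch that the paper glosses over.
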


\begin{proof}
Using~\eqref{eq:aux1} in~\eqref{eq:aux2}, the phase dynamics for the multi-robot system becomes
\begin{align}
\label{eq:refPhaseTot}
\dot{\vect{\phi}} =  k_\phi (\mat{C} - \mat{I}){\vect{\phi}} + \omega \vect{1} + k_\phi\vect{b}.
\end{align}
In terms of consecutive phase differences, we have
\begin{align}
\dot{\vect{\delta}} &= (\mat{H}+\mat{I})\dot{\vect{\phi}} \notag\\ 
&= k_\phi (\mat{H}+ \mat{I})(\mat{C} - \mat{I}){\vect{\phi}} + \omega (\mat{H}+\mat{I})\vect{1} + k_\phi(\mat{H}+\mat{I})\vect{b} \notag\\
&= k_\phi(\mat{C} \!-\!\mat{I})((\mat{H}\!+\!\mat{I})\vect{\phi} \!+\! \vect{h})) \!+\!  k_\phi((\mat{H}\!+\!\mat{I})\vect{b} \!-\! (\mat{C}\!-\!\mat{I})\vect{h}))=\notag\\
&= k_\phi(\mat{C} - \mat{I}){\vect{\delta}},\label{eq:deltaRefPhaseDynTot}
\end{align}
where we exploited the fact that $\mat{H}+\mat{I}$ and $\mat{C}-\mat{I}$ commute, that $(\mat{H}+\mat{I}) \vect{1}= \vect{0}$, and that $(\mat{H} + \mat{I})\vect{b} - (\mat{C} - \mat{I})\vect{h} = \vect{0}$. Since $k_\phi(\mat{C} - \mat{I})$ is a \emph{Metzler} matrix (i.e., all its off-diagonal terms are positive), eq.~\eqref{eq:deltaRefPhaseDynTot} represents a \emph{positive} system, and therefore the elements of $\vect{\delta}$ remain positive during its evolution (see, e.g.,~\cite{2007-GurShoMas}). 

Concerning the properties of $\delta_{\rm min}$, note first that {\em a)} is a consequence of $\sum_{i=1}^n\delta_i (t)= 2\pi$ (by definition) together with $\delta_i(t)>0$, $i=1,\ldots,n$. Now define $\kappa(t)=\argmin_{i}\delta_i(t)$, i.e., the index such that $\delta_{\kappa(t)}(t)=\delta_{\rm min}(t)$. By definition $\delta_{\kappa(t)\pm 1}(t) \ge \delta_{\kappa(t)}(t)$ and thus~\eqref{eq:deltaRefPhaseDynTot} implies
\[
\dot\delta_{\rm min} =  k_\phi \left(\frac{1}{2}(\delta_{\kappa(t)-1} + \delta_{\kappa(t)+1}) - \delta_{\kappa(t)}\right) \ge 0,
\] 
which proves property {\em b)}. Finally, the convergence to $2\pi/n$, property {\em c)}, descends directly from Proposition~\ref{prop:contrOne}.
\end{proof}

\begin{figure*}[t]
\figSafeDistances
\centering
\caption{Illustration of the geometry for collision avoidance. Left: definition of relevant distances. Right: radial separation ($R_i$ and $R_j$) vs.\ phase separation ($R_i$ and $R_k$).}
\label{fig:safeDistances}
\end{figure*}

\subsection{Sufficient Conditions for Safety} 

\noindent
Denote by $r>0$ the {\em safety radius} of the robots, which represents the minimum acceptable clearance around the robot representative point $R$. The safety radius may be the actual radius of the robot (defined as the maximum distance between $R$ and any other point of the robot) or, typically, it may be further increased to provide a margin, e.g., for accepting trajectory tracking errors (either during the transient or at steady-state due to bounded disturbances). For simplicity, in the following we call {\em collision} the situation in which the inter-distance between the representative points of two robots is less or equal to $2\,r$.
 
Below, we give conditions for {\em statical} safety, i.e., avoidance of collisions between stationary robots.  These will be used for designing a {\em dynamically} safe encirclement controller in Set.~\ref{Sect:SEC}.
Throughout the rest of this section, refer to Fig.~\ref{fig:safeDistances} for illustration. 

The necessary and sufficient condition for avoiding a collision (including simple contact) between robots $i$ and $j$ is that their inter-distance is larger than the above threshold, i.e.,
\[
D_{ij}=\|\pv_i-\pv_j\| > 2r,  
\]
which may be rewritten in cylindrical coordinates as follows:
\begin{align}
\label{eq:distance}
D_{ij}=\sqrt{{\rho_i}^2 + {\rho_j}^2 - 2\rho_i\rho_j\cos(\phi_j - \phi_i) + 
(z_j-z_i)^2
} \ge 2r.
\end{align}
Now denote by $\tilde \pv_i$ and $\tilde \pv_j$ the projections of $\pv_i$ and $\pv_j$,  respectively, on the encirclement plane and let
\[
d_{ij} = \|\tilde \pv_i-\tilde \pv_j\| =\sqrt{{\rho_i}^2 + {\rho_j}^2 - 2\rho_i\rho_j\cos(\phi_j - \phi_i)}.
\]
We have
\begin{equation}
D_{ij} \geq d_{ij} \geq  \sqrt{{\rho_i}^2 + {\rho_j}^2 - 2\rho_i\rho_j} = |\rho_i - \rho_j|.
\label{eq:ineq1}
\end{equation}
On the other hand, letting $\rho^m_{ij}=\min({\rho_i},{\rho_j})$ we may also write
\begin{align}
D_{ij} \ge d_{ij} &\ge \sqrt{ (\rho^m_{ij})^2 +  (\rho^m_{ij})^2 - 2 \rho^m_{ij}  \rho^m_{ij} \cos(\phi_j - \phi_i)}=\notag\\
&=  \rho^m_{ij}  \sqrt{2(1 - \cos(\phi_j - \phi_i))} = \notag\\
&= 2 \rho^m_{ij}  \left| \sin\left(\tfrac{\phi_j - \phi_i}{2}\right) \right| = \tilde d_{ij}.
\label{eq:ineq2}
\end{align}

We shall say that robots $i$ and $j$ are (see Fig.~\ref{fig:safeDistances}):

\begin{itemize}

\item \emph{radially separated} if $|\rho_i - \rho_j| > 2r$;

\item \emph{phase separated} if $\tilde d_{ij} > 2r$, or equivalently 
\[
\rho^m_{ij}  > \frac{r}{| \sin\left((\phi_j - \phi_i)/2\right)|}.
\]

\end{itemize}

\medskip
\begin{prop} 
If two robots are either radially or phase separated, they are not in collision.
\end{prop}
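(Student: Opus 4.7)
The plan is to observe that the proposition is essentially a direct packaging of the two chains of inequalities already derived in the paragraphs immediately preceding its statement, namely \eqref{eq:ineq1} and \eqref{eq:ineq2}. No new geometric arguments are needed; the geometry has already been absorbed into the bounds on $D_{ij}$.

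First I would recall that, by definition, robots $i$ and $j$ are in collision if and only if $D_{ij} \le 2r$, so non-collision is the strict inequality $D_{ij} > 2r$. Then I would split into the two cases of the hypothesis. In the radial case, assume $|\rho_i - \rho_j| > 2r$; chaining with \eqref{eq:ineq1} gives $D_{ij} \geq |\rho_i - \rho_j| > 2r$, so the two robots are not in collision. In the phase case, assume $\tilde d_{ij} > 2r$; chaining with \eqref{eq:ineq2} gives $D_{ij} \geq \tilde d_{ij} > 2r$, again ruling out collision. Since the two hypotheses are connected by ``either/or'', one of the two chains always applies, which concludes the argument.

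There is really no hard part here: the substantive work was done in establishing the bounds \eqref{eq:ineq1} and \eqref{eq:ineq2}, which used only the law of cosines and the fact that projection onto the encirclement plane cannot increase a Euclidean distance (so $D_{ij} \geq d_{ij}$). The only minor subtlety worth a sentence is to note that in the phase-separation case the definition $\tilde d_{ij} = 2\rho^m_{ij}|\sin((\phi_j-\phi_i)/2)|$ is precisely the lower bound appearing in \eqref{eq:ineq2}, so that the two alternative formulations of ``phase separated'' given in the bullet (one in terms of $\tilde d_{ij}$, one in terms of $\rho^m_{ij}$ and $|\sin((\phi_j-\phi_i)/2)|$) are equivalent and both immediately yield $D_{ij} > 2r$. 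I would therefore keep the proof to a short paragraph with an explicit citation of \eqref{eq:ineq1} and \eqref{eq:ineq2}.
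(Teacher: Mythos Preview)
Your proposal is correct and matches the paper's own proof, which simply states that the result is a direct consequence of applying the radial and phase separation definitions to inequalities~\eqref{eq:ineq1} and~\eqref{eq:ineq2}, respectively. You have merely spelled out in a few more sentences exactly what the paper compresses into one line.
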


\begin{proof}
It is a direct consequence of using the two separation definitions in~(\ref{eq:ineq1}) and~(\ref{eq:ineq2}), respectively.
\end{proof}

\medskip
Note that radial or phase separation are only sufficient conditions for avoiding collision between two robots; i.e., two robots that are neither radially nor phase separated are not necessarily in collision. The following proposition provides a sufficient condition for safety of robot $i$, i.e., avoidance of collision with any other robot.

\begin{prop}
\label{prop:suff}
Define $\sigma(t) = r/|\sin(\delta_{\rm min}(t)/2)|$, with $\delta_{\rm min}$ given by~(\ref{eq:delta_min}). 
If the following condition holds
\begin{equation}
\rho_i (t) \ge \sigma (t) + 2r
\label{eq:condition_2r}\\
\end{equation} 
at a time instant $t$, then robot $i$ is not in collision with any other robot at $t$.
\end{prop}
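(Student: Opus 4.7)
The plan is to fix an arbitrary other robot $j\neq i$ and show that $D_{ij}(t)>2r$, i.e.\ no collision with $j$ occurs. Since $D_{ij}\geq d_{ij}$, it suffices to work with the planar projected distance $d_{ij}$. I would start from the cosine-law identity
\[
d_{ij}^2 = (\rho_i-\rho_j)^2 + 4\rho_i\rho_j\,\sin^2\!\left(\tfrac{\phi_j-\phi_i}{2}\right),
\]
which is the natural decomposition already lurking in the estimates~(\ref{eq:ineq1})--(\ref{eq:ineq2}), and aim to bound each summand using the hypothesis $\rho_i\geq\sigma+2r$.

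The first preliminary step is a purely combinatorial observation on phases. By Proposition~\ref{prop:phaseOrderOne}, the initial cyclic ordering of the robots is preserved for all $t\geq t_0$, and every consecutive phase difference $\delta_k(t)$ is at least $\delta_{\rm min}(t)>0$. Hence, traversing from robot $i$ to any other robot $j$ along either arc of the circle accumulates the sum of at least one consecutive gap, so the angular separation
\[
\Theta_{ij}(t) = \min\!\bigl(|\phi_j-\phi_i|\bmod 2\pi,\; 2\pi-(|\phi_j-\phi_i|\bmod 2\pi)\bigr)\in[0,\pi]
\]
satisfies $\Theta_{ij}(t)\geq\delta_{\rm min}(t)$. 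Since both are in $[0,\pi]$ and $\sin$ is increasing on $[0,\pi/2]$, this gives $|\sin((\phi_j-\phi_i)/2)|=\sin(\Theta_{ij}/2)\geq\sin(\delta_{\rm min}/2)=r/\sigma$.

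Next I would split on the value of $\rho_j$. In the first case $\rho_j<\sigma$, the hypothesis yields $\rho_i-\rho_j>(\sigma+2r)-\sigma=2r$, so robots $i$ and $j$ are radially separated and $d_{ij}\geq|\rho_i-\rho_j|>2r$. In the second case $\rho_j\geq\sigma$, I would feed the two bounds $\rho_j\geq\sigma$ and $\rho_i\geq\sigma+2r>\sigma$ into the product $\rho_i\rho_j\geq\sigma(\sigma+2r)>\sigma^2$, and combine with $\sin^2(\Theta_{ij}/2)\geq(r/\sigma)^2$ to obtain
\[
4\rho_i\rho_j\sin^2\!\left(\tfrac{\phi_j-\phi_i}{2}\right) > 4\sigma^2\cdot\frac{r^2}{\sigma^2}=4r^2,
\]
so that $d_{ij}^2>4r^2$, i.e.\ $d_{ij}>2r$. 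In either case $D_{ij}\geq d_{ij}>2r$, and since $j$ was arbitrary, robot $i$ is not in collision with any other robot.

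The main delicate point I expect is not the cosine-law bookkeeping but the strictness of the final inequality. A naive appeal to the two ``sufficient separation'' conditions stated just before the proposition would, at the boundary $\rho_j=\sigma$ and $\Theta_{ij}=\delta_{\rm min}$, give only $|\rho_i-\rho_j|\geq2r$ and $\tilde d_{ij}\geq2r$, i.e.\ non-strict separations, which are insufficient because a ``collision'' is defined by $D_{ij}\leq2r$. This is why I prefer to bypass the two-case separation lemma in the critical regime and bound $d_{ij}^2$ directly: the strict gap $\rho_i>\sigma$ carried by the extra $2r$ in the hypothesis is precisely what upgrades the estimate $\rho_i\rho_j\geq\sigma^2$ to a strict one and delivers $d_{ij}>2r$ at the boundary. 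A secondary point worth stating explicitly is that $\sigma(t)$ is well-defined and finite, since Proposition~\ref{prop:phaseOrderOne} guarantees $\delta_{\rm min}(t)>0$ along trajectories.
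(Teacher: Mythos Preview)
Your argument is correct and follows the same skeleton as the paper's proof: fix $j\neq i$, split on the size of $\rho_j$, and obtain either radial or phase-type separation. The paper splits at $\rho_j\lessgtr\rho_i-2r$ (you split at $\rho_j\lessgtr\sigma$, which is equivalent in effect since $\rho_i-2r\ge\sigma$) and then invokes the ``radially/phase separated'' notions directly, simply asserting the key angular inequality $|\sin(\delta_{\rm min}/2)|\le|\sin((\phi_j-\phi_i)/2)|$ without further justification. Your version is a bit more careful on two points: you spell out why the angular separation is at least $\delta_{\rm min}$, and by bounding $d_{ij}^2$ via the cosine-law identity (using the geometric-mean factor $\sqrt{\rho_i\rho_j}$ rather than the paper's $\rho_{ij}^m$) you secure the strict inequality $D_{ij}>2r$ at the borderline case $\rho_j=\sigma$, which the paper's argument leaves as a non-strict $\ge$. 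One minor remark: Proposition~\ref{prop:suff} is really a static geometric statement at a fixed instant, so citing Proposition~\ref{prop:phaseOrderOne} is more than is needed---all you actually use is that the phases are in cyclic order at time $t$, which the paper takes as implicit from context.
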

\begin{proof}
Assume that~\eqref{eq:condition_2r} holds (drop time dependence for compactness), and consider any other robot $j$, $j\neq i$. If $\rho_j >  \rho_i - 2r$ then $\rho_j > \sigma$, which implies $\rho_{ij}^m  > \sigma$.
Since 
\[
\sigma = \frac{r}{|\sin(\delta_{\rm min}/2)|} \ge  \frac{r}{|\sin((\phi_j-\phi_i)/2)|}
\]
we may conclude that $\rho_{ij}^m  > r/|\sin((\phi_j-\phi_i)/2)|$; i.e., robot $i$ and robot $j$ are phase separated. On the other hand, if $\rho_j \le  \rho_i - 2r$ then $\rho_i -\rho_j \ge  2r$; i.e., robot $i$ and robot $j$ are radially separated.
\end{proof}

\subsection{Safe Encirclement Control} 
\label{Sect:SEC}

\noindent
Define the function $\lambda(\rho_i,\sigma): \mathbb{R}^2 \rightarrow \mathbb{R}$ as follows:
\begin{align}
\lambda(\rho_i,\sigma) 
= 
\left\{
\begin{array}{l}
0 \quad \text{if} \;\; \rho_i < \sigma + 2r \\
1 \quad \text{if} \;\; \rho_i > \sigma + 2r + \varepsilon_r\\
(\rho_i - \sigma - 2r)/\varepsilon_r \quad  \text{otherwise},
\end{array}
\right.\label{eq:lambda_obs}
\end{align}
where $\varepsilon_r$ is any (small) positive constant. The profile of $\lambda$ is shown in Fig.~\ref{fig:safeDistances}, right.
The following proposition presents a collision-free extension of the controller presented in Sect.~\ref{sec:ctrl_ver1}.

\begin{prop}[Controller~1$^*$, Desired Angular Speed with Collision Avoidance] 
\label{prop:contrOneColl}
Replace~\eqref{eq:rhoDyn} in Controller~1 with
\begin{equation}
\dot\rho_i = \lambda(\rho_i,\sigma)\,k_\rho (\rho^*- \rho_i),
\label{eq:rhoDynColl}
\end{equation}
with $\sigma$ defined in Proposition~\ref{prop:suff} and $\lambda$ in \eqref{eq:lambda_obs}.
Then, the control law expressed by~(\ref{eq:feed_lin}) and~(\ref{eq:rhoDynColl}), (\ref{eq:zDyn}), (\ref{eq:deltaDyn}):

\begin{enumerate}

\item \label{item:no_collisions} 
ensures that no collision occurs among robots;

\item \label{item:convergence} 
guarantees global exponential convergence of  $\phi_i$ to $\bar\phi_i$, $\dot\phi_i$ to $\omega^*$, and $z_i$ to $0$, for any choice of $\omega^*$; and exponential convergence of $\rho_i$ to $\rho^*$, provided that

\begin{description}

\item{a)} $\rho^*> \frac{r}{|\sin(\pi/n)|} + 2r$;
\vskip 0.1cm

\item{b)}  $\rho_i (t_0) > \frac{r}{|\sin(\pi/n)|}+2r$;
\vskip 0.1cm

\item{c)} $|\rho_i (t_0) - \rho_j (t_0)| \ge 2r$, $\forall j=1,\ldots,n$, $j\neq i$. 

\end{description}

\end{enumerate}

\end{prop}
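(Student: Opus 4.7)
Since the controller modifies only the radial dynamics, leaving~\eqref{eq:deltaDyn} and~\eqref{eq:zDyn} untouched, the proofs of Propositions~\ref{prop:contrOne} and~\ref{prop:phaseOrderOne} apply verbatim. This immediately yields exponential convergence of $\phi_i$ to $\bar\phi_i$, of $\dot\phi_i$ to $\omega^*$, and of $z_i$ to $0$; moreover $\delta_{\min}(t)$ is non-decreasing with $\delta_{\min}(t)\le 2\pi/n$ and $\delta_{\min}(t)\to 2\pi/n$. Consequently, $\sigma(t)=r/|\sin(\delta_{\min}(t)/2)|$ is non-increasing in time and converges to $r/|\sin(\pi/n)|$. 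All claims in item~\ref{item:convergence} are thus settled, except the radial convergence of $\rho_i$ to $\rho^*$.

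For item~\ref{item:no_collisions}, I plan a barrier argument on $f_i(t)=\rho_i(t)-\sigma(t)-2r$ combined with Proposition~\ref{prop:suff}. By~\eqref{eq:lambda_obs}, whenever $f_i\le 0$ one has $\lambda(\rho_i,\sigma)=0$ and hence $\dot\rho_i=0$; coupled with $\dot\sigma\le 0$, this gives $\dot f_i\ge 0$ on the boundary $\{f_i=0\}$, so the set $\{f_i\ge 0\}$ is positively invariant. Hypothesis~(b) places $\rho_i(t_0)$ strictly above $r/|\sin(\pi/n)|+2r$, the asymptotic value of $\sigma+2r$; combined with the monotone decay of $\sigma$, this ensures that each $f_i$ becomes nonnegative in finite time and remains so thereafter, at which point Proposition~\ref{prop:suff} directly certifies safety. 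For the transient during which some $f_i$ may still be negative, hypothesis~(c) is crucial: at $t_0$ every pair of robots is radially separated by at least $2r$, hence not in collision, and this separation, together with the freezing action of $\lambda$ on every robot whose $f_i$ is negative, is enough to rule out any collision in the finite window before the invariant is reached for all robots.

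The radial convergence of $\rho_i$ to $\rho^*$ then follows from hypothesis~(a). Since $\rho^*>r/|\sin(\pi/n)|+2r$ and $\sigma(t)\to r/|\sin(\pi/n)|$, there exists $T\ge t_0$ such that $\rho^*>\sigma(t)+2r+\varepsilon_r$ for all $t\ge T$. Combined with the safety invariant established above, this places $\rho_i$ eventually in the regime where $\lambda=1$ in a neighborhood of $\rho^*$, so that~\eqref{eq:rhoDynColl} locally reduces to $\dot\rho_i=k_\rho(\rho^*-\rho_i)$, yielding the claimed exponential convergence.

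The most delicate step is the transient safety argument: condition~(b) only controls $\rho_i(t_0)$ relative to the \emph{asymptotic} value of $\sigma+2r$, not to $\sigma(t_0)+2r$ itself, so the invariant $f_i\ge 0$ need not hold at $t_0$. Bridging this gap requires a careful combination of the freezing action of $\lambda$ with the initial radial separation granted by~(c), so as to exclude collisions during the finite window in which some $f_i(t)$ remain negative.
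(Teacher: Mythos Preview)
Your plan is correct and coincides with the paper's proof: the same decoupling of the $\phi$ and $z$ dynamics, the same barrier $\rho_i\ge\sigma+2r$ (reached in finite time because $\sigma(t)+2r$ is non-increasing and converges to $r/|\sin(\pi/n)|+2r$ while $\rho_i$ is frozen above that value by hypothesis~(b)), the same appeal to Proposition~\ref{prop:suff} thereafter, and the same use of hypothesis~(a) to force $\rho^*$ into the region where $\lambda$ is eventually active. The ``delicate step'' you flag is resolved in the paper precisely by the case split you anticipate: for $t<t_i$ robot $i$ is frozen at $\rho_i(t_0)$, and any other robot $j$ is either already in the regime $\rho_j\ge\sigma+2r$ (so Proposition~\ref{prop:suff} applied to $j$ rules out collision with $i$) or is itself still frozen at $\rho_j(t_0)$ (so hypothesis~(c) gives radial separation).
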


\begin{proof}
We shall prove the thesis in two parts.

\emph{Collision Avoidance:} We first prove that the generic $i$-th robot  cannot collide with the $j$-th robot, $\forall j\!\neq\! i$. From Proposition~\ref{prop:phaseOrderOne} we know that $\delta_{\rm min}$ is a non-decreasing signal that converges to $2\pi/n$, which implies that $\sigma+2r$ is a non-increasing signal that converges to $r/|\sin(\pi/n)|+2r$ from above.
Since $\rho_i (t_0) > r/|\sin(\pi/n)| + 2r$ by hypothesis, and being $\dot \rho_i =0$ as long as $\rho_i < \sigma + 2r$, the first   time instant $t_i$ such that $\rho_i (t_i) = \sigma(t_i) + 2r$, is certainly finite, $\forall i=1,\ldots,n$. At $t_i$, $\rho_i$ is `reached' from above by the signal $\sigma + 2r$, and for any $t>t_i$ it will be $\rho_i(t) \geq \sigma(t) + 2r$. For $t\geq t_i$, therefore, condition~\eqref{eq:condition_2r} of Proposition~\ref{prop:suff} holds, and the $i$-th robot cannot collide with any other robot. 

Now consider a generic $t<t_i$, and note that we have $\rho_i (t) = \rho_i (t_0)$. Partition the other robot indices in two sets $A(t)=\{j\,|\, t \ge t_j\}$ and $B(t)=\{j\,|\, t  < t_j \}$. For $j\in A(t)$ it is $t\ge t_j$ and thus condition~\eqref{eq:condition_2r} of Proposition~\ref{prop:suff} holds with $j$ in place of $i$; hence, the $j$-th robot cannot collide with any other robot (in particular, with the $i$-th robot). For $j\in B(t)$ it is $t < t_j$ and then $\rho_j (t) = \rho_j (t_0)$; therefore, the $j$-th robot and the $i$-th robot are radially separated by hypothesis and collisions are prevented also in this case.

\emph{Convergence:} We now prove that the regulation errors converge to zero. Convergence of $\phi_i$, $\dot\phi_i$ and $z_i$ is shown exactly as in Proposition~\ref{prop:contrOne}. To prove convergence of $\rho_i$ to $\rho^*$, we essentially exploit the fact that signal $\tilde\rho(t) + 2r$ (which determines the time-varying gain $\lambda$ in~\eqref{eq:lambda_obs}) converges to $r/|\sin(\pi/n)|+2r$, and thus $\rho_i$ certainly converges to $\rho^*$, since both $\rho^*$ and $\rho_i(t_0)$ are, by hypothesis, larger than $r/|\sin(\pi/n)|+2r$.     

More in detail, the assumption $\rho^* > r/|\sin(\pi/n)|+2r$ implies that $\rho^* = r/|\sin(\pi/n)|+2r+\varepsilon^*$ for a certain $\varepsilon^*>0$. 
Now define the following quantity
\[
\rho_{m,i}=\min\{\rho^*-\tfrac{\varepsilon^*}{2}\;,\;\rho_i(t_0)\},
\]
which is larger than $r/|\sin(\pi/n)|+2r$ by definition. 
Since $\tilde\rho(t) + 2r$ converges monotonically to $r/|\sin(\pi/n)|+2r$ from above, there certainly exists a time instant $t^*_i>t_0$ such that 
\[
\tilde\rho(t^*_i)+ 2r  = \rho_{m,i}
\quad \mbox{and} \quad
{\dot{\tilde\rho}}(t^*_i) < 0
\]
so that $\tilde\rho(t) + 2r < \rho_{m,i}$ $\forall t>t^*_i$.

For any $t\ge t^*_i$ it is clearly $\rho_i(t)\ge \tilde\rho(t) + 2r$; i.e., $\rho_i(t)$ lies in the right half-line with origin at $\tilde\rho(t) + 2r$, which contains also $\rho^*$ by construction. The only two possible equilibria of $\rho_i$ after $t^*_i$ (obtained imposing $\dot\rho_i=0$) are therefore (1) $\rho_i=\tilde\rho(t) + 2r$ (which implies $\lambda=0$) and (2) $\rho_i=\rho^*$.  The first equilibrium is  unstable, since for any $\rho_i\in(\tilde\rho(t) + 2r,\rho^*]$ it is $\frac{d}{dt}(\rho-\tilde\rho + 2r)=\dot\rho-\dot{\tilde\rho}>0$. On the other hand, $\rho^*$ is asymptotically stable and its region of attraction is the whole open interval $(\tilde\rho(t) + 2r,+\infty)$.

To conclude the proof, let us look at the value of $\rho_i$ at $t^*_i$. If  $\rho(t^*_i)>\tilde\rho(t^*_i) + 2r$, then $\rho_i$ is already in the region of attraction of $\rho^*$ and will then converge to it. If instead $\rho(t^*_i)=\tilde\rho(t^*_i) + 2r$, then $\left.\frac{d}{dt}\right|_{t^*_i}(\rho-\tilde\rho + 2r)= 0 - \dot{\tilde\rho}(t^*_i)>0$, which implies that $\tilde\rho$ after $t^*_i$ will  leave the unstable equilibrium $\rho-\tilde\rho + 2r$ to enter the region of attraction of $\rho^*$.
\end{proof}

\medskip

A comparison of Proposition~\ref{prop:contrOneColl} with Proposition~\ref{prop:contrOne} shows that the price to pay for adding guaranteed collision avoidance to Controller~1 is threefold. First, the encirclement radius $\rho^*$ cannot be too small (condition a): its minimum admissible value depends on the number of robots and their radius $r$, and in particular the higher $n$ (or $r$), the higher $\rho^*$. Second,
the initial distance of each robot from the target cannot be too small (condition b, and note that the threshold is the same of condition a).
Finally, all robots must be radially separated at the start (condition c). Taken together, these three requirements represent only a sufficient condition; collision-free encirclement may be obtained even if one (or more) of them is violated.

Note also that function~\eqref{eq:lambda_obs} is only the simplest choice for producing a gain $\lambda$ that varies continuously between $0$ and $1$. Different choices (see Section~\ref{sec:sim}) can be considered if a smoother control law is desired; Proposition~\ref{prop:contrOneColl} will still hold.

\subsection{Decentralized Estimation of $\sigma(t)$} 
\label{Sect:DecEstSigma}
     
\noindent
The safe encirclement control law~\eqref{eq:rhoDynColl} requires the knowledge of the globally defined quantity $\sigma(t)$. In order to preserve decentralization and scalability of the proposed approach, we show below how the generic $i$-th robot can compute a decentralized estimate $\hat{\sigma}_i$ that can be used in place of $\sigma$ in the control law~\eqref{eq:rhoDynColl} while preserving the validity of Proposition~\ref{prop:contrOneColl}.  

From the proof of Proposition~\ref{prop:contrOneColl}, it is clear that if
\begin{align}
\lim_{t \to \infty} \hat\sigma_i (t) = \sigma(t) \qquad \forall i,
\label{eq:asympt_rho}
\end{align} 
then the associated convergence properties still hold.
The proof additionally shows that if the estimates satisfy
\begin{align}
\lambda(\rho,\hat\sigma_i(t)) \le \lambda(\rho,\sigma(t))\quad  \forall i,\forall t>t_0,
\label{eq:lambda_le}
\end{align}
then the collision avoidance property is also preserved. In view of the definition of $\lambda$ in~\eqref{eq:lambda_obs}, condition~\eqref{eq:lambda_le} can be rewritten as
\begin{align}
\hat\sigma_i \ge \sigma.\label{eq:prop_est}
\end{align}
Therefore, we shall synthesize $\hat\sigma_i$ so as to satisfy both~\eqref{eq:asympt_rho} and~\eqref{eq:prop_est}.

The proposed decentralized estimator for $\sigma$ has a discrete-time structure. In particular, denoting by $T_c$ the control sampling time, consider the following basic iteration: 
\begin{eqnarray*}
\gamma_i [0] &\!\!\!\!=\!\!\!\!& \rho_i(0)\\
\gamma_i [k+1] &\!\!\!\!=\!\!\!\!&
\begin{cases}
 \rho_i(k T_c) \qquad\qquad\>\>\>\>\>\> \text{if } k \text{ is a multiple of } m &\\
 \max_{j\in\calN_i}\left\{ \gamma_i[k],  \gamma_{j}[k]  \right\} \qquad \text{otherwise,}  &
\end{cases}\\
\end{eqnarray*}
where $k$ is incremented every $T_c$ seconds and $m$ is any integer larger than $n-1$ (an upper bound on $n$ for the considered scenario is needed here). 
This scheme achieves a finite-time agreement every $m T_c$ seconds, i.e.,  
\[
\gamma_i[m(k\div m)]=\max_{j=1\ldots n}\rho_j((k-m)T_c),
\]
where $k\div m$ is the quotient of the division of $k$ and $m$.
Each robot then updates its estimate $\hat\sigma_i$ of $\sigma$ as follows:
\begin{equation}
\hat\sigma_i(k T_c) =
\begin{cases}
 \infty & \text{if } k<m\\
 \gamma_i[m(k\div m)] & \text{otherwise.}
\end{cases}
\label{eq:good_rho_min_est}
\end{equation}

Note that in this case the estimation algorithm is proposed in discrete time to account for multiple communications steps during a single control step.
As before, this decentralized estimation method can be implemented under any connected communication topology.

We have the following result.

\begin{prop}
Assume that the estimates $\hat\sigma_i$ produced by algorithm~\eqref{eq:good_rho_min_est} are used in place of $\sigma$ to implement a decentralized version of control law~\eqref{eq:rhoDynColl}. Then the thesis of Proposition~\ref{prop:contrOneColl} is still valid; in particular, collision-free encirclement with a desired speed is achieved.
\end{prop}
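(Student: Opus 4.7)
My plan is to verify the two sufficient conditions identified in the paragraphs preceding the algorithm: (i) $\hat\sigma_i(t)\ge\sigma(t)$ for every robot $i$ and every $t>t_0$, which via~\eqref{eq:prop_est} implies $\lambda(\rho,\hat\sigma_i)\le\lambda(\rho,\sigma)$ and so preserves the collision-avoidance argument of Proposition~\ref{prop:contrOneColl}; and (ii) $\hat\sigma_i(t)\to\sigma(t)$ as $t\to\infty$, which preserves the convergence analysis of $\rho_i$ to $\rho^*$. Once both are in hand, the conclusion of Proposition~\ref{prop:contrOneColl} transfers verbatim, because the safe controller depends on $\sigma$ only through the time-varying gain $\lambda$ appearing in~\eqref{eq:rhoDynColl}.

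First I would dispatch the finite-time max-consensus property of the iteration. Within a window of indices $k\in[qm,(q+1)m)$, the reset at step $qm$ primes each $\gamma_i$ with the local quantity sampled at time $qmT_c$, and the following $m-1$ one-hop max updates propagate information across the communication graph. Since the graph is connected at each time and $m-1\ge n-1$ (at least the diameter of any connected graph on $n$ nodes), a standard frontier-expansion argument shows that every $\gamma_i$ attains the network-wide maximum of the sampled quantities by step $(q+1)m$. Hence, for all $k\in[(q+1)m,(q+2)m)$, the estimate $\hat\sigma_i(kT_c)$ is the same scalar across the network and coincides with the global maximum of the locally sampled quantities of the preceding window, while for $k<m$ we have $\hat\sigma_i=+\infty$ by definition.

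Next I would establish (i) and (ii). Property (i) holds trivially during the first window. Outside it, $\hat\sigma_i$ is piecewise constant on intervals of length $mT_c$, so the task reduces to showing that each piecewise-constant value upper-bounds $\sigma(t)$ throughout the window on which it is in force. Here Proposition~\ref{prop:phaseOrderOne} is decisive: $\delta_{\rm min}(t)$ is non-decreasing, hence $\sigma(t)=r/|\sin(\delta_{\rm min}(t)/2)|$ is non-increasing; combined with the local-to-global relation between each robot's sampled quantity and $\sigma$ at the sampling instant (a one-line consequence of $\delta_{\rm min}=\min_i\delta_i$ and the monotonicity of $|\sin(\cdot/2)|$ on the relevant interval), this shows that a one-window-old sampled global maximum still dominates the current $\sigma(t)$. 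Property (ii) then follows from $\delta_{\rm min}(t)\to 2\pi/n$, Proposition~\ref{prop:phaseOrderOne}, item c), together with the fact that the bounded sample-and-hold delay $mT_c$ does not alter the asymptotic value.

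The step I expect to be most delicate is property (i): reconciling the continuous-time evolution of $\sigma$ within a window with the piecewise-constant estimate that was sampled at the window's start. The monotonicity of $\sigma$ guaranteed by Proposition~\ref{prop:phaseOrderOne} is precisely what makes this work, turning a stale sample of an upper bound into a current upper bound; without this monotonicity one would have to redesign the estimator to include an explicit safety margin compensating for intra-window drift, and the clean reduction to the existing proof would be lost.
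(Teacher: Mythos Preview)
Your proposal is correct and follows essentially the same approach as the paper: the paper's proof simply invokes Proposition~\ref{prop:phaseOrderOne} to obtain that $\sigma$ is non-increasing and converges to $r/|\sin(\pi/n)|$, and then asserts that the estimates produced by the max-consensus protocol satisfy both~\eqref{eq:asympt_rho} and~\eqref{eq:prop_est}, declaring the verification ``straightforward.'' You have supplied precisely the details behind that assertion---the finite-time max-consensus over each window, the monotonicity argument that turns a stale upper bound into a current one, and the bounded-delay argument for asymptotic convergence---so your write-up is a faithful expansion of the paper's sketch rather than an alternative route.
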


\begin{proof}
We know from Proposition~\ref{prop:phaseOrderOne} that $\delta_{\rm min}$ is non-decreasing and converges to $2\pi/n$. Thus, $\sigma$ is non increasing and converges to the constant value $r/|\sin(\pi/n)|$. Exploiting this fact, it is straightforward to prove that the estimates produced by the proposed protocol satisfy both~\eqref{eq:asympt_rho} (i.e., decentralized estimation of $\sigma$) and~\eqref{eq:prop_est} .
\end{proof}

\fi
\ifx\noexp\undefined

\begin{figure*}[t]
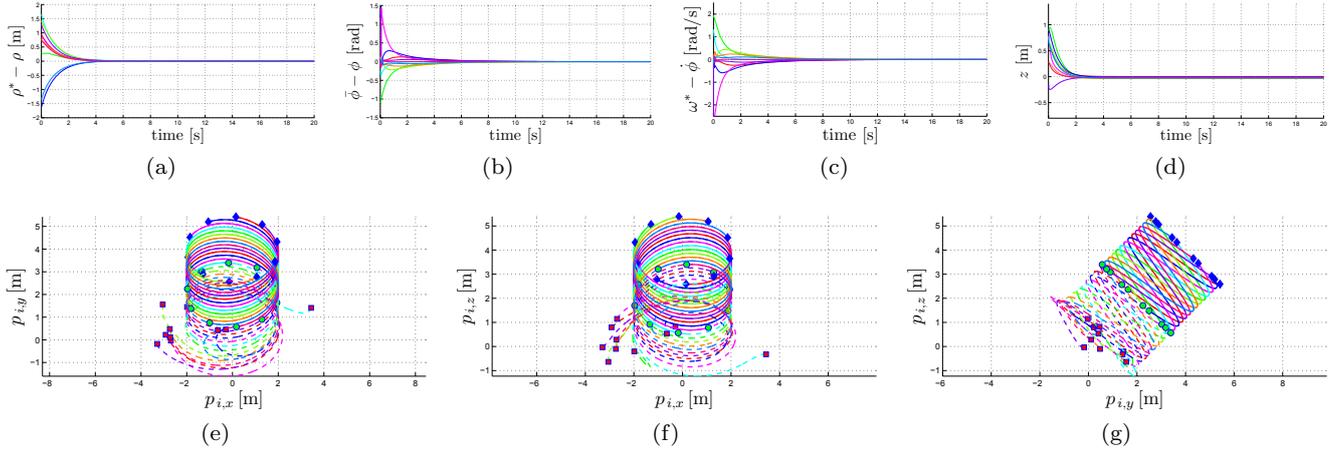

\centering
\subfloat[\label{fig:SimARho}]{\FigSimAErrRho} 
\hfill 
\subfloat[\label{fig:SimAPhi}]{\FigSimAErrPhi}
\hfill
\subfloat[\label{fig:SimADotPhi}]{\FigSimAErrDotPhi}
\hfill
\subfloat[\label{fig:SimAZ}]{\FigSimAErrZ}
\\
\subfloat[\label{fig:SimAXY}]{\FigSimATrajXY}
\hfill 
\subfloat[\label{fig:SimAXZ}]{\FigSimATrajXZ}
\hfill
\subfloat[\label{fig:SimAYZ}]{\FigSimATrajYZ}
\caption{3D point robots, first simulation: Controller~1 (Desired Angular Speed) with 10 robots. \protect\subref{fig:SimARho},\protect\subref{fig:SimAPhi},\protect\subref{fig:SimADotPhi},\protect\subref{fig:SimAZ}: Encirclement error signals. \protect\subref{fig:SimAXY},\protect\subref{fig:SimAXZ},\protect\subref{fig:SimAYZ}:  Projection of the robot trajectories on the coordinate planes (dashed: for $t\in[0,10]$~s; solid: for $t\in[10,20]$~s). Robot positions at $t=0$~s, $t=10$~s, and $t=20$~s are shown as red squares, green circles, and blue diamonds, respectively.} 
\label{fig:SimA}
\end{figure*}

\begin{figure*}[t]
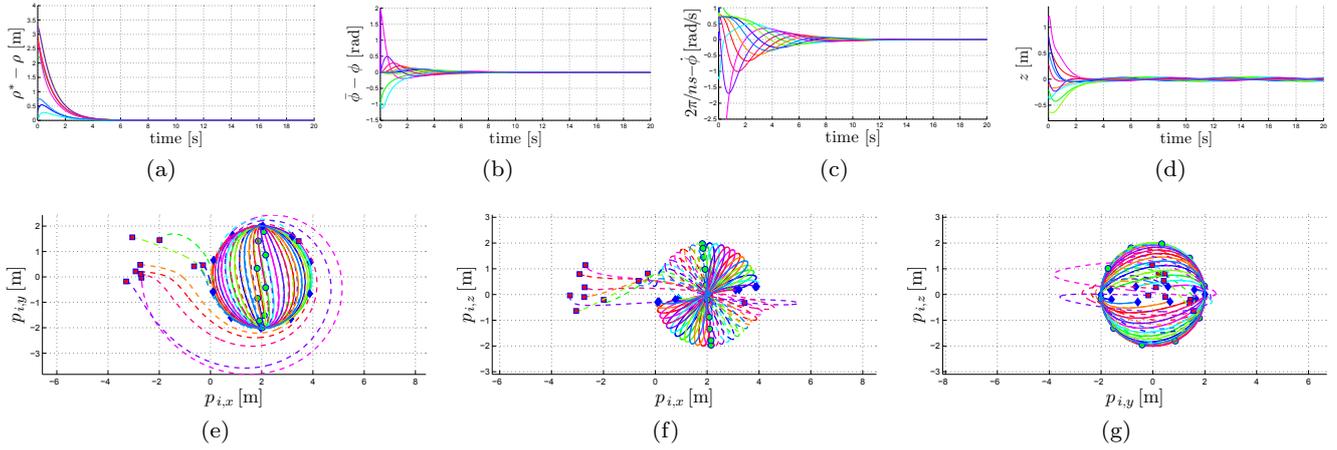

\centering
\subfloat[\label{fig:SimBRho}]{\FigSimBErrRho} 
\hfill 
\subfloat[\label{fig:SimBPhi}]{\FigSimBErrPhi}
\hfill
\subfloat[\label{fig:SimBDotPhi}]{\FigSimBErrDotPhi}
\hfill
\subfloat[\label{fig:SimBZ}]{\FigSimBErrZ}
\\
\subfloat[\label{fig:SimBXY}]{\FigSimBTrajXY}
\hfill 
\subfloat[\label{fig:SimBXZ}]{\FigSimBTrajXZ}
\hfill
\subfloat[\label{fig:SimBYZ}]{\FigSimBTrajYZ}
\caption{3D point robots, second simulation: Controller~2 (Desired Escape Window) with 10 robots. \protect\subref{fig:SimARho},\protect\subref{fig:SimAPhi},\protect\subref{fig:SimADotPhi},\protect\subref{fig:SimAZ}: Encirclement error signals. \protect\subref{fig:SimAXY},\protect\subref{fig:SimAXZ},\protect\subref{fig:SimAYZ}:  Projection of the robot trajectories on the coordinate planes (dashed: for $t\in[0,10]$~s; solid: for $t\in[10,20]$~s). Robot positions at $t=0$~s, $t=10$~s, and $t=20$~s are shown as red squares, green circles, and blue diamonds, respectively.} 
\label{fig:SimB}
\end{figure*}

\section{Simulations and Experiments}
\label{sec:simexp}

\noindent
This section describes the simulations and experiments that have been performed in order to validate the proposed encirclement controllers. See the multimedia material attached to the paper for illustrative video clips.

\subsection{Simulations with Kinematic 3D Point Robots}
\label{sec:sim}

\begin{figure*}[t]
\centering
\subfloat[\label{fig:SimCRho}]{\FigSimCErrRho} 
\hfill 
\subfloat[\label{fig:SimCPhi}]{\FigSimCErrPhi}
\hfill
\subfloat[\label{fig:SimCDotPhi}]{\FigSimCErrDotPhi}
\hfill
\subfloat[\label{fig:SimCZ}]{\FigSimCErrZ}
\\
\subfloat[\label{fig:SimCXY}]{\FigSimCTrajXY}
\hfill 
\subfloat[\label{fig:SimCXZ}]{\FigSimCTrajXZ}
\hfill
\subfloat[\label{fig:SimCYZ}]{\FigSimCTrajYZ}
\caption{3D point robots, second simulation: Controller~3 (Angular Speed Consensus) with 10 robots. \protect\subref{fig:SimARho},\protect\subref{fig:SimAPhi},\protect\subref{fig:SimADotPhi},\protect\subref{fig:SimAZ}: Encirclement error signals. \protect\subref{fig:SimAXY},\protect\subref{fig:SimAXZ},\protect\subref{fig:SimAYZ}:  Projection of the robot trajectories on the coordinate planes (dashed: for $t\in[0,10]$~s; solid: for $t\in[10,20]$~s). Robot positions at $t=0$~s, $t=10$~s, and $t=20$~s are shown as red squares, green circles, and blue diamonds, respectively.} 
\label{fig:SimC}
\end{figure*}

\begin{figure*}[t]
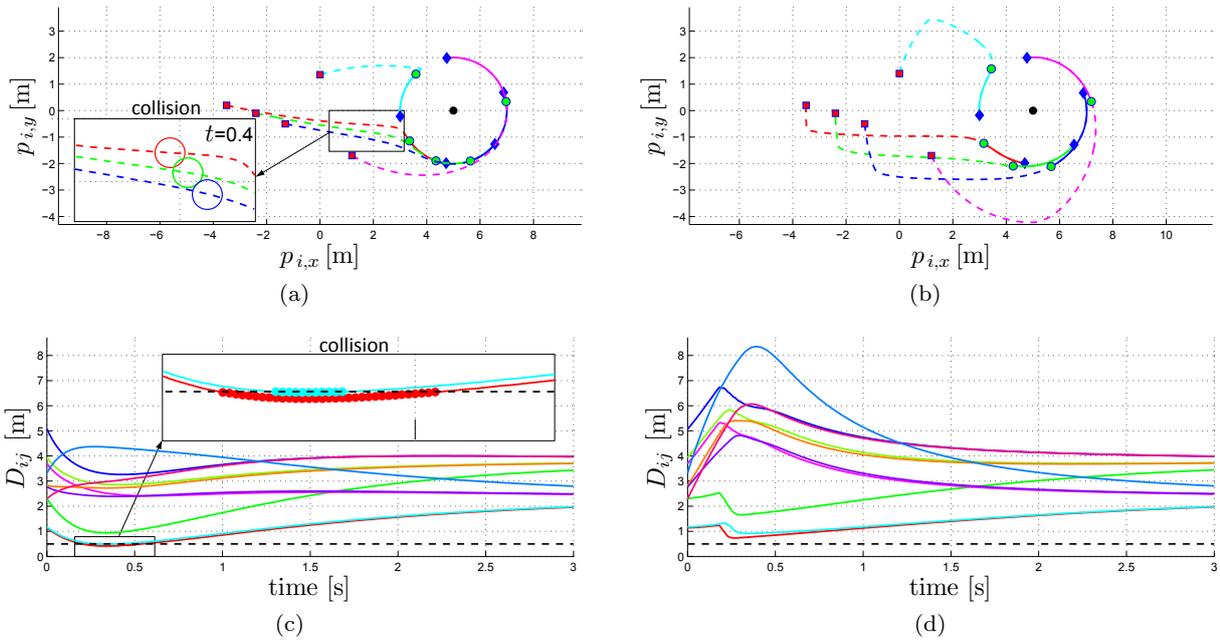

\centering
\hfill
\subfloat[\label{fig:SimNOCOxy}]{\FigSimNOCOTrajXY}
\hfill 
\subfloat[\label{fig:SimCOxy}]{\FigSimCOTrajXY} 
\hfill\,
\\
\hfill
\subfloat[\label{fig:SimNOCOdist}]{\FigSimNOCODistances}
\hfill\;
\subfloat[\label{fig:SimCOdist}]{\FigSimCODistances}
\hfill\,
\caption{3D point robots, fourth simulation: Encirclement control with 5 robots, with and without collision avoidance.
\protect\subref{fig:SimNOCOxy},\protect\subref{fig:SimNOCOdist}: Robot trajectories and inter-robot distances with  Controller~1; note the double collision. \protect\subref{fig:SimCOxy},\protect\subref{fig:SimCOdist}: Robot trajectories and inter-robot distances with Controller~1$^*$ (Desired Angular Speed with Collision Avoidance). 
} 
\label{fig:SimCO}
\end{figure*}

\noindent
The first set of simulations involves systems of point robots moving in 3D space. The objective is to test the proposed encirclement controllers for different motions of the target and of the encirclement plane. The global quantities $(\pv_\calT,\dot\pv_\calT)$ and $(\Rm_\calT,\dot\Rm_\calT)$ are always estimated via the algorithm~\eqref{eq:cons_track}, assuming that only one robot in the whole group is informed about the global quantities and that $\calN_i = \{i+1, i-1\}$. This is clearly the worst-case scenario, since the presence of additional communication links in the robot network would have the effect of accelerating the convergence of the estimates to the correct values, and hence of the multi-robot system to the encirclement steady state.
f
Figure~\ref{fig:SimA} shows the result of a simulation where Controller~1 (Desired Angular Speed) is used with 10 point robots. The desired encirclement values are set to $\rho^*=2$~m and $\omega^*=0.8$~rad/s. The target moves at constant velocity $\dot\pv_\calT = (0,0.2,0.2)$~m/s. The encirclement plane $X_\calT$-$Y_\calT$ is oriented orthogonally to $\dot\pv_\calT$; it translates because the target moves but it does not rotate. The control gains are $k_\rho =1$, $k_z =1.5$ and $k_\phi =2$. As expected, the four variables that encode the encirclement task according to~(\ref{eq:EncTask_R}--\ref{eq:EncTask_z}) converge exponentially to their desired value.

Figure~\ref{fig:SimB} considers the same system of robots under the action of Controller~2 (Desired Escape Window). The desired encirclement values are set to $\rho^*=2$~m and $s^*=0.78$~s. The target is fixed but the encirclement plane, which is initially horizontal, now rotates with angular velocity $\omegav_\calT = (0,0.15,0)$~rad/s. The control gains are the same of the first simulation. Again, the encirclement task is achieved with exponential speed; note in particular the convergence of the escape window $s$ to its desired value. At steady state, the robots move in a regular formation along a \emph{great circle} of the sphere of radius $\rho^*$ centered in the target;  this great circle rotates on the sphere over time due to the rotational motion of the encirclement plane.

The third simulation (Fig.~\ref{fig:SimC}) refers to the same robot system now subject to Controller~3 (Angular Speed Consensus). The value of the encirclement radius is again $\rho^*=2$~m, while vector $\vect{\xi}$ of the forcing terms is chosen randomly, resulting in $\bar\xi=0.8$~rad/s. The target moves at constant velocity $\dot\pv_\calT = (0.5,0,0)$~m/s; at the same time, the encirclement plane, which is initially horizontal, rotates with angular velocity $\omegav_\calT = (0,0.3,0)$~rad/s. The control gains $k_\rho$, $k_z$, $k_\phi$ are the same as before, whereas $k_\omega=3$. As before, the encirclement signal errors decay exponentially to zero; in particular, the encirclement angular speed converges to $\bar\xi$. Since the motion of the encirclement plane is now a full roto-translation, the robot trajectories tend to become composite helical-spherical curves.

The final simulation is aimed at validating Controller~1$^*$ (Desired Angular Speed with Collision Avoidance). To this end, we have considered a system of 5 circular robots of radius $r=0.25$~m. Both the target and the encirclement plane are now fixed. For simplicity, it is assumed that all the robots start already on the encirclement plane ($z_i=0$, $\forall i$), so that their motion is actually planar. As shown in Fig.~\ref{fig:SimCO}, when the basic Controller~1 is applied (in particular, when $\rho$ is controlled using~(\ref{eq:rhoDyn})), two pairwise collisions occur during the robots' approach to the steady-state circular trajectory: this is confirmed by the plot of the inter-distances $D_{ij}$, two of which go below the required threshold of $2\,r=0.5$~m. The application of Controller~1$^*$, in which~(\ref{eq:rhoDynColl}) is used in place of~(\ref{eq:rhoDyn}), is instead successful; in particular, the figure clearly shows how the controller prevents radial motion towards the target until a sufficient phase separation is achieved. The global quantity $\sigma$ is estimated as explained in Section~\ref{Sect:DecEstSigma}.

To obtain a smoother behavior, a sinusoidal transition from 0 to 1 was used for $\lambda$ in place of the linear transition entailed by~(\ref{eq:lambda_obs}).

\begin{figure*}[tb]
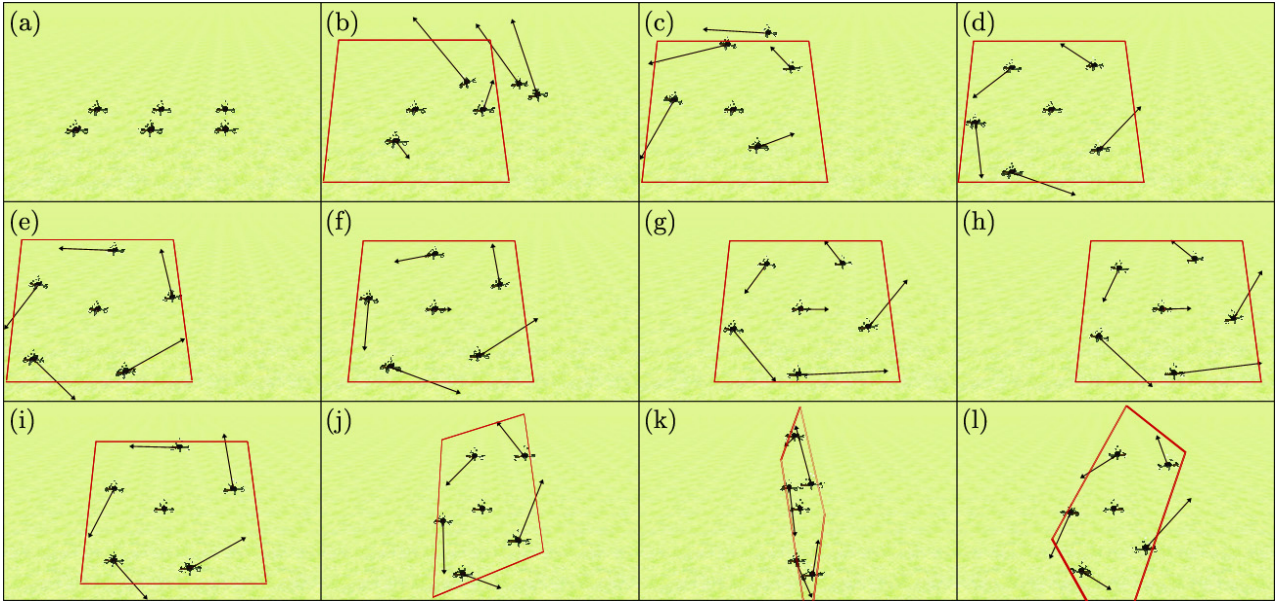

\centering
\FigSimDScreenShots
\caption{Quadrotor UAVs, first simulation: Some representative snapshots. a) The starting formation with the six quadrotors hovering above the ground.
b-d) Five quadrotors encircle the stationary quadrotor, which acts as target. e--h) The encirclement continues with the target now moving on a line
left to right. i--l) Final encirclement with the target stationary again but the encirclement plane rotating. The arrows represent the reference velocity vector $\pv_i$. The target plane is shown in red.}  
\label{fig:ScreenShots}
\end{figure*}

\begin{figure*}[t]
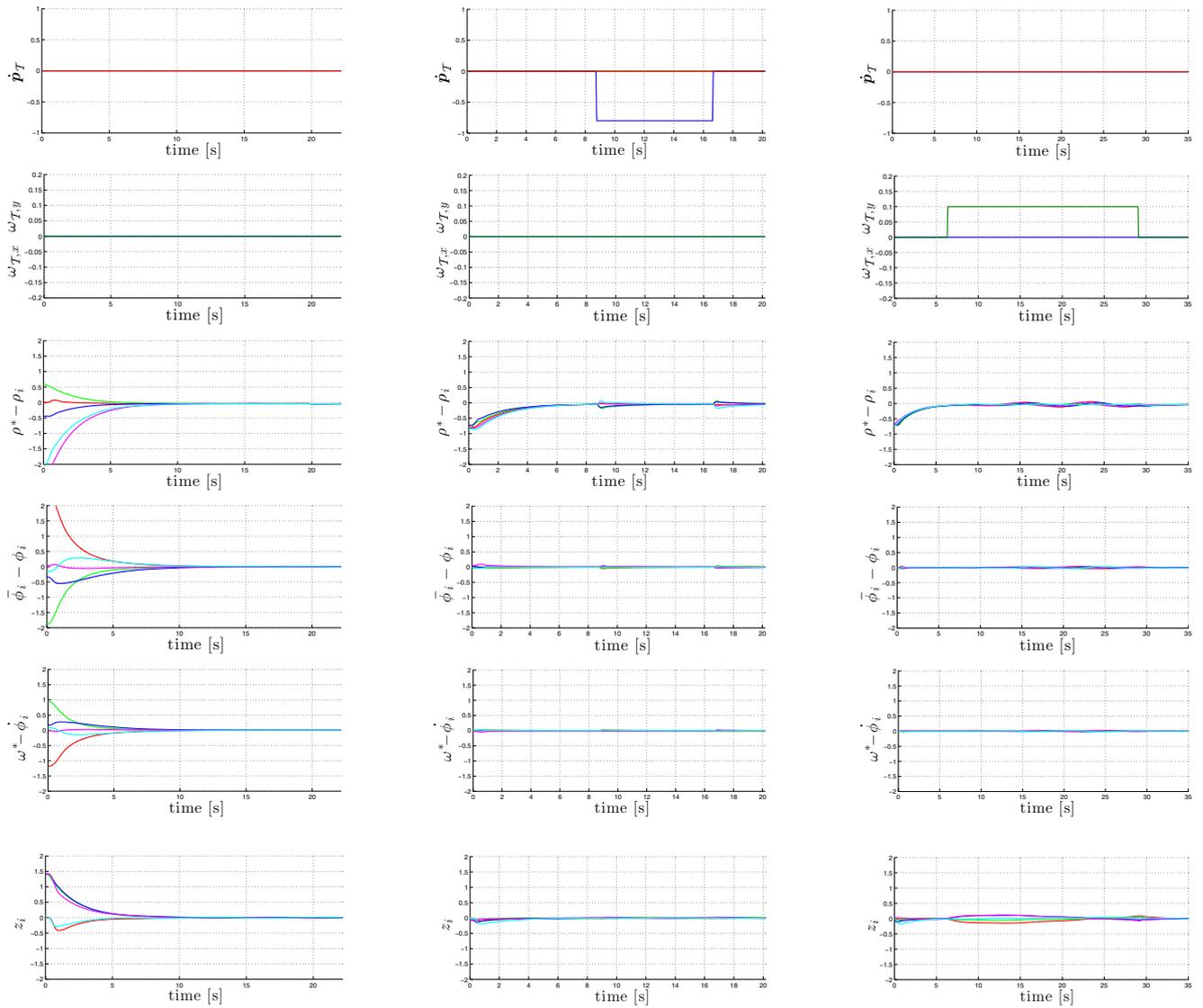

\centering
\FigSimDADotPT     \hfill \FigSimDEDotPT  \hfill   \FigSimDDDotPT  \\
\FigSimDAOmegaT    \hfill \FigSimDEOmegaT \hfill   \FigSimDDOmegaT     \\
\FigSimDAErrRho    \hfill \FigSimDEErrRho \hfill   \FigSimDDErrRho     \\
\FigSimDAErrPhi    \hfill \FigSimDEErrPhi \hfill   \FigSimDDErrPhi    \\
\FigSimDAErrDotPhi \hfill \FigSimDEErrDotPhi \hfill  \FigSimDDErrDotPhi\\
~\subfloat{\FigSimDAErrZ}
      \hfill
\subfloat{ \FigSimDEErrZ }
   \hfill
\subfloat{    \FigSimDDErrZ}
\caption{Quadrotor UAVs, second set of simulations under unmodeled perturbations. Each column (left, center, right) refers to a different simulation. For each simulation, the two top plots above show the velocity of the target and the angular velocity of the encirclement plane, whereas the four bottom plots show the evolution of the encirclement errors.}  
\label{fig:SimD-AE}
\end{figure*}

\subsection{Simulations with Quadrotor UAVs}
\label{sec:quadsim}

\noindent
To further validate our approach in a more realistic scenario, a second simulation study was performed on quadrotor UAVs. In particular, quadrotors are simulated as rigid bodies with a mass of approximately $0.75$\,kg subject to four generalized forces (one thrust and three torques) which are related to the rotational speeds of the four rotors. To this end, SwarmSimX~\citep{2012m-LaeFraBueRob} was used together with the TeleKyb~\citep{2013j-GraRieBueRobFra} framework.

Clearly, a quadrotor cannot be modeled as a simple integrator. However, its center of mass can track any smooth trajectory because its position is (part of) a differentially flat output. Therefore, we use the proposed encirclement schemes to produce a reference trajectory $\pv_i(t)$ for the center of mass of the $i$-th quadrotor, and rely on the built-in tracking controller for generating actual motion commands. In particular, each quadrotor has a built-in trajectory tracking controller with a standard two-stage structure (see, e.g.,~\cite{2013b-LeeFraSonBueRob} for details). The first stage (Cartesian controller) takes as reference the trajectory $\pv_i(t)$ with its time derivatives\footnote{Note that the first-order derivative $\dot \pv_i=\uv_i$ is directly given by the general expression~(\ref{eq:feed_lin}), whereas the second-order derivative $\ddot{\pv}_i(t)$ is numerically computed.} $\dot\pv_i(t)$ and $\ddot\pv_i(t)$, and generates the desired acceleration of the center of mass via a simple PD + feedforward controller:
\begin{align}
\av_{com,i} = \ddot{\pv}_i + k_p(\pv_i-\pv_{com,i}) + k_d(\dot\pv_i-\dot\pv_{com,i}), 
\label{eq:pd_ctrl} 
\end{align}
where $\pv_{com,i}$ is the position of the center of mass of the $i$-th quadrotor. In the second stage, $\av_{com,i}$ is first converted via the quadrotor model to the desired values of roll, pitch and thrust that would generate such acceleration given the current yaw; then, the desired values for the roll and pitch angles are used as reference signals for a PID attitude controller, which generates the torque to be applied to the quadrotor through the propeller rotational speeds.  
This simple cascaded approach for trajectory tracking relies on the fact that the attitude controller is much faster than the Cartesian controller. However, since the former relies on approximate linearization around zero roll and pitch angles, it is only accurate for near-hovering trajectories. In such conditions, this approach has been successfully employed in practice (see, e.g.,~\cite{2012f-FraSecRylBueRob}).

In the simulations, five quadrotors are in charge of the encirclement task while a sixth quadrotor (actually, its center of mass) acts as target. Figure~\ref{fig:ScreenShots} summarizes the results of a typical simulation, in which both the target and the encirclement plane are first stationary; then, the target moves at constant velocity; and finally the encirclement plane rotates. Controller~1 (Desired Angular Speed) is used for controlling the phase of the quadrotors, with $k_\rho =0.5$, $k_z =0.5$ and $k_\phi =0.5$. The desired encirclement values are set to $\rho^*=2$~m and $\omega^*=0.8$~rad/s. Finally, the control gains in~\eqref{eq:pd_ctrl} are set to $k_p = 9$ and
$k_d = 7.5$.
The quadrotors are able to track the reference trajectory very closely, and therefore the encirclement task is successfully executed.
 
To further show the robustness of the proposed encirclement controllers to unmodeled dynamics, we present a set of simulations in which the proportional term of the quadrotor Cartesian controller is suppressed by setting $k_p=0$ in~(\ref{eq:pd_ctrl}). In addition, the encirclement control law $\uv_i$ is computed at the actual position $\pv_{com,i}$ of the quadrotor rather than at the nominal position $\pv_i$. 
These modifications, aimed at emphasizing the non-ideal behavior of the quadrotor with respect to the integrator dynamics, lead to the following Cartesian controller:
\[
\av_{com,i}  = \dot\uv_i + k_d(\uv_i-\dot\pv_{com,i}), 
\]
with $\dot\uv$ computed numerically from $\uv_i$. The values of all the other controller gains are the same as in the previous case, as well as the values of $\rho^*$ and $\omega^*$.

The results shown in Fig.~\ref{fig:SimD-AE} refer to three specific cases: 
\begin{enumerate}
\item $\pv_\calT$ and $\omega_\calT$ are both identically zero (first column);
\item $\pv_\calT$ is a rectangular impulse along the $X_\calW$ direction and  $\omega_\calT$ is identically zero (second column);
\item $\pv_\calT$ is identically zero and $\omega_\calT$ is a rectangular impulse around the $Y_\calW$ axis (third column).
\end{enumerate}
In particular, note that the rotation of the encirclement plane violates the near-hovering assumption implicit in the design of the built-in trajectory controller.
Altogether, the plots of the encirclement errors confirm that the proposed scheme is rather robust in practice, as transient converge quickly and steady-state errors, when present, are very small.

\subsection{Experiments with Differential-Drive Robots}
\label{sec:exp}

\noindent
The proposed control framework for encirclement in 3D space can be directly applied to the 2D case by assuming that the encirclement plane coincides with the motion plane (this simply leads to zeroing the $z$ coordinate in all formulas). Accordingly, an experimental validation of the proposed approach has been carried out using a team of Khepera~III wheeled mobile robots 
 Each of these small-size differential-drive vehicles has been equipped with a Hukuyo 
 URG-04LX laser range finder, that has an angular field-of-view of $240^\circ$ and thus leaves a blind zone of $120^\circ$ behind the robot. 
Simultaneous calibration of odometric and sensor parameters was performed using the algorithm in~\cite{2013d-CenFraMarOri}. The built-in wi-fi card allows each robot to communicate with the others.

\begin{figure}[t]
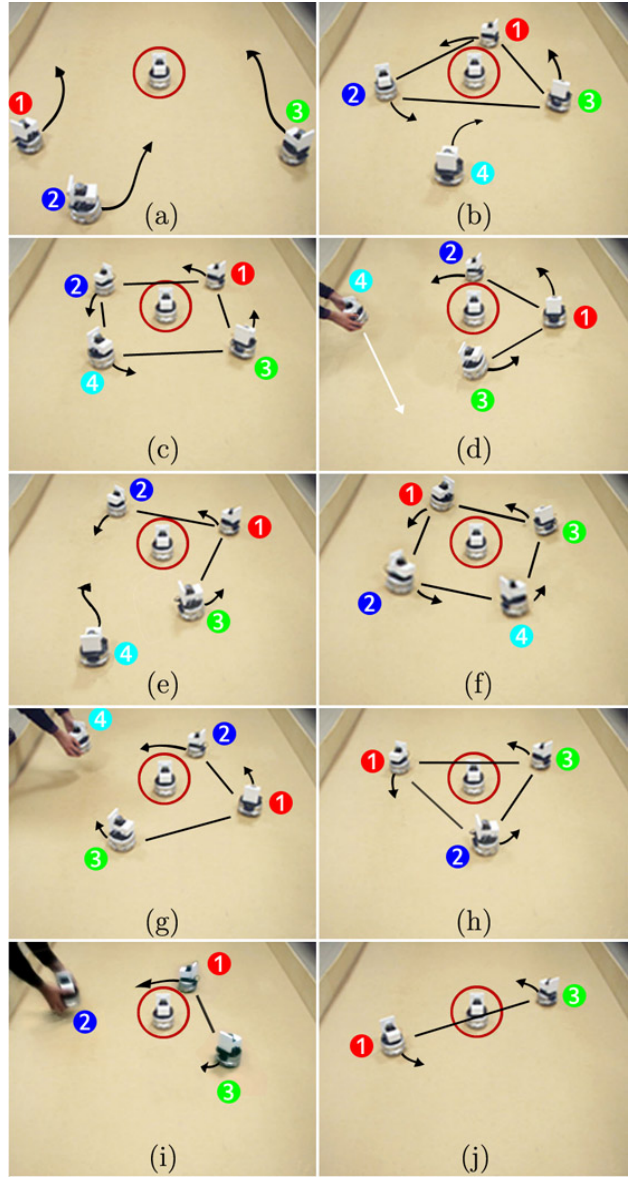

\figExpStatTarget
 \caption{Differential-drive robots, first experiment: Some representative snapshots. The target robot (shown enclosed in a red circle) is stationary. a) The initial configuration of the multi-robot system. b) When the three robots have achieved the encirclement task, a fourth robot (4) is released. c) The robots rearrange themselves in a rotating square formation. d-e) Robot 4 is kidnapped and released at a different location. f) The rotating square formation is recovered. g) Robot 4 is removed. h) A triangular encirclement formation is achieved again. i) Robot 2 is also removed. j) The two remaining robots assume a dipolar encirclement formation.} 
\label{fig:expSnap}
\end{figure}

Experiments involve a total of five robots, one of which acts as target (either stationary or moving) while the others must achieve encirclement.
Each robot inspects its own laser scan with a feature extraction algorithm that looks for the typical indentations caused by robots located inside the field of view, whose relative positions with respect to the sensor is returned. These instantaneous, anonymous measurements (the identity of the detected robots is unknown) are then broadcast to the other robots together with odometric data. Using this information, each robot performs mutual localization using the method of~\cite{2010f-FraOriSte,2013g-FraOriSte}, thus obtaining an estimate of the relative position of all the robots whose data it has received, now labeled with their identity. This localization step is essential for enabling each robot to localize other agents moving in its blind zone, a situation which occurs invariably during encirclement (e.g., at steady-state). Moreover, thanks to the reconstruction of the robot identities, the target can be readily identified.
Altogether, our relative localization module provides all the information needed for implementing the encirclement controllers without requiring an external tracking system.

Coming to the implementation of the controller, we exploit the fact that --- like quadrotors --- differential-drive robots are differentially flat systems, the flat output being the midpoint between the two wheels. This point can therefore track any smooth trajectory. As before, we use the proposed encirclement schemes to produce a reference trajectory $\pv_i(t)$ for the midpoint of the $i$-th robot, and then use the trajectory tracking controller of~\cite{2002-OriDelVen} to track it. The whole framework has been implemented in MIP, a in-house developed software platform specifically aimed at multi-robot systems. 

\begin{figure}[t]
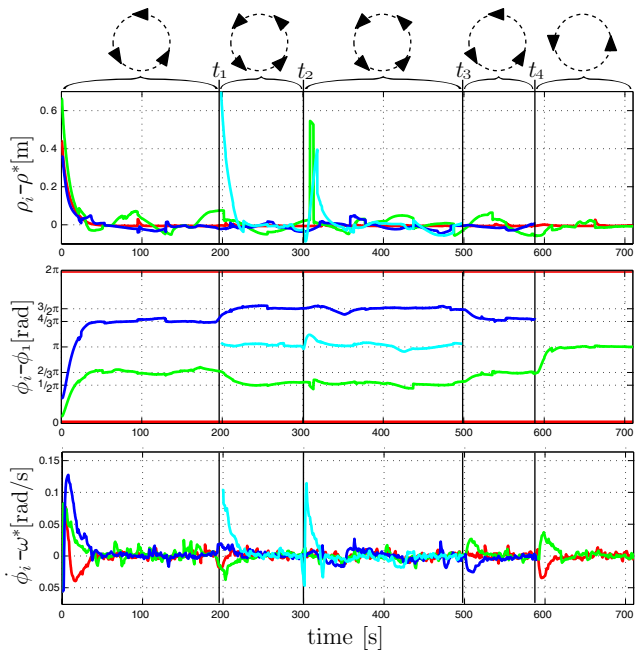

\figKhIIIExpPlots
\caption{Differential-drive robots, first experiment: Encirclement accuracy in the various stages of the experiment, with the desired rotating formation in each stage shown above the plots. Top: for each robot, difference between the current radius and the desired encirclement radius. Center: for each robot, difference between the current phase and the phase of robot 1. Bottom: for each robot, difference between the current angular speed and the desired encirclement speed.}
\label{fig:expPlots}
\end{figure}

In the first experiment, the target is stationary. Controller~1$^*$ (Desired Angular Speed with Collision Avoidance) is used for achieving collision-free encirclement. The desired values for the encirclement radius and angular speed are $\rho^*=0.5$~m and $\omega^*=0.06$~rad/s, respectively, while the control gains are $k_\rho=0.1$ and $k_\phi=0.06$.  At the beginning of the experiment, summarized in Fig.~\ref{fig:expSnap}, the multi-robot system consists of three robots that quickly achieve encirclement in a regular triangular formation. Another robot is then made available, and the group automatically arranges itself in a rotating square formation, which is momentarily lost but  promptly recovered when one of the robots is kidnapped and released at a different location. Two of the robots are then removed in sequence, causing the encirclement formation to become first a triangle and then a dipole.  

A more quantitative evaluation of the first experiment is given in Fig.~\ref{fig:expPlots}. In particular, the performance of the encirclement scheme is evaluated through the behavior of the radius, angular speed and phase of each robot during the various stages of the experiment. Practical convergence of the first two quantities to the desired values is confirmed, while the phase plots show that the appropriate splay state formation is achieved in each stage of the experiment. Note the quickly decaying transients at the start of the experiment and whenever there is a discontinuity in the localization estimates: i.e., at the birth of a new estimate associated to a robot being added to the group (time $t_1$), at a jump in the estimate of a robot kidnapped and released in a different location (time $t_2)$ or at the death of an estimate associated to a robot being removed from the group (times $t_3$ and $t_4$).

This experiment proves the robustness of the proposed encirclement controller, and in particular shows the seamless operation of the overall framework in the presence of a variable number of robots. 

\begin{figure}[t]
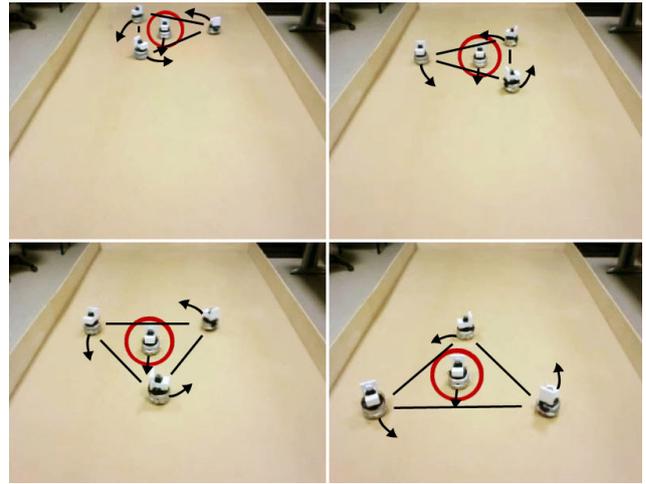

\figExpMovingTarget
\caption{Differential-drive robots, second experiment: Some representative snapshots. The target robot (shown enclosed in a red circle) moves along a rectilinear path. Nevertheless, encirclement is effectively achieved.}
\label{fig:ExpMovingTarget}
\end{figure}

In the second experiment, the target robot moves along a straight line with a constant velocity, with three robots in charge of the encirclement task. As before, this is achieved in a collision-free fashion by using Controller~1$^*$, with the same reference values and gains of the previous experiment. The snapshots shown in Fig.~\ref{fig:ExpMovingTarget} confirm that the robots are effectively able to encircle the moving target while arranging themselves in a rotating regular formation. As a result, each robot moves along a generalized trochoid. 

As for the previous simulations, video clips of these two experiments are contained in the multimedia material attached to the paper.

\fi
\ifx\noconc\undefined
\section{Conclusions}\label{sec:concl}

\noindent
In this paper, we have formulated and solved the problem of encircling a target moving in 3D space using a multi-robot system. In particular, three decentralized controllers have been proposed for different versions of the problem, and their effectiveness has been formally proven. An extension ensuring collision-free motion in the case of finite-size robots has also been proposed. Decentralized schemes for the estimation of the relevant global quantities have also been designed to guarantee that each robot can implement its controller using local information. The proposed strategy has been successfully validated through simulations on kinematic point robots and quadrotor UAVs, as well as experiments on differential-drive wheeled mobile robots. 
 
Future work will include:

\begin{itemize}

\item for the application to robots with complex dynamics, the analysis of a reference trajectory generation scheme based on continuous replanning;

\item the formulation and solution of a 3D encirclement problem on multiple planes, in which the robots should tend to arrange themselves along the vertices of a polyhedron;

\item experimental validation on a team of quadrotors.
 
\end{itemize}

\begin{small}
\bibliographystyle{plainnat}
\bibliography{./alias,./main,./bibCustom}
\end{small}

\end{document}